\newcommand{\Tr}{\mbox{Tr}}
\newcommand{\rtht}{\rho_\theta}
\newcommand{\pstht}{\psi_\theta}
\newcommand{\partht}{\partial_\theta}
\newcommand{\Ltht}{\mathcal{L}_\theta}
\newcommand{\half}{\frac{1}{2}}
\newcommand{\comm}[1]{}
\renewcommand\bra[1]{{\langle{#1}|}}
\renewcommand\ket[1]{{|{#1}\rangle}}
\newtheorem{theorem}{Theorem}
\newtheorem{lemma}{Lemma}
\newtheorem{proposition}{Proposition}
\definecolor{burntorange}{rgb}{0.8, 0.33, 0.0}\usepackage[colorlinks,citecolor=blue,linkcolor=red,urlcolor=blue]{hyperref}
\definecolor{DYblue}{RGB}{1,100,0}
\begin{document}

\title{Optimal estimation of pure states with displaced-null measurements}

\author{Federico Girotti}
\affiliation{School of Mathematical Sciences, University of Nottingham, United Kingdom}
\affiliation{Centre for the Mathematics and Theoretical Physics of Quantum Non-Equilibrium Systems,
University of Nottingham, Nottingham, NG7 2RD, UK}
\affiliation{Department of Mathematics, Polytechnic University of Milan, Milan, Piazza L. da Vinci 32, 20133, Italy}

\author{Alfred Godley}
\affiliation{School of Mathematical Sciences, University of Nottingham, United Kingdom}
\affiliation{Centre for the Mathematics and Theoretical Physics of Quantum Non-Equilibrium Systems,
University of Nottingham, Nottingham, NG7 2RD, UK}

\author{M\u{a}d\u{a}lin Gu\c{t}\u{a}}
\affiliation{School of Mathematical Sciences, University of Nottingham, United Kingdom}
\affiliation{Centre for the Mathematics and Theoretical Physics of Quantum Non-Equilibrium Systems,
University of Nottingham, Nottingham, NG7 2RD, UK}

-----------------------------------------------------------------------------------------------------------

\begin{abstract}
We revisit the problem of estimating an unknown parameter of a pure quantum state, and investigate `null-measurement' strategies in which the experimenter aims to measure in a basis that contains a vector close to the true system state. Such strategies are known to approach the quantum Fisher information for models where the quantum Cram\'{e}r-Rao bound is achievable but a detailed adaptive strategy for achieving the bound in the multi-copy setting has been lacking.

We first show that the following naive null-measurement implementation fails to attain even the standard estimation scaling: estimate the parameter on a small sub-sample, and apply the null-measurement corresponding to the estimated value on the rest of the systems. This is due to non-identifiability issues specific to null-measurements, which arise when the true and reference parameters are close to each other. To avoid this, we propose the alternative \textit{displaced-null} measurement strategy in which the reference parameter is altered by a small amount which is sufficient to ensure parameter identifiability.

We use this strategy to devise asymptotically optimal measurements for models where the quantum Cram\'{e}r-Rao bound is achievable. More generally, we extend the method to arbitrary multi-parameter models and prove the asymptotic achievability of the the Holevo bound. An important tool in our analysis is the theory of quantum local asymptotic normality  which provides a clear intuition about the design of the proposed estimators, and shows that they have asymptotically normal distributions.
\end{abstract}

\maketitle

\section{Introduction and main results}
\label{sec.intro}












The estimation of unknown parameters from measurement data is the central task of quantum statistical inference \cite{Hayashi2005, Paris2008,TothReview,RafalReview,Tomo2,Albarelli2020,Sidhu_2020}. In recent decades, the area has witnessed an explosive growth covering a wealth 
of topics such as quantum state tomography \cite{Gross10,Cramer2010,Tomo1,Haah16,Lanyon2017,Guta_2017,Yang19,lahiry2021,Yuen23}, multi-parameter estimation \cite{Szczykulska16,Nichols2018,RafalReview,Albarelli19,Albarelli2020,Liu20}, sufficiency \cite{Mosonyi2003, Jencova2006}, local asymptotic normality \cite{LAN1,LAN2,LAN3,LAN4,LAN6,Yamagata13,Fujiwara20,Fujiwara22}, shadow tomography \cite{Aaronson2018, Huang2020}, Bayesian methods  \cite{Personick1971,Tsang20,Rubio2020,Rubio2021}, quantum metrology \cite{Metrology1,Fujiwara2008,Giovannetti2011, Metrology2,Girolami14,Smirne16, Seveso2017,Haase18,Metrology4,Rossi20,Sisi21}, error correction methods \cite{Gorecki2019, Zhou2018_2}, hamiltonian learning \cite{Yuan2015, Yuan2016},  thermometry \cite{Correa2015, Mehboudi2019},  gravitational waves detection \cite{GW4,GW5}, magnetometry \cite{Jones2009, Jan2021, Brask2015,ARPG17}, quantum sensing \cite{Degen2017, Marciniak2022, Zwick2023}, imaging \cite{Tsang16,Tsang21,Lupo20,Fiderer21,Oh21}, semi-parametric estimation \cite{Tsang19,Tsang2020} estimation of open systems \cite{GW01,Guta2011,Molmer14,Guta_2015,GutaCB15,Guta_2017,Ilias22,Fallani22}, waveform \cite{TWC11,Berry2015} and noise \cite{Ng16,Norris16,Shi23,Sung19,Tsang23} estimation.

A common feature of many quantum estimation problems is that `optimal' measurements depend on the unknown parameter, so they can only be implemented approximately, and the  optimality is at best achieved in the limit of large `sample size'. This raises the question of how to interpret theoretical results such as the quantum Cram\'{e}r-Rao bound (QCRB) \cite{Holevo2011, YuenLax76,Belavkin76, Helstrom1976,QCR1,Nagaoka} and how to design adaptive measurement strategies which attain the optimal statistical errors in the asymptotic limit. When multiple copies of the state are available, the standard strategy is to use a sub-sample to compute a rough estimator and then apply the optimal measurement corresponding to the estimated value. Indeed this works well for the case of the symmetric logarithmic derivative \cite{GillMassar}, an operator which saturates the quantum Cram\'{e}r-Rao bound for one-dimensional parameters. However, the QCRB fails to predict the correct \emph{attainable} error for quantum metrology models which consist of correlated states and exhibit Heisenberg (quadratic) scaling for the mean square error \cite{RDDWiseman}. This is due to the fact that in order to saturate the QCRB one needs to know the parameter to a precision comparable to what one ultimately hopes to achieve. 

In this paper we uncover a somewhat complementary phenomenon, where the usual adaptive strategy fails \emph{precisely} because it is applied to a `good' guess of the true parameter value. This happens in the standard multi-copy setting when estimating a pure state by means of `null measurements', where the experimenter aims to measure in a basis that contains the unknown state. While this can only be implemented approximately, the technique is known to exhibit certain Fisher-optimality properties \cite{NullQFI1,NullQFI2,NullQFI3} and has the intuitive appeal of `locking' onto the correct value as outcomes corresponding to other measurement vectors become more and more unlikely. 

In Theorem \ref{prop.null}, which is our first main result, we show that the standard adaptive strategy in which the parameter is first estimated on a sub-sample and then the null-measurement for this rough value is applied to the rest of the ensemble, fails to saturate the QCRB, and indeed does not attain the standard rate of precision.
Our result shows the importance of accompanying mathematical properties with clear operational procedures that allow us to draw  statistical conclusions; this provides another example of the limitations of the `local' estimation approach based on the quantum Cram\'{e}r-Rao bound \cite{Tsang_blog}. Indeed the reason behind the failure of the standard adaptive strategy is the fact that null-measurements suffer from non-identifiability issues when the true parameter and the rough preliminary estimator are too close to each other, i.e. when the latter is a reasonable estimator of the former.

Fortunately, it turns out that the issue can be resolved by deliberately shifting the measurement reference parameter  away from the estimated value by a vanishingly small but sufficiently large amount to resolve the non-identifiabilty issue. Using this insight we devise a novel adaptive measurement strategy which achieves the Holevo bound for arbitrary multi-parameter models, asymptotically with the sample size. This second main result is described in Theorem \ref{thm:dnmgeneral}. In particular our method can be used to achieve the quantum 
Cram\'{e}r-Rao bound for models where this is achievable, which was the original theme of \cite{NullQFI1,NullQFI2,NullQFI3}. The validity of the displaced-null strategy goes beyond the setting of the estimation with independent copies and has already been employed for optimal estimation of dynamical parameters of open quantum systems by counting measurements \cite{DayouCounting}. The extension of our present results to the setting of quantum Markov chains will be presented in a forthcoming publication \cite{GGG}. 
In the rest of this section we give a brief review of the main results of the paper.

\subsubsection*{The quantum Cram\'{e}r-Rao bound and the symmetric logarithmic derivative}

The quantum estimation problem is formulated as follows: given a quantum system prepared in a state $\rho_\theta$ which depends on an unknown (finite dimensional)
parameter $\theta\in \Theta$, one would like to estimate $\theta$ by performing a measurement $M$
and constructing an estimator $\hat{\theta} = \hat{\theta}(X)$ based on the (stochastic) outcome $X$. The Cram\'{e}r-Rao bound \cite{Lehmann1998, Vaart1998} shows that for a given measurement $M$, the covariance of any unbiased estimator 
is lower bounded as 
$
{\rm Cov}(\hat{\theta})\geq 
I^{-1}_M(\theta)
$
where $I_M(\theta)$ is the classical Fisher information (CFI) of the measurement 
outcome. 


Since the right side depends on the measurement, this prompts a fundamental and distinctive question in quantum statistics: what are the ultimate bounds on estimation accuracy and what measurement designs achieve these limits?
The cornerstone result in this area is that, irrespective of the measurement $M$, the CFI $I_M(\theta)$ is upper bounded by the quantum Fisher information $F(\theta)$, the latter being an intrinsic property of the quantum statistical model $\{\rho_\theta\}_{\theta\in \Theta}$. By combining the two bounds we obtain the celebrated quantum Cram\'{e}r-Rao bound (QCRB) \cite{Holevo2011, YuenLax76,Belavkin76, Helstrom1976,QCR1,Nagaoka}
$
{\rm Cov}(\hat{\theta})\geq F^{-1}(\theta).
$
For one dimensional parameters the QFI can be (formally) achieved by measuring an observable $\mathcal{L}_\theta$ called the symmetric logarithmic derivative (SLD), defined as the solution of the Lyapunov equation $\frac{d\rho_\theta}{d\theta}= \frac{1}{2} (\rho_\theta\mathcal{L}_\theta+ \mathcal{L}_\theta \rho_\theta)$. However, since the SLD depends on the unknown parameter $\theta$, this measurement cannot be performed without its prior knowledge, and the formal achievability is unclear without  further operational specifications.

Fortunately, this apparent circularity issue can be solved in the context of asymptotic estimation \cite{Hayashi2005}. In most practical applications one does not measure a single system but deals with (large) ensembles of identically prepared systems, or multi-partite correlated states as in quantum enhanced metrology \cite{Metrology4, Zhou2018} and continuous time estimation of Markov dynamics \cite{DayouCounting, Guta_2015, Guta_2017, Molmer}. Here one considers issues such as the scaling of errors with sample size, collective versus  separable measurements, and whether one needs fixed or adaptive measurements.  In particular, in the case of \emph{one-dimensional} models, the QCRB can be achieved \emph{asymptotically} with respect to the size $n$ of an ensemble of independent identically prepared systems, by using a two steps \emph{adaptive} measurement strategy \cite{GillMassar}. In the first step, a preliminary `rough' estimator $\tilde{\theta}_n$ is computed by measuring a sub-ensemble of $\tilde{n} =o(n)$ systems, after which  the SLD for parameter value $\tilde{\theta}_n$ (our best guess at the optimal observable $\mathcal{L}_\theta$) is measured on each of the remaining systems.
In the limit of large sample size $n$, the preliminary estimator $\tilde{\theta}_n $ approaches $\theta$ and the two step procedure achieves the QCRB in the sense that the mean square error (MSE) of the final estimator scales as $(nF(\theta))^{-1}$.

By implicitly invoking the above adaptive measurement argument, the quantum estimation literature has largely focused on computing or estimating the QFI of specific models, or designing input states which maximise the QFI in quantum metrology settings. However, as  shown in \cite{RDDWiseman}, the adaptive argument breaks down for models exhibiting quadratic (or Heisenberg) scaling of the QFI where the \emph{achievable} MSE is larger by a constant factor compared to the QCRB prediction, even asymptotically. In this work we show that similar care needs to be taken even when considering standard estimation problems involving ensembles of independent quantum systems and standard error scaling.
\subsubsection*{Null measurements and their standard adaptive implementation}

\begin{figure*}
\includegraphics[width=\linewidth]{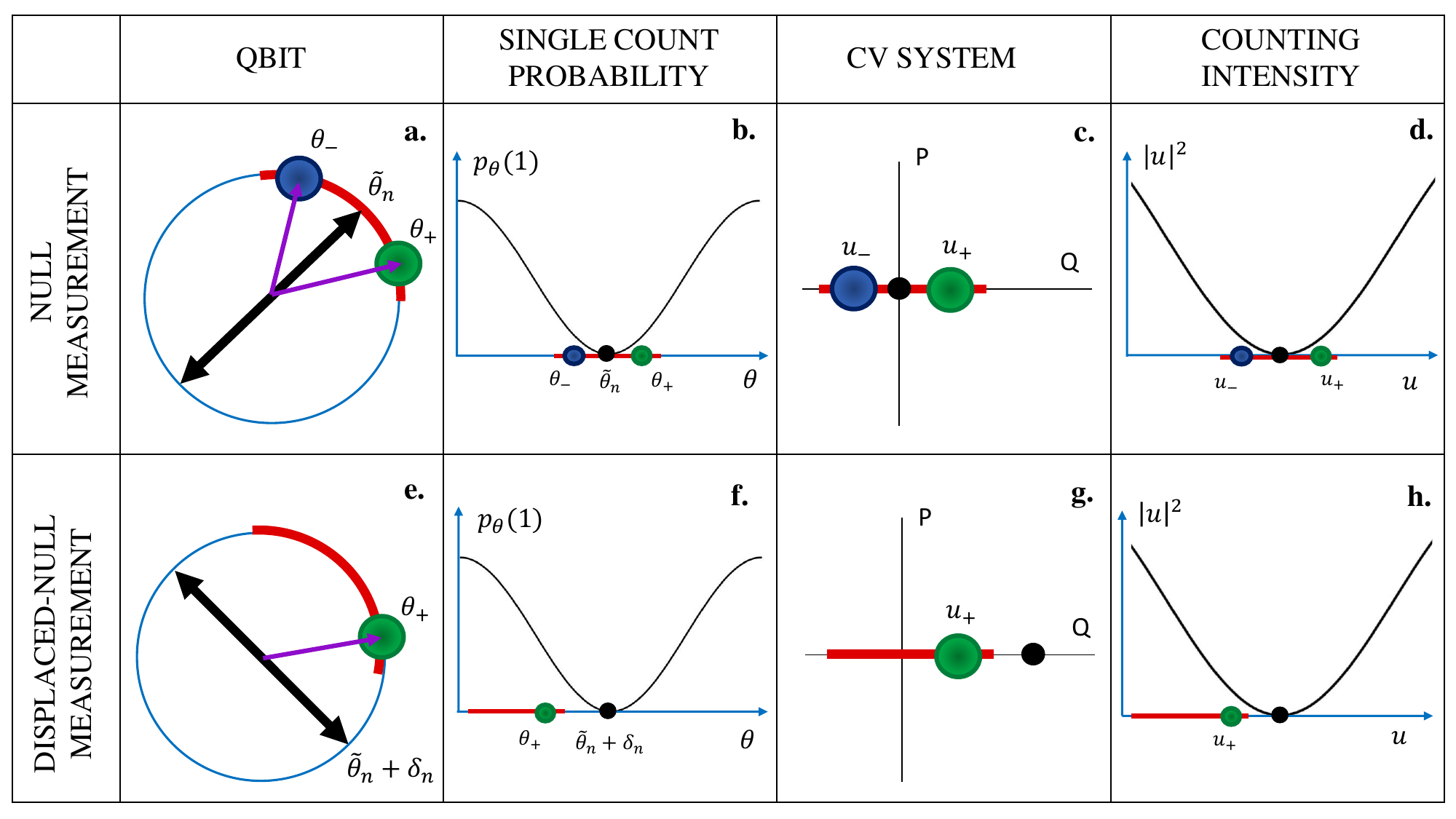}
\caption{The figure illustrates the non-identifiability problem occurring with null measurement (first row) and how it is fixed by displaced-null measurement (second row). In the first column the red arc on the xz Bloch sphere circle (in blue) represents the set of parameters after localisation (confidence interval), the green disk represents the true parameter value $\theta=\theta_+$ and the blue disk (panel a) is the parameter $\theta_-$ which is indistinguishable from the true one, in the null basis. The black arrow represents the chosen measurement basis. The second column displays a plot of the single count probability as a function of the parameter: in the null measurement case such a function is not injective on the set of parameters determined after the localisation (panel b). The third column shows the phase space of a Gaussian model consisting of coherent states with unknown displacement along the Q axis: the red interval is the parameter space, the black dot corresponds to the number operator measured, the green disk to the true coherent state and the blue disk (panel c) is the coherent state which is indistinguishable from the true one in the null measurement case. The last column plots the intensity of the number operator as a function of the coherent state amplitude.} \label{Fig:nonidentifiability}
\end{figure*}

Specifically, we revisit the problem of estimating a parameter of a \emph{pure state model} $\{|\psi_\theta\rangle\}_{\theta\in \Theta}$ and analyse a measurement strategy  \cite{NullQFI1,NullQFI2,NullQFI3}, which we broadly refer to as \emph{null measurement}.  The premise of the null measurement is the observation that if one measures $|\psi_\theta\rangle$ in an orthonormal basis $\mathcal{B}(\theta) :=\{|v_1\rangle , \dots , |v_d\rangle\}$ such that $|v_1\rangle = |\psi_\theta\rangle$  then the only possible outcome is $X=1$ and all other outcomes have probability zero. Since $\theta$ is unknown, in practice one would measure in a basis $\mathcal{B}(\tilde{\theta})$ corresponding to an approximate value $\tilde{\theta}$ of the true parameter $\theta$, and exploit the occurrence of low probability outcomes $X\neq 1$ in order to estimate the deviation of $\theta$ from $\tilde{\theta}$.
This intuition is supported by the following property which is a specialisation to one-dimensional parameters of a more general result derived in \cite{NullQFI1,NullQFI2,NullQFI3}: as $\tilde{\theta}$ approaches $\theta$, the classical Fisher information $I_{\tilde{\theta}}(\theta)$ associated with $\mathcal{B}(\tilde{\theta})$ converges to the QFI $F(\theta)$.
This implies that null measurements can achieve MSE rates scaling  as $n^{-1}$ with constants that are \emph{arbitrarily close} to $F^{-1}(\theta)$, by simply measuring all $n$ systems of an ensemble in a basis $\mathcal{B}(\tilde{\theta})$ with a \emph{fixed} $\tilde{\theta}$ that is close to $\theta$:
$$
n\mathbb{E}_\theta[(\hat{\theta}_n-\theta)^2] \to 
I^{-1}_{\tilde{\theta}}(\theta)\approx F^{-1}(\theta).
$$
Do null measurements actually achieve the QCRB (asymptotically) or just 'come close' to it? In absence of a detailed multi-copy  operational interpretation in \cite{NullQFI1,NullQFI2,NullQFI3}, the most natural strategy is to apply the same two step adaptive procedure which worked well in the case of the SLD measurement. 
A preliminary estimator $\tilde{\theta}_n$ is first computed by measuring $\tilde{n}$ systems and the rest of the ensemble is subsequently measured in the basis $\mathcal{B}(\tilde{\theta}_n)$. Since $I_{\tilde{\theta}_n}(\theta)$ converges to $ F(\theta)$ as $\tilde{\theta}_n$ approaches $\theta$, it would appear that the QCRB is achieved asymptotically. One of our main results is to show that this adaptive procedure actually \emph{fails} to achieve the QCRB even in the simple qubit model 
\begin{equation}\label{eq:simple.qubit.model}
|\psi_\theta\rangle= \cos\theta|0\rangle+ \sin \theta|1\rangle,
\end{equation}
thus providing another example where caution is needed when using arguments based on Fisher information, see \cite{Tsang_blog} for other examples.

More precisely, we show that if the preliminary estimator $\tilde{\theta}_n$ is reasonably good (cf. section \ref{sec.null} for precise formulation), any final estimator $\hat{\theta}_n$ computed from the outcomes of the null measurement $\mathcal{B}(\tilde{\theta}_n)$ is not only suboptimal but does not even achieve the standard $n^{-1}$ estimation MSE rate. The reason for the radically different behaviors of the SLD and null meaurement settings is that the latter suffers from a \emph{non-identifiability} problem when the parameter $\tilde{\theta}$ (which determines the null basis) is close to $\theta$. Indeed, 
since at $\tilde{\theta}=\theta$ the null measurement has a deterministic outcome, 
for $\tilde{\theta}\approx\theta$ the outcome probabilities are quadratic in 
$\epsilon= \theta-\tilde{\theta}$ and therefore, the parameters 
$\theta_\pm = \tilde{\theta}\pm \epsilon$ cannot be distinguished (at least in second order). If $\tilde{\theta}_n$ is a reasonably good estimator, then $\epsilon_n= |\theta-\tilde{\theta}_n|$ is of the order $\tilde{n}^{-1/2}$, so the error in estimating $\theta$ is at least of the order of the distance $|\theta_+ -\theta_-|$ between the two undistinguishable candidate parameters $\theta_{\pm}= \tilde{\theta}_n \pm \epsilon_n$, which scales as  $\tilde{n}^{-1/2}$ instead of $n^{-1/2}$. Since $\tilde{n}=o(n)$ the mean square error decreases slower that the standard rate $n^{-1}$. This argument is illustrated in Figure \ref{Fig:nonidentifiability}a. for the simple case of the qubit rotation model \eqref{eq:simple.qubit.model} which is discussed in detail in section \ref{sec.null}.
\subsubsection*{Asymptotic optimality of displaced-null measurements} 
Fortunately, the above explanation offers an intuitive solution to the non-identifiability problem. Assuming that the preliminary estimator $\tilde{\theta}_n$ satisfies standard concentration properties (e.g. asymptotic normality), one finds that $\theta$ belongs (with high probability) to a confidence interval $I_n$ centered at $\tilde{\theta}_n$, whose length is slightly larger than the estimation uncertainty $\tilde{n}^{-1/2}$. Therefore by displacing $\tilde{\theta}_n$ by a (vanishingly small) amount $\delta_n>0$ that is larger than this uncertainty, we can make sure that $I_n$ lies at the left side of $\theta^{\prime}_n := \tilde{\theta}_n+\delta_n$ and therefore measuring in the basis $\mathcal{B}(\theta^{\prime}_n)$ circumvents the non-identifiability issue. This is illustrated in panels e. and f. of Figure \ref{Fig:nonidentifiability}.

The main aim of the paper is to investigate this method which we call a \emph{displaced-null} measurement strategy and derive asymptotic optimality results for the resulting estimators. In section \ref{sec.one.parameter} we show that the displaced-null measurement achieves the QCRB in the one-parameter qubit model for which the standard adaptive procedure failed; the corresponding second stage estimator is a simple average of measurement outcomes and satisfies asymptotic normality, thus allowing practitioners to define asymptotic confidence intervals.

In section 
\ref{sec:qdits} we extend the null-measurement strategy to \emph{multi-parameter} models of pure qudit states. In this case, the QCRB is typically not attainable even asymptotically due to the incompatibility of optimal measurements corresponding to different parameter components. However, we show that the Holevo bound \cite{Holevo2011} \emph{can} be achieved asymptotically. We first consider the task of estimating a completely unknown pure state with respect to the the Bures (fidelity) distance. In this case we show that the Holevo bound can be achieved 
by using two separate displaced-null measurements, for the real and imaginary parts of the state coefficients with respect to a basis containing $|\psi_{\theta^\prime_n}\rangle$ as a vector. 
The second task is to estimate a general $m$-dimensional model with respect to an arbitrary locally quadratic distance on the parameter space. Here we show that the Holevo bound is achievable by applying displaced-null measurements on copies of the systems coupled with an ancilla in a fixed state. The proof relies on the intuition gained from quantum local asymptotic normality theory and its use in establishing the achievability of the Holevo bound \cite{LAN4,RafalReview} by mapping the ensemble onto a continuous variables system. However, unlike the latter, the displaced-null technique only involves separate projective measurements on system-ancilla pairs.

Finally, in section \ref{sec:achievingQCRB-displacednull} we show that for multiparameter models where the QCRB is achievable, this can be done using displaced-null measurements. This puts related results of \cite{NullQFI1,NullQFI2,NullQFI3} on a firm operational basis.
\subsubsection*{Local asymptotic normality perspective} 
The theory of quantum local asymptotic normality (QLAN) \cite{LAN1, LAN2, LAN3, LAN4} offers an alternative perspective on the displaced-null measurements strategy outlined above.
In broad terms, QLAN is a statistical tool that allows us to approximate the i.i.d. model describing the joint state of an ensemble of systems, by a single continuous variables Gaussian state whose mean encodes information about the unknown parameter (cf. sections \ref{sec:LAN} and \ref{sec:LAN-qudits} for more details). By applying this approximation, the null measurement problem discussed earlier can be cast into a Gaussian version formulated as follows. Suppose we are given a one-mode continuous variables system prepared in a coherent state $|u\rangle$ with unknown displacement $u\in \mathbb{R}$ along the $Q$ axis, and assume that $|u|\leq a_n$ for some bound $a_n$ which diverges with $n$. At $u=0$, the system state is the vacuum, and the measurement of the number operator $N$ is a null measurement (see Figure \ref{Fig:nonidentifiability}c.). However, for a given $u\neq 0$ the number operator has  Poisson distribution with intensity $|u|^2$, and therefore cannot distinguish between parameters $u_{\pm}:=\pm u$, cf. Figure \ref{Fig:nonidentifiability}d. This means that any estimator will have large MSEs of order $a_n^2$ for large values of $u$. In contrast, measuring the quadrature $Q$ produces (optimal) estimators with fixed MSE given by the vacuum fluctuations. 
However, the non-identifiability problem of the counting measurement can be lifted by displacing the coherent state along the $Q$ axis by an amount $\Delta_n>a_n$  and then measuring $N$. Equivalently, one can measure the corresponding displaced number operator on the original coherent state as illustrated in panels g. and h. of Figure \ref{Fig:nonidentifiability}. In this case the intensity $(u-\Delta_n)^2$ is in one-to-one correspondence with $u$ so the parameter is identifiable. Moreover, for large $n$, the counting measurement can be linearised and becomes equivalent to measuring the quadrature $Q$, a well known fact from homodyne detection \cite{Leonhardt}. 

QLAN shows that the Gaussian problem discussed above is the asymptotic version of the one-parameter qubit rotation model \eqref{eq:simple.qubit.model}
which we used earlier to illustrate the concept of approximate and displaced null measurements. The coherent state $|u\rangle$ corresponds to all qubits in the state $|\psi_{u/\sqrt{n}}\rangle$ (assuming for simplicity that $\tilde{\theta}_n =0$ and writing $\theta=u/\sqrt{n}$). The number operator corresponds to measuring in the standard basis, which is an exact null measurement at $u=0$. On the other hand, the displaced number operator corresponds to measuring in the rotated basis with angle $\delta_n = n^{-1/2}\Delta_n$.

The same Gaussian correspondence is used in section \ref{sec:qdits} for more general problems involving multiparameter estimation for pure qudit state models and establishing the achievability of the Holevo bound, cf. Theorem \ref{thm:dnmgeneral}. The general strategy is to translate the i.i.d. problem into a Gaussian one, solve the latter by using displaced number operators in a specific mode decomposition and then translate this into qudit measurement with respect to specific rotated bases.

This paper is organised as follows. Section \ref{sec:QCRB} reviews the QCRB and the conditions for its achievability. 
In section \ref{sec.null} we show that null measurements based at reasonable preliminary estimators fail to achieve the standard error scaling. In section \ref{sec.one.parameter} we introduce the idea of displaced-null measurement and prove its optimality in the paradigmatic case of a one-parameter qubit model. 
In section \ref{sec:qdits} we treat the general case of $d$ dimensional systems and show how the Holevo bound is achieved on general models, and deal with the case where the multi-parameter QCRB is achievable.



\section{Achievability of the quantum Cram\'{e}r-Rao bound for pure states}
\label{sec:QCRB}

In this section we review the quantum Cram\'{e}r-Rao bound (QCRB) and the conditions for its achievability in the case of models with \emph{one-dimensional} parameters, which will be relevant for the first part of the paper. 

The estimation of multidimensional models and the corresponding Holevo bound is discussed in section \ref{sec:qdits}.


Consider a quantum statistical model given by a family of d-dimensional density matrices $\rho_\theta $ which depend smoothly on an unknown  parameter 
$\theta\in \mathbb{R}$.
Let $\mathcal{M}$ be a measurement on $\mathbb{C}^d$ with positive operator valued measure (POVM) elements 
$\{M_0,\dots, M_p\}$. By measuring $\rho_\theta$ we obtain an outcome $X\in \{0,\dots , p\}$ with probabilities
$$
p_\theta(X=i) = p_\theta(i) = {\rm Tr}(M_i \rho_\theta),\qquad i = 0,\dots, p.
$$
The classical Cram\'{e}r-Rao bound states that the variance of any \emph{unbiased} estimator $\hat{\theta} = \hat{\theta} (X)$ of $\theta$ is lower bounded as
\begin{equation}\label{eq:CCR}
{\rm Var} (\hat{\theta}) :=\mathbb{E}_\theta[(\hat{\theta}- \theta)^2] \geq I_{\mathcal{M}}(\theta)^{-1}    
\end{equation}
where $I_{\mathcal{M}}(\theta)$ is the classical Fisher information (CFI)
\begin{equation}\label{eq.CFI}
I_{\mathcal{M}}(\theta) = \mathbb{E}_\theta\left [\left(\frac{d\log p_\theta }{d\theta}\right)^2\right ] =\sum_{i:p_\theta(i)>0}
p_\theta^{-1}(i) \left
(\frac{dp_\theta(i)}{d\theta}\right)^2.
\end{equation}
The CFI associated to any measurement is upper bounded by the quantum Fisher information (QFI) \cite{QCR1, QCR2}
\begin{equation}\label{eq:FIQFI}
I_{\mathcal{M}}(\theta)\leq F(\theta) 
\end{equation}
where $F(\theta) = {\rm Tr}(\rho_\theta \mathcal{L}^2_\theta)$ and $\mathcal{L}_\theta$ is the symmetric logarithmic derivatives (SLD) defined by the Lyapunov equation  
$$
\frac{d{\rho}_\theta}{d\theta} = \frac{1}{2}(\mathcal{L}_\theta \rho_\theta + \rho_\theta\mathcal{L}_\theta).
$$
By putting together \eqref{eq:CCR} and \eqref{eq:FIQFI} we obtain the quantum Cram\'{e}r-Rao bound (QCRB) \cite{Holevo2011,Helstrom1976}
\begin{equation}
\label{eq:QCRB}
{\rm Var} (\hat{\theta}) :=\mathbb{E}_\theta[(\hat{\theta}- \theta)^2] \geq F(\theta)^{-1}.  
\end{equation}
which sets a fundamental limit to the estimation precision. A similar bound on the covariance matrix of an unbiased estimator holds for  multidimensional models, cf. section \ref{sec:qdits}. 

An important question is which measurements saturate the bound \eqref{eq:FIQFI}, and what is the statistical interpretation of the corresponding QCRB \eqref{eq:QCRB}. For completeness, we state the exact conditions in the following Proposition whose formulation is  adapted from \cite{Zhou2018}. The  proof is included in appendix \ref{sec:QCRBsat}.

\begin{proposition}
\label{th:QCRB-acievability}
Let $\rho_\theta$ be a one-dimensional quantum statistical model and let 
$\mathcal{M}:=\{M_0,\dots,  M_p\}$ be a measurement with probabilities $p_\theta (i) := {\rm Tr}(\rho_\theta M_i)$. Then $\mathcal{M}$ achieves the bound \eqref{eq:FIQFI} if and only if the following conditions hold: 

1) if $p_\theta(i)>0$ there exists $\lambda_i \in \mathbb{R}$ such that
\begin{equation}\label{eq:cond1}
M^{1/2}_i\rho^{1/2}_\theta = \lambda_i M_i^{1/2}\mathcal{L}_\theta
\rho^{1/2}_\theta
\end{equation}

2) if $p_\theta(i)=0$ for some $i$ then ${\rm Tr} (M_i \mathcal{L}_\theta\rho_\theta \mathcal{L}_\theta)=0$.
\end{proposition}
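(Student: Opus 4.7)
The plan is to reduce the inequality $I_{\mathcal{M}}(\theta)\leq F(\theta)$ to a term-by-term Hilbert--Schmidt Cauchy--Schwarz bound, and then read off the two saturation conditions from the resulting chain. For each outcome index $i$ I would introduce the Hilbert--Schmidt operators
$$
A_i := M_i^{1/2}\rho_\theta^{1/2},\qquad B_i := M_i^{1/2}\mathcal{L}_\theta\rho_\theta^{1/2},
$$
chosen so that $p_\theta(i) = \|A_i\|_{\rm HS}^2$, $\Tr(M_i\mathcal{L}_\theta\rho_\theta\mathcal{L}_\theta) = \|B_i\|_{\rm HS}^2$, and, using the Lyapunov equation defining $\mathcal{L}_\theta$,
$$
\tfrac{d p_\theta(i)}{d\theta} = \tfrac{1}{2}\Tr\bigl(M_i(\mathcal{L}_\theta\rho_\theta + \rho_\theta\mathcal{L}_\theta)\bigr) = \mathrm{Re}\,\langle A_i,B_i\rangle_{\rm HS}.
$$
The CFI formula (\ref{eq.CFI}) then rewrites compactly as $I_{\mathcal{M}}(\theta) = \sum_{i:\,p_\theta(i)>0} (\mathrm{Re}\,\langle A_i,B_i\rangle_{\rm HS})^2 / \|A_i\|_{\rm HS}^2$.

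Next I would apply the two elementary bounds $(\mathrm{Re}\,z)^2 \leq |z|^2$ and Hilbert--Schmidt Cauchy--Schwarz to each surviving term to obtain $I_{\mathcal{M}}(\theta) \leq \sum_{i:\,p_\theta(i)>0} \|B_i\|_{\rm HS}^2$. Extending the sum to all $i$ (each added summand being non-negative) and using the completeness relation $\sum_i M_i = \mathbbm{1}$ together with the cyclicity of the trace yields $\sum_i \|B_i\|_{\rm HS}^2 = \Tr(\rho_\theta\mathcal{L}_\theta^2) = F(\theta)$, which recovers (\ref{eq:FIQFI}).

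For the ``iff'' I would track the three inequalities separately. Cauchy--Schwarz is tight on index $i$ iff $A_i$ and $B_i$ are collinear, i.e.\ $A_i = \lambda_i B_i$ for some scalar; the bound $(\mathrm{Re}\,z)^2 = |z|^2$ then forces $\lambda_i\in\mathbb{R}$. Combined, this is exactly condition~1 applied to every index with $p_\theta(i)>0$. The extension of the sum from $\{i: p_\theta(i)>0\}$ to all $i$ is tight iff each omitted $\|B_i\|_{\rm HS}^2 = \Tr(M_i\mathcal{L}_\theta\rho_\theta\mathcal{L}_\theta)$ vanishes, which is condition~2. Since all three inequalities feed into a single chain, simultaneous saturation is equivalent to saturation of the overall bound. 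The only delicate point is the bookkeeping around null-probability outcomes: when $p_\theta(i)=0$ the operator $A_i$ vanishes (as $\|A_i\|_{\rm HS}^2 = p_\theta(i) = 0$), so such indices are automatically absent from the CFI sum, yet the corresponding $\|B_i\|_{\rm HS}^2$ may still contribute to $F(\theta)$. Recognising this asymmetry is what makes condition~2 an independent requirement rather than a special case of condition~1, and it is essentially the only subtle feature of an otherwise mechanical argument.
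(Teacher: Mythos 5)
Your argument is correct and is essentially the paper's own proof (Appendix~A): the same chain of $(\mathrm{Re}\,z)^2\le|z|^2$, Hilbert--Schmidt Cauchy--Schwarz applied to $M_i^{1/2}\rho_\theta^{1/2}$ and $M_i^{1/2}\mathcal{L}_\theta\rho_\theta^{1/2}$, and completeness of the POVM, with the two saturation conditions read off term by term. The only difference is cosmetic: you package the three inequalities explicitly in inner-product notation, whereas the paper states the reality and collinearity conditions separately before merging them into condition~1 with $\lambda_i\in\mathbb{R}$.
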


One can check that the conditions in Proposition \ref{th:QCRB-acievability} are satisfied, and hence the bound \eqref{eq:FIQFI} is saturated, if $\mathcal{M}$ is the measurement of the observable $\mathcal{L}_\theta$. However, in general this observable depends on the unknown parameter, so achieving the QFI does not have an immediate statistical  interpretation. Nevertheless, one can provide a meaningful operational interpretation in the scenario in which a large number $n$ of copies of $\rho_\theta$ is available. In this case one can apply the adaptive scheme presented in the introduction: using a (small) sub-sample to obtain a `rough' preliminary estimator $\tilde{\theta}$ of $\theta$ and then measuring $\mathcal{L}_{\tilde{\theta}}$ on the remaining copies. This adaptive procedure provides estimators $\hat{\theta}_n$ which achieve the Cram\'{e}r-Rao bound asymptotically in the sense that (see e.g. \cite{GillMassar,LAN4})
$$
n \mathbb{E}_\theta[(\hat{\theta}_n- \theta)^2] \to F^{-1}(\theta).
$$ 




\emph{Pure state models.} While for full rank states $(\rho_\theta>0)$ the second condition in Proposition \ref{th:QCRB-acievability} is irrelevant, 
this is not the case for rank deficient states, and in particular for pure state models.

Indeed let us assume that the model consists of pure states $\rho_\theta= |\psi_\theta\rangle \langle \psi_\theta|$
and let us choose the phase dependence of the vector state such that $\langle \dot{\psi}_\theta|\psi_\theta\rangle=0$ (alternatively, one can use 
$|\psi^\perp_\theta\rangle  := |\dot{\psi}_\theta \rangle-  \langle \psi_\theta | \dot{\psi}_\theta\rangle |\psi_\theta \rangle$ instead of $|\dot{\psi}_\theta \rangle$ in the equations below). Then 
$$
\mathcal{L}_\theta = 2(|\dot{\psi}_\theta\rangle\langle \psi_\theta| + |\psi_\theta\rangle\langle \dot{\psi}_\theta|),
\quad {\rm and} \quad 
F(\theta)  = 
4\| \dot\psi_{\theta} \|^2.
$$ 
Let $\mathcal{M}$ to be a projective measurement with $M_i= |v_{i}\rangle \langle v_{i}|$ where $\mathcal{B}:=\{|v_0\rangle, \dots , |v_{d-1}\rangle\} $ is an orthonormal basis (ONB). Without loss of generality we can choose the phase factors such that $\langle v_i|\psi_\theta\rangle \in \mathbb{R}$ at the particular value of interest $\theta$. Equation 
\eqref{eq:cond1} in Proposition \ref{th:QCRB-acievability} becomes 
$
\langle v_i|\dot{\psi}_\theta\rangle \in \mathbb{R},
$
i.e. in the first order, the statistical model is in the real span of the basis vectors. Condition 2 requires that if $\langle v_i|\psi_\theta\rangle =0$ then $\langle v_i|\dot{\psi}_\theta\rangle =0$. Intuitively, this implies that, in the first order, the model is restricted to the real subspace spanned by the basis vectors with positive probabilities. For example if
\begin{equation} \label{eq:example}
    |\psi_\theta\rangle: =\cos\theta |0\rangle + \sin\theta |1\rangle\in \mathbb{C}^2,
\end{equation}
then any measurement with respect to an ONB consisting of superpositions of $|0\rangle$ and $|1\rangle$ with \emph{nonzero} \emph{real} coefficients achieves the QCRB at $\theta=0$, and no other measurement does so. This model will be discussed in detail in sections \ref{sec.null}  and \ref{sec:qubits}.


\emph{Null measurements.} We now formally introduce the concept of  a \emph{null measurement} which will be the focus of our investigation. The general idea is to choose a measurement basis such that one of its vectors is equal or close to the unknown state. In this case, the corresponding outcome has probability close to one while the occurrence of other outcomes can serve as a `signal' about the deviation from the true state. Let us consider first an \emph{exact null measurement}, i.e. one in which the measurement basis $\mathcal{B} = \mathcal{B}(\theta)$ is chosen such that $|v_0\rangle = |\psi_\theta\rangle$, e.g in the example in equation \eqref{eq:example} the null measurement at $\theta=0$ is determined by the standard basis. Such a measurement does not satisfy the conditions for achieving the QCRB. Indeed, we have $p_\theta(i) =\delta_{0,i}$ and condition 2 implies $\langle v_i|\dot{\psi}_\theta\rangle=0$ for all $i=1,\dots ,d-1$. However this is impossible given that $|v_0\rangle = |\psi_\theta\rangle$ and $\langle \dot{\psi}_\theta |\psi_\theta\rangle =0$. 
In fact, the exact null measurement has zero CFI, which implies that there exists no (locally) unbiased estimator. Indeed, since probabilities belong to $[0,1]$, and $p_\theta (i)$ is either $0$ or $1$ for a null measurement, all first derivatives at $\theta$ are zero 
so the CFI \eqref{eq.CFI} is equal to zero, i.e. $I_{\mathcal{B}(\theta)}(\theta) = 0$. 

One can rightly argue that the exact null measurement as defined above is not an operationally useful concept and cannot be implemented  experimentally as it requires the exact knowledge of the unknown parameter. However, in a multi-copy setting the measurement  \emph{can} incorporate information about the parameter, as this can be obtained by measuring a sub-ensemble of systems in a preliminary estimation step, similarly to the SLD case.
It is therefore meaningful to consider \emph{approximate null} measurements, which satisfy the null property at $\tilde{\theta}\approx\theta$, i.e. we measure in a basis $\mathcal{B}(\tilde{\theta}) = \{ |v_0^{\tilde{\theta}}\rangle, \dots , |v_{d-1}^{\tilde{\theta}}\rangle \}$ with $|v_0^{\tilde{\theta}}\rangle = |\psi_{\tilde{\theta}}\rangle.$
Interestingly, while the exact null measurement has zero CFI, 
an approximate null measurement $\mathcal{B}(\tilde{\theta})$ `almost achieve' the QCRB in the sense that the corresponding classical Fisher information $I_{\mathcal{B}(\tilde{\theta})}(\theta)$ converges to $F(\theta)$ as $\tilde{\theta}$ approaches $\theta$ \cite{NullQFI1,NullQFI2,NullQFI3}. This means that by using an approximate null measurement we can achieve asymptotic error rates arbitrarily close (but not equal) to the QCRB, by measuring in a basis $\mathcal{B}(\tilde{\theta})$ with a fixed $\tilde{\theta}$ close to $\theta$. 


The question is then, is it possible to achieve the QCRB asymptotically with respect to the sample size by employing null measurements determined by an \emph{estimated} parameter value, as in the case of the SLD measurement? References \cite{NullQFI1,NullQFI2,NullQFI3} do not address this question, aside from the above Fisher information convergence argument. 

To answer the question we allow for measurements which have the null property at parameter values determined by \emph{reasonable} preliminary estimators based on measuring a sub-sample of a large ensemble of identically prepared systems (cf. section \ref{sec.null} for precise definition). We investigate such measurement strategies 
and show that the natural two step implementation -- use the rough estimator as a vector in the second step measurement basis -- \emph{fails} to achieve the standard rate $n^{-1/2}$ on simple qubit models. We will see that this is closely related to the fact that the CFI of the exact null measurement is zero, unlike the SLD case.

Nevertheless, in section 
\ref{sec.one.parameter} we show that a modified strategy which we call a \emph{displaced-null measurement} does achieve asymptotic optimality in the simple qubit model discussed above.  This scheme is then extended to general multidimensional qudit models in section \ref{sec:qdits} and shown to achieve the Holevo bound for general multiparameter models.

\section{Why the naive implementation of a null measurement does not work}
\label{sec.null}
In this section we analyse the \emph{null measurement} scheme described in section \ref{sec:QCRB}, for the case of a simple one-parameter qubit rotation model. The main result is Theorem \ref{prop.null} which shows that the naive/natural implementation of the null-fails to achieve the QCRB.

Let
\begin{equation}\label{eq:qubit.model}
|\psi_\theta\rangle  = e^{-i \theta\sigma_y}|0\rangle= 
\cos(\theta)|0\rangle  + \sin(\theta)|1\rangle 
\end{equation}
be a one-parameter family of pure states which describes a circle in the $xz$ plane of the Bloch sphere. To simplify some of the arguments below we will assume that $\theta$ is known to be  in the open interval $\Theta = (-\pi/8, \pi/8)$, but the analysis can be extended to completely unknown $\theta$. The quantum Fisher information is 
$$
F(\theta) = 4{\rm Var}(\sigma_y) =  
4 \langle \psi_\theta |\sigma_y^2 |\psi_\theta\rangle - 4\langle \psi_\theta |\sigma_y |\psi_\theta\rangle^2 =4.
$$ 
We now consider the specific value $\theta=0$, so $|\psi_0\rangle =|0\rangle$ and $|\dot{\psi}_0\rangle =|1\rangle$. According to Proposition \ref{th:QCRB-acievability} 
any measurement with respect to a basis consisting of real combinations of $|0\rangle$ and $|1\rangle$ achieves the QCRB, with the exception of the basis $\{|0\rangle, |1\rangle\}$ itself. Indeed, let
\begin{equation}\label{eq:v0v1}
|v_0^\tau\rangle = \exp(-i\tau\sigma_y) |0\rangle ,
\quad
|v_1^\tau\rangle= \exp(-i\tau\sigma_y) |1\rangle
\end{equation} 
be such a basis ($\tau\neq 0$) , then the probability distribution is 
$$
p_\theta(0) = \cos^2(\theta-\tau), \qquad
p_\theta(1) = 
\sin^2(\theta-\tau)
$$
and the classical Fisher information
is
$$
I_\tau(\theta=0) = \mathbb{E}_{\theta=0}\left [\left(\frac{d\log p_\theta }{d\theta}\right)^2\right ] = 4.
$$
However, at $\tau=0$ we have 
$I_0(\theta=0)=0$ in agreement with the general fact that exact null measurements have zero CFI. This reveals a curious singularity in the space of optimal measurements, and our goal is to understand to what extent this is mathematical artefact or it has a deeper statistical significance.  

To start, we note that the 
failure of the standard basis measurement can also be understood as a consequence of parameter \emph{non-identifiability} around the parameter value $0$. Indeed, for $\tau=0$ we have $p_\theta(i)= p_{-\theta}(i)$ so this measurement cannot distinguish  $\theta$ from $-\theta$. A similar issue exists for $\tau\neq 0$, if $\theta$ is assumed to be completely unknown, or in an interval containing $\tau$, cf. Figure \ref{Fig:nonidentifiability}. On the other hand, if $\theta$ is \emph{known} to belong to an interval $I$ and $\tau$ is outside this interval, then the parameter \emph{is} identifiable and the standard asymptotic theory applies. For instance, measuring $\sigma_x$ leads to an identifiable statistical model for our quantum qubit model.

Consider now the following two step procedure, which arguably is the most natural way of implementing approximate-null measurements. A sub-ensemble of $\tilde{n}$ systems is used to compute a preliminary estimator $\tilde{\theta}_n$, and subsequently the remaining samples are measured in the null-basis at angle $\tau=\tilde{\theta}_n$. 
For concreteness we assume that $\tilde{n} = n^{1-\epsilon}$ for some small constant $\epsilon>0$, but our results hold more generally for $\tilde{n} =o(n)$ and 
$\tilde{n}\to\infty$ with $n$.

To formulate our theoretical result, we use the language of Bayesian statistics which we temporarily adopt for this purpose. We consider that the unknown parameter $\theta$ is random and is drawn from the uniform \emph{prior distribution}  $\pi(d\theta)= \frac{4}{\pi} d\theta$ over the parameter space $\Theta$. Adopting a Bayesian notation we let 
$p(d\tilde{\theta}_n|\theta):= p_\theta(d\tilde{\theta}_n)$ be the distribution of $\tilde{\theta}_n$ given $\theta$. The joint distribution of $(\theta,\tilde{\theta}_n)$ is then 
$$
p(d\theta, d\tilde{\theta}_n)=\pi(d\theta)p(d\tilde{\theta}_n|\theta) = 
p(d\tilde{\theta}) \pi(d\theta|\tilde{\theta}_n)
$$
where $\pi(d\theta|\tilde{\theta}_n)$ is the \emph{posterior distribution} of $\theta$ given $\tilde{\theta}_n$. 

{\bf Reasonable estimator hypothesis:} we assume that $\tilde{\theta}_n$ is a \emph{reasonable estimator} in the sense that the following conditions are satisfied for every $n \geq 1$:
\begin{enumerate}
\item 
$\pi(d\theta|\tilde{\theta}_n)$ has a density $\pi(\theta|\tilde{\theta}_n)$ with respect to the Lebesgue measure;
\item \label{item.2}
For each $n$ there exist a set 
$A_n\subseteq \Theta$ such that $\mathbb{P}(\tilde{\theta}_n \in A_n)>c$ for some constant $c>0$, and the following condition holds: 
for each $\tilde{\theta}_n\in A_n$, the positive symmetric function
$$
g_{n,\tilde{\theta}_n}(r) := 
{\rm min}
\{ \pi(\tilde{\theta}_n+r|\tilde{\theta}_n), \pi(\tilde{\theta}_n-r|\tilde{\theta}_n)
\}
$$
satisfies 
\begin{equation} \label{eq:constantmass}
\int_{r \geq \tau_n}\,g_{n,\tilde{\theta}_n}(r)dr \geq C
\end{equation}
where $ \tau_n := n^{-1/2+\epsilon/4}$ and  $C>0$ is a constant independent on $n$ and $\tilde{\theta}_n$.
\end{enumerate}
Condition \ref{item.2}. means that the posterior distribution has significant mass on \emph{both} sides of the preliminary estimator $\tilde{\theta}_n$, at a distance which is larger than $n^{-1/2+\epsilon/4}$, as illustrated in Figure \ref{fig:posterior}. Since standard  estimators such as maximum likelihood have asymptotically normal posterior distribution with standard deviation 
$\tilde{n}^{-1/2} = n^{-1/2+\epsilon/2}\gg n^{-(1-\epsilon/2)/2}$,  condition \ref{item.2}. is expected to hold quite generally, hence the name reasonable estimator. The following lemma shows that the natural estimator in our model is indeed reasonable.
\begin{lemma} \label{lem:reas}
Consider the measurement of $\sigma_x$ on a sub-ensemble of $\tilde{n}=n^{1-\epsilon}$ systems, and let $\tilde{\theta}_n$ be the maximum likelihood estimator. Then $\tilde{\theta}_n$ is a reasonable estimator.
\end{lemma}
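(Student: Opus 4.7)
My plan is to exploit the sufficiency of the count $N_+$ of $+1$-outcomes for the $\sigma_x$ measurement, write the posterior down explicitly, and apply a Laplace-type argument to show that it is approximately Gaussian at a scale that is strictly larger than $\tau_n$.

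Measurement of $\sigma_x$ on $|\psi_\theta\rangle$ yields outcomes $\pm 1$ with probabilities $p_\pm(\theta)=\tfrac12(1\pm\sin 2\theta)$, so $N_+\sim\mathrm{Bin}(\tilde{n},p_+(\theta))$ is sufficient, and the MLE is $\tilde{\theta}_n=\tfrac12\arcsin(2N_+/\tilde{n}-1)$ (truncated to $\Theta$ when this value falls outside). Since $\tilde{\theta}_n$ is a bijective function of $N_+$, it carries exactly the same posterior information; under the uniform prior
\begin{equation*}
\pi(\theta\mid\tilde{\theta}_n)=Z_n^{-1}\,p_+(\theta)^{N_+}p_-(\theta)^{\tilde{n}-N_+}\mathbf{1}_\Theta(\theta),
\end{equation*}
which has a Lebesgue density, settling condition 1. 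The Fisher information equals the QFI, $I(\theta)\equiv 4$, so I expect the posterior to concentrate at scale $\tilde{n}^{-1/2}$. Crucially, $\tau_n\sqrt{\tilde{n}}=n^{-\epsilon/4}\to 0$, so $\tau_n$ sits deep inside the bulk of this putative Gaussian, and that is the slack that will deliver condition 2.

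For the probability lower bound I fix the set $A_n:=[-\pi/16,\pi/16]$, whose distance to $\partial\Theta$ is $\pi/16$. Using the bi-Lipschitz character of $\sin(2\cdot)$ on $\Theta$ together with Hoeffding's inequality applied to $N_+/\tilde{n}$, one gets $\mathbb{P}_\theta(|\tilde{\theta}_n-\theta|>\eta)\leq 2e^{-c\tilde{n}\eta^2}$ for any $\theta\in\Theta$; averaging over the prior restricted to $[-\pi/32,\pi/32]$ gives $\mathbb{P}(\tilde{\theta}_n\in A_n)\geq c'>0$ for all $n$ large. For the shape of the posterior on the event $\{\tilde{\theta}_n\in A_n\}$, let $\ell(\theta):=N_+\log p_+(\theta)+(\tilde{n}-N_+)\log p_-(\theta)$. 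A direct computation using $p_++p_-=1$ and the first-order MLE identity $N_+=\tilde{n}\,p_+(\tilde{\theta}_n)$ gives $\ell'(\tilde{\theta}_n)=0$ and, exactly, $\ell''(\tilde{\theta}_n)=-4\tilde{n}$. Since $p_\pm$ are bounded away from $0$ and $1$ on a $\pi/16$-neighbourhood of $A_n$, all higher derivatives of $\log p_\pm$ are deterministically bounded there, so Taylor expansion around $\tilde{\theta}_n$ yields constants $c_1,c_2>0$, independent of $N_+$ and of $\tilde{\theta}_n\in A_n$, such that
\begin{equation*}
c_1\sqrt{\tilde{n}}\,e^{-c_2\tilde{n}r^2}\leq\pi(\tilde{\theta}_n\pm r\mid\tilde{\theta}_n)\leq c_2\sqrt{\tilde{n}}\,e^{-c_1\tilde{n}r^2}\qquad(|r|\leq\pi/16),
\end{equation*}
together with an exponentially small posterior mass outside this window.

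Plugging these bounds into the integral in \eqref{eq:constantmass} and substituting $s=\sqrt{\tilde{n}}\,r$,
\begin{equation*}
\int_{\tau_n}^{\pi/16}\!g_{n,\tilde{\theta}_n}(r)\,dr\,\geq\, c_1\!\int_{n^{-\epsilon/4}}^{\sqrt{\tilde{n}}\pi/16}\!e^{-c_2 s^2}\,ds\,\xrightarrow[n\to\infty]{}\,c_1\!\int_0^\infty\! e^{-c_2 s^2}\,ds\,>\,0,
\end{equation*}
which provides the uniform constant $C>0$ for all sufficiently large $n$; the finitely many remaining $n$ are harmless since the hypothesis is only used for the asymptotic conclusion in Theorem~\ref{prop.null}. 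The main technical obstacle is ensuring that the Laplace estimate holds uniformly in the random value of $N_+$ and in $\tilde{\theta}_n\in A_n$, rather than merely on a high-probability event: this is exactly what is bought by taking $A_n$ a fixed interior sub-interval of $\Theta$, since it makes all the analytic constants (curvature of $\ell$, moduli for the cubic remainder, lower bound for $p_\pm$) deterministic functions of $A_n$ alone.
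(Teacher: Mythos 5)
Your proof is correct and follows essentially the same route as the paper's: both write the binomial posterior explicitly, observe that it is a Laplace-type exponential peaked at $\tilde{\theta}_n$ with curvature $4\tilde{n}$ (your Taylor expansion of $\ell$ with $\ell''(\tilde{\theta}_n)=-4\tilde{n}$ is the paper's relative-entropy representation $e^{-\tilde{n}H(\cdot,\cdot)}$ in different notation), sandwich it between Gaussians with slightly perturbed curvature constants, and exploit $\sqrt{\tilde{n}}\,\tau_n=n^{-\epsilon/4}\to 0$ to get the uniform lower bound in \eqref{eq:constantmass}. The only (immaterial) differences are your choice of a compact interior $A_n$ with a Hoeffding bound versus the paper's $A_n=\Theta$ with consistency and dominated convergence.
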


The proof of Lemma \ref{lem:reas} can be found in Appendix \ref{sec:proof.lemma.reasonable.estimator}. The method can be extended to a wide class of estimators, since it essentially relies on assumptions which are quite standard in usual statistical problems.

The next Theorem is the main result of this section and shows that if a reasonable (preliminary) estimator is used as reference for a null measurement on the remaining samples, the MSE of the final estimator cannot achieve the QCRB, indeed it cannot even achieve standard scaling.  
\begin{center}
\begin{figure}
    \centering
\includegraphics[width=7cm]{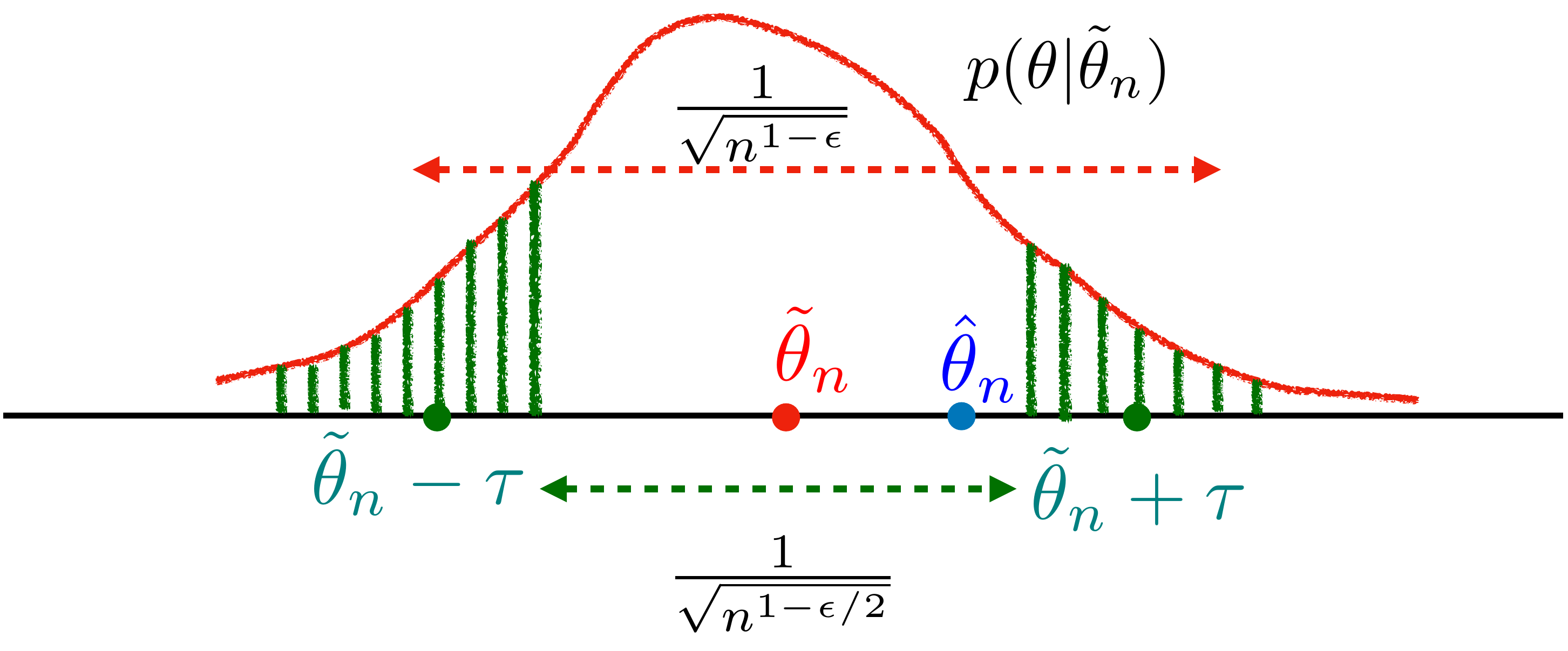}
    \caption{For a reasonable estimator $\tilde{\theta}_n$, the posterior distribution of $\theta$ is centred around $\tilde{\theta}_n$, and has width of order $n^{-(1-\epsilon)/2}$. 
    The assumption amounts to the fact that the posterior has non-vanishing mass on either side of $\tilde{\theta}_n$ at distance larger than $n^{-(1-\epsilon/2)/2}$ which is much smaller that the typical standard deviation.}
    \label{fig:posterior}
\end{figure}
\end{center}
\begin{theorem}
\label{prop.null}
Assume that $\tilde{\theta}_n$ is a reasonable estimator as defined above, obtained by measuring a sub-ensemble of size $\tilde{n}:= n^{1-\epsilon}$. Let $\hat{\theta}_n$ be an estimator of $\theta$ based on measuring the remaining $n-n^{1-\epsilon}$ sub-ensemble in the basis corresponding to angle $\tilde{\theta}_n$. Then   
$$
\lim_{n\to\infty} 
n R_{\pi} (\hat{\theta}_n) = \infty
$$
where 
$$
R_\pi (\hat{\theta}_n )=
\int_\Theta \pi(d\theta) \mathbb{E}_\theta [(\hat{\theta}_n -\theta)^2]
$$
is the average mean square error risk.
\end{theorem}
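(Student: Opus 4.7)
The plan is to lower bound the Bayes risk by the expected posterior variance and then exploit the symmetry of the second-stage likelihood around $\tilde\theta_n$ in order to turn the heuristic sign non-identifiability of Figure~\ref{Fig:nonidentifiability}a into a quantitative lower bound. Let $X$ denote the outcomes of measuring the remaining $n-\tilde n$ systems in $\mathcal{B}(\tilde\theta_n)$: conditional on $\tilde\theta_n$, these are $n-\tilde n$ i.i.d.\ Bernoullis with success probability $\sin^2(\theta-\tilde\theta_n)$, a function of $\theta$ only through $r:=|\theta-\tilde\theta_n|$. Since $\hat\theta_n$ is some function of $(\tilde\theta_n,X)$ and the posterior mean is Bayes-optimal,
$$
R_\pi(\hat\theta_n)\;\geq\;\mathbb{E}\bigl[{\rm Var}(\theta\,|\,\tilde\theta_n,X)\bigr].
$$

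The next step is to parametrise $\theta=\tilde\theta_n+\sigma r$ with $\sigma\in\{\pm\}$ and $r\geq 0$. The likelihood symmetry forces the conditional posterior of $\sigma$ given $r$ to be $X$-independent,
$$
\mathbb{P}(\sigma=+\,|\,r,\tilde\theta_n,X)=\frac{\pi(\tilde\theta_n+r\,|\,\tilde\theta_n)}{\pi(\tilde\theta_n+r\,|\,\tilde\theta_n)+\pi(\tilde\theta_n-r\,|\,\tilde\theta_n)}=:p(r),
$$
so by the law of total variance ${\rm Var}(\theta\,|\,\tilde\theta_n,X)\geq \mathbb{E}[4r^2 p(r)(1-p(r))\,|\,\tilde\theta_n,X]$. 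Averaging this over $X$ given $\tilde\theta_n$ and using the identity $\mathbb{E}_{X|\tilde\theta_n}[L(r,X)/Z(\tilde\theta_n,X)]=1$, where $L(r,X)$ is the Bernoulli likelihood and $Z(\tilde\theta_n,X)=P(X\,|\,\tilde\theta_n)$ is the marginal, the $X$-dependent weighting on $r$ collapses to the stage-one posterior and produces
$$
\mathbb{E}_X\bigl[{\rm Var}(\theta\,|\,\tilde\theta_n,X)\,\big|\,\tilde\theta_n\bigr]\;\geq\;\int_0^{\infty}\!\! 4r^2\,\frac{\pi(\tilde\theta_n+r\,|\,\tilde\theta_n)\,\pi(\tilde\theta_n-r\,|\,\tilde\theta_n)}{\pi(\tilde\theta_n+r\,|\,\tilde\theta_n)+\pi(\tilde\theta_n-r\,|\,\tilde\theta_n)}\,dr.
$$

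To close the argument I will invoke the reasonable estimator hypothesis. The elementary bound $ab/(a+b)\geq \min(a,b)/2$ shows that the integrand dominates $2r^2 g_{n,\tilde\theta_n}(r)$, so restricting the integration to $r\geq \tau_n$ and applying \eqref{eq:constantmass} gives
$$
\mathbb{E}_X\bigl[{\rm Var}(\theta\,|\,\tilde\theta_n,X)\,\big|\,\tilde\theta_n\bigr]\;\geq\; 2C\tau_n^2\qquad \text{for all }\tilde\theta_n\in A_n.
$$
Averaging over $\tilde\theta_n$ with $\mathbb{P}(\tilde\theta_n\in A_n)\geq c$ then yields $R_\pi(\hat\theta_n)\geq 2Cc\,\tau_n^2 = 2Cc\,n^{-1+\epsilon/2}$, so $nR_\pi(\hat\theta_n)\to\infty$.

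The main obstacle is the factorisation step: the sign of $\theta-\tilde\theta_n$ must be posterior-independent of $X$, which is precisely the content of the null-measurement non-identifiability. Once this is rigorously extracted via the change of variables $\theta\mapsto(r,\sigma)$, the remaining lower bound reduces to a two-point Bayes risk calculation driven quantitatively by the reasonable estimator assumption; the algebra that follows is routine, and the argument transparently explains why the bound would disappear if the second-stage basis were displaced away from $\tilde\theta_n$ by an amount $\delta_n\gg \tau_n$, foreshadowing the displaced-null strategy of the next section.
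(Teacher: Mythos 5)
Your proof is correct, and it reaches the paper's bound $R_\pi(\hat\theta_n)\geq 2cC\,n^{-1+\epsilon/2}$ by a genuinely different decomposition. The paper works directly with an arbitrary estimator: after writing the inner posterior integral as $\int_{r\geq 0}dr\int p_r(d\hat\theta_n)\,[\pi(\tilde\theta_n+r|\tilde\theta_n)(\hat\theta_n-\tilde\theta_n-r)^2+\pi(\tilde\theta_n-r|\tilde\theta_n)(\hat\theta_n-\tilde\theta_n+r)^2]$ (using the same fact you use, that $p(d\hat\theta_n|\theta,\tilde\theta_n)$ depends on $\theta$ only through $r=|\theta-\tilde\theta_n|$), it lower-bounds both prior densities by $g_{n,\tilde\theta_n}(r)$ and invokes the elementary inequality $(x-r)^2+(x+r)^2\geq 2r^2$, which holds pointwise in $\hat\theta_n$. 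You instead first pass to the Bayes-optimal estimator, lower-bound the Bayes risk by the expected posterior variance, and then use the law of total variance conditioned on $r$ together with the $X$-independence of the posterior sign probability $p(r)$; the identity $\mathbb{E}_{X|\tilde\theta_n}[L(r,X)/Z(\tilde\theta_n,X)]=1$ that collapses the $X$-average is correct (the marginal $P(X|\tilde\theta_n)$ is exactly $Z(\tilde\theta_n,X)$, so the sum reduces to $\sum_X L(r,X)=1$), and $ab/(a+b)\geq\min(a,b)/2$ recovers the same $2r^2 g_{n,\tilde\theta_n}(r)$ integrand. What your route buys is a sharper conceptual statement — the second-stage data carry literally zero information about the sign of $\theta-\tilde\theta_n$, so the posterior variance retains the full two-point spread $4r^2p(r)(1-p(r))$ — and it makes transparent why displacing the reference lifts the obstruction. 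What the paper's route buys is brevity: it avoids the change of variables to $(r,\sigma)$, the conditional-independence bookkeeping, and the marginal-cancellation identity, replacing all of it with one line of algebra applied to an arbitrary $\hat\theta_n$. Both arguments silently extend $\pi(\cdot|\tilde\theta_n)$ by zero outside $\Theta$ near the boundary, which is harmless since $g_{n,\tilde\theta_n}$ then vanishes there.
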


The proof of Theorem \ref{prop.null} can be found in Appendix \ref{app:proof.theorem.null.measurement}.

The fact that a reasonable estimator has a `balanced' posterior was key in obtaining the negative result in Theorem \ref{prop.null}. This encodes the fact that the null measurement cannot distinguish between possible parameter values $\theta = \tilde{\theta}_n+ \tau_n$ and $\theta = \tilde{\theta}_n- \tau_n$ leading to errors that are larger than $n^{-1/2}$. In section \ref{sec:qubits} we show how we can go around this problem by deliberately choosing the reference parameter of the null measurement to be displaced away from a reasonable estimator 
$\tilde{\theta}_n$ by an amount  $\delta_n$ that is large enough to insure identifiability, but small enough to still be in a shrinking neighbourhood of $\theta$.

In the proof of Theorem \ref{prop.null} we made use of the fact that, for the statistical model defined in equation \eqref{eq:qubit.model}, the law of the measurement in the basis containing $\ket{\psi_{\tilde{\theta}_n}}$ could not distinguish between $\tilde{\theta}_n \pm r$. Although for general pure state models this might not be the case, in the appendix  \ref{sec:local} we show that under some mild additional assumptions, the result of Theorem \ref{prop.null} extends to weaker notions of non-identifiability.

\section{Displaced-null estimation scheme for optimal estimation of pure qubit states}
\label{sec:qubits}
In section \ref{sec.null} we showed that a null measurement that uses a reasonable preliminary estimator as reference parameter is 
sub-optimal. We will now show that one can achieve the asymptotic version of the QCRB \eqref{eq:QCRB} by employing a null measurement at a reference parameter that is \emph{deliberately shifted} away from the reasonable estimator by a certain amount. We will call these \emph{displaced-null} measurements.
\subsection{The displaced-null measurement for one parameter qubit models}
\label{sec.one.parameter}

We consider the one parameter model $|\psi_\theta\rangle$ defined in equation \eqref{eq:qubit.model} and assume that we are given $n$ identical copies of $|\psi_\theta\rangle$. We apply the usual two step adaptive procedure: in the first step we use a vanishingly small proportion of the samples containing $\tilde{n}= n^{1-\epsilon}$ copies (where $\epsilon>0$ is a small parameter) to perform a preliminary (non-optimal) estimation producing a reasonable estimator $\tilde{\theta}_n$. For concreteness we assume that $\tilde{\theta}_n$ is the estimator described in Lemma \ref{lem:reas}. Using Hoeffding's bound we find that $\tilde{\theta}_n$ satisfies the concentration bound 
\begin{equation}\label{eq:concentration}
\mathbb{P}_\theta(| \tilde{\theta}_n-\theta |>n^{-1/2 +\epsilon})
\leq C e^{-n^{\epsilon} r}
\end{equation}
for some constants $C,r>0$.
This means that with high probability, $\theta$ belongs to the confidence interval $I_n = (\tilde{\theta}_n-n^{-1/2 +\epsilon}, ~ \tilde{\theta}_n+n^{-1/2 +\epsilon})$ whose size shrinks at a slightly slower rate than $n^{-1/2}$.

In the second step we would like to measure all remaining qubits in a basis which contains a vector that is close to $|\psi_\theta\rangle$. However, as argued in section \ref{sec.null}, the null measurement basis  $\{|v^{\tilde{\theta}_n}_0\rangle,|v^{\tilde{\theta}_n}_1\rangle \}$ satisfying  $|v^{\tilde{\theta}_n}_0\rangle = |\psi_{\tilde{\theta}_n}\rangle$ is suboptimal. More generally, for any angle $\tau\in I_n$, the basis defined by equation \eqref{eq:v0v1} suffers an identifiability problem as illustrated in panels a. and b. of Figure \ref{Fig:nonidentifiability}. For this reason, in the second step we choose the reference value
$$
\theta^\prime_n:= \tilde{\theta}_n + \delta_n ,\quad \delta_n:= n^{-1/2+3\epsilon},
$$
such that $\theta^\prime_n$ is well outside $I_n$ but nevertheless,  $\theta^\prime_n\to \theta$ for large $n$ (assuming $\epsilon<1/6$). The $3\epsilon$ factor in the exponent is chosen such that the result of Proposition \ref{prop:displaced.null} below holds, but any factor larger than 
$2\epsilon$ suffices. We measure all 
remaining samples in the basis $\{|v_0^{\theta_n^\prime}\rangle, |v_1^{\theta_n^\prime}\rangle \}$ (cf eq. \eqref{eq:v0v1}) to obtain outcomes $X_1,\dots , X_{n}\in \{0,1\}$ 
with probability distribution 
$$
P^{(n)}_\theta = (1-p^{(n)}_\theta, p^{(n)}_\theta), \quad
p^{(n)}_\theta= \sin^2(\theta- \theta^\prime_n).
$$ 

\begin{proposition}
\label{prop:displaced.null}
Assume that $\Theta$ is bounded and $\epsilon<1/10$ is fixed, and let $\tilde{\theta}_n$ be the preliminary estimator based on $\tilde{n}= n^{1-\epsilon}$ samples.

Let $\hat{\theta}_n$ be the estimator 
$$
\hat{\theta}_n := \tilde{\theta}_n+ 
\frac{n^{-1/2+3\epsilon}}{2} -
\frac{n^{1/2-3\epsilon}}{2}\hat{p}_n 
$$
where $\hat{p}_n$ is the empirical estimator of $p^{(n)}_\theta$, i.e.
\begin{equation}\label{eq:p.hat}
\hat{p}_n= \frac{| \{ i: X_i=1 ,~i=1, \dots ,n\}|}{n}.
\end{equation}
Then $\hat{\theta}_n$ is asymptotically optimal in the sense that 
$$
\lim_{n\to \infty} n\mathbb{E}_\theta [(\hat{\theta}_n -\theta)^2] = F^{-1}(\theta) = \frac{1}{4}.
$$
Moreover, $\hat{\theta}_n$ is asymptotically normal, i.e.
$$
 \sqrt{n}(\hat{\theta}_n -\theta) \to N \left (0, \frac{1}{4}\right )
$$
where the convergence holds in distribution.
\end{proposition}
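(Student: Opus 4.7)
The plan is to condition on the preliminary estimator $\tilde{\theta}_n$ and work on the high-probability event $G_n := \{|\tilde{\theta}_n - \theta| \leq n^{-1/2+\epsilon}\}$, which by the Hoeffding bound \eqref{eq:concentration} has complement of probability at most $Ce^{-n^\epsilon r}$. Conditionally on $\tilde{\theta}_n$, the second-stage outcomes are i.i.d.\ Bernoulli with success probability $p_\theta^{(n)} = \sin^2(T - \delta_n)$, where $T := \theta - \tilde{\theta}_n$. On $G_n$ we have $|T| \leq n^{-1/2+\epsilon} \ll \delta_n = n^{-1/2+3\epsilon}$, so $T - \delta_n$ is of order $\delta_n$ and $p_\theta^{(n)} \sim \delta_n^2$; in particular $|T|/\delta_n = O(n^{-2\epsilon}) \to 0$.

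The estimator is linear in $\hat p_n$, so a direct substitution together with the Taylor expansion $\sin^2(x) = x^2 + O(x^4)$ and the algebraic identity $(T-\delta_n)^2/(2\delta_n) = T^2/(2\delta_n) - T + \delta_n/2$ yields, on $G_n$,
\begin{equation*}
\hat\theta_n - \theta = -\frac{T^2}{2\delta_n} - \frac{\hat p_n - p_\theta^{(n)}}{2\delta_n} + O(\delta_n^3).
\end{equation*}
This is the key identity: the deterministic bias from $T \neq 0$ enters only through the quadratic term $T^2/\delta_n$, since the estimator absorbs the dominant $\delta_n$-shift automatically. For $\epsilon < 1/10$, both $\sqrt n\, T^2/(2\delta_n) = O(n^{-\epsilon})$ and $\sqrt n\, \delta_n^3 = O(n^{-1+9\epsilon})$ tend to zero, so only the centered stochastic term $-\sqrt n(\hat p_n - p_\theta^{(n)})/(2\delta_n)$ survives in the limit.

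Next I would apply the classical CLT to the centered binomial $n\hat p_n - n p_\theta^{(n)}$. Since $n p_\theta^{(n)} \sim n^{6\epsilon} \to \infty$ on $G_n$, the Lindeberg triangular-array CLT conditional on $\tilde{\theta}_n$ gives $(\hat p_n - p_\theta^{(n)})/\sqrt{p_\theta^{(n)}(1-p_\theta^{(n)})/n} \to N(0,1)$. Combined with the uniform convergence $p_\theta^{(n)}/\delta_n^2 = (1 - T/\delta_n)^2 \to 1$ on $G_n$, Slutsky's lemma yields $\sqrt n(\hat p_n - p_\theta^{(n)})/(2\delta_n) \to N(0,1/4)$. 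Substituting into the decomposition above and noting that $G_n^c$ contributes negligibly (since $\Theta$ is bounded, so after a harmless projection of $\hat\theta_n$ into $\Theta$ the error is uniformly bounded, while $\mathbb{P}(G_n^c)$ is exponentially small), I conclude $\sqrt n(\hat\theta_n - \theta) \to N(0,1/4)$ in distribution.

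For the MSE claim, the same decomposition shows that on $G_n$ the quantity $n(\hat\theta_n - \theta)^2$ is dominated in expectation by $n\,\mathbb{E}[(\hat p_n - p_\theta^{(n)})^2]/(4\delta_n^2) = p_\theta^{(n)}(1-p_\theta^{(n)})/(4\delta_n^2) \to 1/4$, while the bad-event contribution is bounded by $\mathrm{diam}(\Theta)^2 \cdot n \cdot Ce^{-n^\epsilon r} = o(1)$; combined with the vanishing $\sqrt n$-scaled bias terms this gives $n\mathbb{E}_\theta[(\hat\theta_n - \theta)^2] \to 1/4 = F^{-1}(\theta)$. The main obstacle I anticipate is passing cleanly from the conditional (given $\tilde\theta_n$) limit statements to the unconditional ones; this is resolved by the fact that the variances $p_\theta^{(n)}/\delta_n^2$ converge to $1$ \emph{uniformly} on $G_n$, so neither the Gaussian limit nor the $L^2$-limit depends on the realization of $\tilde\theta_n$, allowing dominated convergence to remove the conditioning.
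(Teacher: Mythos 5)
Your proof is correct and follows essentially the same route as the paper's: localise $\theta$ via the concentration bound, Taylor-expand $p^{(n)}_\theta=\sin^2(\theta-\theta_n')$ to isolate a vanishing deterministic bias plus the centered binomial fluctuation $-(\hat p_n-p^{(n)}_\theta)/(2\delta_n)$, apply a triangular-array CLT with $np^{(n)}_\theta\sim n^{6\epsilon}\to\infty$, and remove the conditioning using uniformity over $\tilde\theta_n$ and the exponentially small bad event (the paper phrases the CLT step as a direct conditional characteristic-function computation, which is equivalent). The only cosmetic deviation is the projection of $\hat\theta_n$ onto $\Theta$ for the bad-event bound, which is unnecessary: since $\hat p_n\in[0,1]$ and $\Theta$ is bounded, $n(\hat\theta_n-\theta)^2=O(n^2)$ deterministically, which multiplied by $Ce^{-n^\epsilon r}$ already vanishes, exactly as in the paper.
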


The proof of Proposition \ref{prop:displaced.null} can be found in Appendix \ref{app:proof.prop:displaced.null}. Note that we chose to identify $n$ and $n^\prime=n-n^{1-\epsilon}$ in order to simplify the notation and the proofs, but it is immediate to adapt the reasoning in order to deal with this technicality. We also remark that the assumption $\epsilon <1/10$ is not essential and could be removed at the price of using more involved analysis of the concentration properties of $\tilde{\theta}_n$ and the definition of the displacement parameter $\delta_n$.

\section{Displaced-null measurements in the asymptotic Gaussian picture} \label{sec:qubitLAN}

In this section we cast the null-measurement problem into a companion Gaussian estimation problem which arises in the limit of large sample sizes. The Gaussian approximation is described by the theory of quantum local asymptotic normality (QLAN) developed in \cite{LAN1,LAN3,LAN2,LAN4}. For reader's convenience we review the special case of pure qubit states in section \ref{sec:LAN}.

\subsection{Brief review of local asymptotic normality for pure qubit states}
\label{sec:LAN}

The QLAN theory is closely related to the quantum Central Limit Theorem (QCLT) and  shows that for large $n$ the statistical model describing ensembles of $n$ identically prepared qubits can be approximated (locally in the parameter space) by a single coherent state of a one-mode continuous variables (cv) system, whose mean encodes the unknown qubit rotation angle. We refer to \cite{LAN1,LAN5} for mathematical details and focus here on the intuitive correspondence between qubit ensembles and the cv mode. 

We start with a completely unknown pure qubit state described by a one-dimensional projection $P=|\psi\rangle\langle \psi|$. In the first step we  measure a sub-sample of $\tilde{n}=n^{1-\epsilon}$ systems and obtain a preliminary estimator $\tilde{P}_n = |\tilde{\psi}_n\rangle\langle \tilde{\psi}_n|$. We assume that $\tilde{P}_n$ satisfies a concentration bound similar to the one in equation \eqref{eq:concentration} so that $P$ lies within a ball of size $n^{-1/2+\epsilon}$ around $\tilde{P}_n$ with high probability. For more about the localisation procedure we refer to Appendix \ref{sec:adaptive.argument}.

We now choose the ONB $\{|0\rangle , |1\rangle\}$ such that $|0\rangle:= |\tilde{\psi}_n\rangle$.

Thanks to parameter localisation we can focus our attention on `small' rotations around $|0\rangle$ whose magnitude is of the order $n^{-1/2+\epsilon}$ where $n$ is the sample size and $\epsilon>0$ is small. We parametrise such states as
$$|\psi_{{\bm u}/\sqrt{n}}\rangle := U\left(\frac{\bm u}{\sqrt{n}}\right)|0\rangle = e^{-i (u_1\sigma_y -u_2\sigma_x)/\sqrt{n}} |0\rangle,
$$
where ${\bm u}=(u_1, u_2)$ is a two-dimensional local parameter of magnitude $|{\bm u}|< n^{\epsilon}$.
The joint state of the ensemble of $n$ identically prepared qubits is then 
$$
|\Psi_{\bm u}^n\rangle = |\psi_{{\bm u}/\sqrt{n}}\rangle^{\otimes n}.
$$

We now describe the \emph{Gaussian shift model} which approximates the i.i.d. qubit model in the large sample size limit.
A one mode cv system is specified by canonical coordinates
 $Q,P$  satisfying $[Q,P] =i\mathbb{1}$. These  act on a Hilbert space $\mathcal{H}$ with a orthonormal Fock basis 
 $\{|k\rangle :k\geq 0\}$, such that $a|k\rangle = \sqrt{k}|k-1\rangle$, where $a$ is the annihilation operator $a= (Q+iP)/\sqrt{2}$. The coherent states are defined as 
 $$
 |z\rangle:=e^{-|z|^2/2}\sum_{k=0}^\infty
 \frac{z^k}{\sqrt{k!}} |k\rangle ,\quad z\in \mathbb{C}
 $$
and satisfy
$\langle z|a|z\rangle = z$. In the coherent state $\ket{z}$, the canonical coordinates $Q,P$ have normal distributions $N\left(\sqrt{2}{\rm Re}z,\, \frac{1}{2}\right)$ and $N\left(\sqrt{2}{\rm Im}z, \,\frac{1}{2}\right)$, respectively. In addition, the number operator $N :=a^*a$ has Poisson distribution with intensity $|z|^2$.

We now outline two approaches to QLAN embodying different ways to express the closeness of the multiqubits model  $\{ |\Psi_{\bm u}^n\rangle : |{\bm u}|\leq n^{\epsilon} \} $ to the quantum Gaussian shift model 
 $\{|u_1+ iu_2\rangle : |{\bm u}|\leq n^{\epsilon} \}$. 
By applying the QCLT \cite{Petz2008}, one shows that the collective spin in the `transverse' directions x and y have asymptotically normal distributions
\begin{eqnarray*}
\frac{1}{\sqrt{2n}}S_x(n):= \frac{1}{\sqrt{2n}} \sum_{i=1}^n \sigma_x^{(i)}  & \to & N\left(\sqrt{2}u_1, \frac{1}{2}\right) \\ 
\frac{1}{\sqrt{2n}}S_y(n):= \frac{1}{\sqrt{2n}} \sum_{i=1}^n \sigma_y^{(i)}  & \to & N\left(\sqrt{2}u_2, \frac{1}{2}\right) 
 \end{eqnarray*}
where the arrows represent convergence in distribution with respect to $|\Psi^n_u\rangle$. In fact the convergence holds for the whole `joint distribution' which we write symbolically as
$$
\left(\frac{1}{\sqrt{2n}}S_x(n), \frac{1}{\sqrt{2n}}S_y(n)\,:\,|\Psi_{\bm{u}}^n\rangle \right)\to 
\left(Q,P \,:\, |u_1+ iu_2\rangle  \right).
$$
So, in what concerns the collective spin observables, the joint qubit state converges to a coherent state whose displacement is linear with respect to the local rotation parameters.

An alternative way to formulate the convergence to the Gaussian model is to show that the two models can be mapped into each other by means of physical operations (quantum channels) with asymptotically vanishing error, uniformly over all local parameters $|{\bm u}|\leq n^\epsilon$. Consider the isometric embedding of the symmetric subspace 
$\mathcal{S}_n =(\mathbb{C}^2)^{\otimes_s ^n} $  of  the tensor product 
$(\mathbb{C}^2)^{\otimes n}$ into the Fock space 
 \begin{eqnarray*}
 V_n := 
 \mathcal{S}_n &\to& \mathcal{H}\\
  |k,n\rangle &\mapsto &|k\rangle
 \end{eqnarray*}
where $|k,n\rangle$ is the normalised projection of the vector
$|1\rangle^{\otimes k} \otimes |0\rangle^{\otimes n-k} $ onto $\mathcal{S}_n$. 
The following limits hold 
\cite{LAN1}
\begin{eqnarray*}
&&\lim_{n\to \infty}
\sup_{|{\bm u}|\leq n^{1/2-\eta}}
\left\|V_n |\Psi_{\bm u}^n\rangle -|u_1+iu_2\rangle\right\|=0,\\
&&\lim_{n\to \infty}
\sup_{|{\bm u}|\leq n^{1/2-\eta}}
\left\||\Psi_{\bm u}^n\rangle -V_n^*|u_1+iu_2\rangle\right\| =0.
\end{eqnarray*}
where $\eta>0$ is an arbitrary fixed parameter. In particular, for $\eta<1/2-\epsilon$ the supremum is taken over regions that contain all $|{\bm u}|
<n^{\epsilon}$, which means that the Gaussian approximation holds uniformly over all values of the local parameter arising from the preliminary estimation step.

We now move to describe the relationship between qubit rotations and Gaussian displacements in the QLAN approximation.
Let $U^{n}(\bm{\Delta}):= U(n^{-1/2}\bm{\Delta})^{\otimes n}$ be a qubit rotation by small angles ${\bm \delta}:= n^{-1/2}\bm{\Delta}$ and let $D(\bm{\Delta}) = \exp(-i\sqrt{2}( \Delta_1 P - \Delta_2 Q))$ be the corresponding displacement operator. Then the following commutative diagram shows how QLAN translates (small) rotations into displacements (asymptotically with $n$ and uniformly over local parameters)
\begin{center}
$$
\begin{CD}
|\Psi_{\bm{u}}^n\rangle 
@> 
V_n
>>
|u_1+iu_2\rangle 
\\
@VV
U^n(-\bm{\Delta})
V   
@VV
D(-\bm{\Delta})
V\\
|\Psi_{\bm{u}-\bm{\Delta}}^n \rangle
@>
V_n
>> 
|u_1-\Delta_1+i(u_2-\Delta_2)\rangle
\end{CD}
$$
\end{center}

Notice that also the vertical arrow on the left of the diagram is true asymptotically with $n$ and has to be intended as $\lim_{n\rightarrow +\infty}\|U^n(-\bm{\Delta})|\Psi_{\bm{u}}^n\rangle-|\Psi_{\bm{u}-\bm{\Delta}}^n \rangle\|=0$.

Finally, we note that while the transverse spin components $S_x, S_y$ converge to the canonical coordinates of the cv mode, the collective operator related to the total spin in direction $z$ becomes the number operator $N$. 
Indeed if
$
E_n:= (n\mathbb{1} - S_z)/2
$ 
then
$
E_n |k,n\rangle = k |k,n\rangle
$
so $E_n = V^*_nNV_n$. This correspondence can be extended to small rotations of such operators. Consider the collective operator 
$$
N^n_{\bm{\Delta}}:= 
U^n(\bm{\Delta})(n\mathbb{1}-S_z)U^n(-\bm{\Delta})
$$ 
which corresponds to measuring individual qubits in the basis
$$
|v^{\bm \delta}_0\rangle = U(\bm{\delta})|0\rangle , \quad
|v^{\bm \delta}_1\rangle = U(\bm{\delta})|1\rangle
$$
and adding the resulting $\{0,1\}$ outcomes. In the limit Gaussian model, this corresponds to measuring the displaced number operator 
$N_{\bm{\Delta}}= D(\bm{\Delta})N D(-\bm{\Delta})$. More precisely, the binomial distribution $p^{(n)}_{{\bm u},\bm{\Delta}}$ of $N^n_{\bm{\Delta}}$ computed in the state $|\Psi^n_{\bm{\Delta}}\rangle$ converges to the Poisson distribution of $N_{\bm{\Delta}}$ with respect to the state $|u_1+iu_2\rangle$
$$
\lim_{n\to \infty}
p^{(n)}_{\bm{u},\bm{\Delta}}(k) =
e^{-\|\bm{u}-\bm{\Delta}\|^2}\frac{\|\bm{u}-\bm{\Delta}\|^{2k}}{k!},\qquad k\geq 0.
$$

\subsection{Asymptotic perspective on displaced-null measurements via local asymptotic normality}

We now offer a complementary picture  of the displaced-null measurement schemes outlined in section \ref{sec.one.parameter}, using the QLAN theory of section \ref{sec:LAN}. In the Gaussian limit, the qubits ensemble is replaced by a single coherent state while the qubit null measurement becomes the number operator measurement. The Gaussian picture will illustrate why the null measurement does not work and how this problem can be overcome by using the displaced null strategy.

Consider first the one dimensional model given by equation \eqref{eq:qubit.model}, and let us assume for simplicity that the preliminary estimator takes the value $\tilde{\theta}_n=0$. The general case can be reduced to this by a rotation of the block sphere.

We write $\theta$ in terms of the local parameter $u$ as $\theta=\tilde{\theta}_n+ u/\sqrt{n} = u/\sqrt{n}$ with $|u|\leq n^{\epsilon}$. By employing QLAN we map the i.i.d. model $|\Psi_u^n\rangle$ (approximately) into the limit coherent state model $|u\rangle$. At $\tilde{\theta}_n =0$ the null measurement for an individual qubit is that of $\sigma_z$ (standard basis). On the ensemble level this translates into measuring the collective spin observable $S_z$, which converges to the number operator $N$ in the limit model, cf. section \ref{sec:LAN}. Indeed, at $u=0$ the coherent state is the vacuum which is an eigenstate of $N$.   

As in the qubit case, the number measurement suffers from the non-identifiabilty issue since both $|\pm u\rangle$ states produce the same Poisson distribution (see panels c. and d. in Figure \ref{Fig:nonidentifiability}). 
 
We now interpret the displaced-null measurement in the QLAN picture. Recall that if we measure each qubit in the rotated basis 
$$
|v^{\delta_n}_0\rangle = U((\delta_n,0))|0\rangle , \quad
|v^{\delta_n}_1\rangle = U((\delta_n,0))|1\rangle,
$$
then the non-identifiability is lifted and the parameter can be estimated optimally. The collective spin in this rotated basis is
$$
N^n_{(\Delta_n,0)}:= 
U^n((\Delta_n,0))(\mathbb{1}-S_z)U^n((-\Delta_n,0)).
$$ 
where $\Delta_n = n^{1/2}\delta_n= n^{3\epsilon}$ and by the QLAN correspondence it maps to the displaced number operator 
$$
N_{(\Delta_n,0)} = D((\Delta_n,0))ND((-\Delta_n,0)).
$$ 
In this case the distribution with respect to the state $|u\rangle$ is ${\rm Poisson}(|\Delta_n-u|^2)$, and since $\Delta_n = n^{3\epsilon}\gg |u|$, the model is identifiable, i.e. the correspondence the intensity $|\Delta_n-u|^2$ and $u$ is one-to-one (see panels g. and h. in Figure \ref{Fig:nonidentifiability}). Moreover, for large $n$ the measurement provides an optimal estimator of $u$. Indeed by writing 
\begin{equation}
\label{eq:homodyne.approx}
N_{(\Delta_n,0)} = (a-\Delta_n\mathbf{1})^*(a-\Delta_n\mathbf{1} ) = a^*a 
-\Delta_n (a+a^*) + \Delta_n^2\mathbf{1}
\end{equation}
and noting that the term $a^*a$ is $O(n^{2\epsilon})$ (for $|u|\leq n^\epsilon$) we get
\begin{equation}
\label{eq.N.displaced}
\frac{1}{2}\Delta_n-\frac{1}{2\Delta_n} N_{(\Delta_n,0)}  = \frac{Q}{\sqrt{2} }+ o(1)
\end{equation}
where we recover the well known fact that quadrature (homodyne) measurement can be implemented by displacement and counting. 
By measuring the operator on the lefthand side of \eqref{eq.N.displaced} we obtain an asymptotically optimal estimator of $u$, which corresponds to the qubit estimator constructed in section \ref{sec.one.parameter}.

\section{Multiparameter estimation for pure qudit states} \label{sec:qdits}

In this section we discuss the general case of a multidimensional statistical model for a $d$-dimensional quantum system (qudit).

The first two subsections review the theory of multiparameter estimation and how QLAN is used to establish the asymptotic achievability of the Holevo bound. This circle of ideas will be helpful in understanding the results in the following sections which deal with displaced-null estimation of qudit models. In particular we show that displaced-null measurements achieve the following:\begin{itemize}
    
     \item[1.] The Holevo bound for completely unknown pure state models where the figure of merit is given by the Bures distance (Proposition \ref{prop:optlqd});\item[2.] The quantum Cram\'er-Rao bound in statistical models where the parameters can be estimated simultaneously (Proposition \ref{th:QCRB-achievability-null}), providing an operational implementation for the results in \cite{NullQFI1,NullQFI2,NullQFI3};
    \item[3.] The Holevo bound for completely general pure state models and figures of merit (Theorem \ref{thm:dnmgeneral}).
\end{itemize}
Since the two stage strategy is discussed in detail in Appendix \ref{sec:adaptive.argument}, we do not give a detailed account of the preliminary stage and assume that the parameter has been localised in a neighbourhood of size $n^{-1/2+\epsilon}$ around a preliminary estimator with probability that converges to $1$ exponentially fast in $n$.

\subsection{Multiparameter estimation}

Let us consider the problem of estimating the parameter $\bm{\theta}$ belonging to an open set $\Theta \subseteq \mathbb{R}^m$ given the corresponding family of states $\rho_{\bm{\theta}}$ of a $d$-dimensional quantum system. Given a measurement with POVM ${\cal M}:=\{M_0,\dots,M_p\}$, the CFI matrix is given by 
$$
I_{\cal M}(\bm{\theta})_{ij}=
\mathbb{E}_\theta\left [ \frac{\partial \log p_{\bm \theta}}{\partial \theta_i} \frac{\partial \log p_{\bm \theta}}{\partial \theta_j} \right ]. 
$$
The QFI matrix is 
$F(\theta)_{ij} = {\rm Tr} (\rho_{\bm \theta} 
\mathcal{L}^i_{\bm \theta} \circ \mathcal{L}^j_{\bm \theta})$ where $\mathcal{L}^j_{\bm \theta}$ are the SLDs satisfying $\partial_i\rho_{\bm \theta}= \mathcal{L}^j_{\bm \theta}\circ \rho_{\bm \theta}$ and $\circ$ denotes the symmetric product $A\circ B = (AB+BA)/2$. If $\hat{\bm \theta}$ is an unbiased estimator then the multidimensional QCRB states that its covariance matrix is lower bounded as
\begin{equation}\label{eq:multidimQCRB}
{\rm Cov}_{{\bm \theta}}(\hat{\bm{\theta}}):=\mathbb{E}_{\bm \theta}[(\hat{\bm{\theta}}-{\bm \theta})(\hat{\bm{\theta}}-{\bm \theta})^T]\geq I_{\cal M}(\bm{\theta})^{-1}\geq F(\bm \theta)^{-1}.
\end{equation}
In general, the second lower bound is not achievable even asymptotically. Roughly, this is due to the fact that the optimal measurements for estimating the different components of ${\bm \theta}$ are incompatible with each other. The precise condition for the achievability of the QCRB is \cite{Ragy16,RafalReview}
\begin{equation} \label{eq:achiev}
 \Tr(\rho_{\bm \theta}[{\cal L}^{i}_{\bm \theta}, {\cal L}^{j}_{\bm \theta}])=0, \quad i,j=1,\dots, m.   
\end{equation}
which in the case of a pure statistical model $\ket{\psi_{\bm \theta}}$ becomes
\begin{equation} \label{eq:pureach}
{\rm Im}(\langle \partial_{\theta_i}\psi|\partial_{\theta_j}\psi \rangle)=0, \quad i,j=1,\dots, m.
\end{equation}
When the QCRB is not achievable, one may look for measurements that optimise a specific figure of merit. The simplest example is that of a quadratic form with positive weight matrix $W$ 
\[
R_W(\hat{\bm \theta}, {\bm \theta}) = 
\mathbb{E}_\theta[(\hat{\bm{\theta}}-\bm{\theta})^TW(\hat{\bm{\theta}}-\bm{\theta})]
\]
This choice is not as restrictive as it may seem since many interesting loss functions have a local quadratic approximation which determines the leading term of the asymptotic risk. A straightforward lower bound on $R_W$ can be obtained by taking the trace with $W$ in \eqref{eq:multidimQCRB} but this bound is not achievable either. A better one is the Holevo bound \cite{Holevo2011}
\begin{eqnarray}\label{eq.Holevo}
&&\Tr(W{\rm Cov}_{{\bm \theta}}(\hat{\bm{\theta}})) \geq {\cal H}^W(\bm{\theta})\\
&:=&\min_{{\bf X}} {\rm Tr} ({\rm Re} (Z({\bf X})) W) +
{\rm Tr} \left| W^{1/2} {\rm Im} (Z({\bf X}) )W^{1/2}\right|
\nonumber
\end{eqnarray}
where the minimum runs over m-tuples of selfadjoint operators $\bm{X} =(X_1,\dots, X_m)^T$ acting on the system, which satisfy the constraints $\Tr(\nabla_{\bm \theta}\rho_{\bm \theta} \bm{X}^T)=\bm{1}$, and 
${Z}({\bf X})$ is the $m\times m$ complex matrix with 
entries ${Z}({\bf X})_{ij} = {\rm Tr}(\rho_{\bm \theta}X_i X_j)$.
Unlike the multidimensional QCRB, the Holevo bound is achievable asymptotically in the i.i.d. scenario \cite{LAN4,RafalReview}. In the next two section we will give an intuitive explanation based on the QLAN theory.

\subsection{Gaussian shift models and QLAN}
\label{sec:LAN-qudits}

Quantum Gaussian shift models play a fundamental role in quantum estimation theory \cite{Holevo2011}. Such models are fairly tractable in that the Holevo bound is achievable with simple linear measurements. More importantly, Gaussian shift models arise as limits of  i.i.d. models in the QLAN theory, which  offers a recipe for constructing estimators which achieve the Holevo bound asymptotically in the i.i.d. setting. 

For the purposes of this work, the asymptotic Gaussian limit offers a clean intuition about the working of the proposed estimators, but is not explicitly used in deriving the mathematical results. We therefore keep the presentation on a intuitive level and refer to the papers \cite{LAN4,RafalReview,LAN6} for more details. 

In this subsection we recall the essentials of multiparameter estimation in a pure quantum Gaussian shift model and of QLAN theory for pure states of finite dimensional quantum systems, extending what we already presented in the case of qubits in Section \ref{sec:qubitLAN}.

\subsubsection{Achieving the Holevo bound in a pure Gaussian shift model} 
\label{sec:Holevo.Gaussian.shift}

Consider a cv system consisting of $(d-1)$ modes. The corresponding Hilbert space ${\cal H}$ is the multimode Fock space which will be identified with the tensor product of $d-1$ copies of the single mode spaces, with ONB given by the Fock vectors $|{\bf k}\rangle := |k_1\rangle\otimes \dots \otimes |k_{d-1}\rangle$, with ${\bf k}=(k_1,\dots, k_{d-1}) \in \mathbb{N}^{d-1}$. The creation/annihilation operators, canonical coordinates and number operator of the individual modes are denoted $a_i^*$, $a_i$, $Q_i=(a_i+a_i^*)/\sqrt{2}$, $P_i=(a_i-a_i^*)/(\sqrt{2}i)$ and $N_i=a^*_ia_i$ for $i=1,\dots, {d-1}$.

We denote by  $\ket{\bm{z}}=\ket{z_1} \otimes \dots \otimes \ket{z_{d-1}}$ the multimode coherent states with $\bm{z}=(z_1,\dots, z_{d-1}) \in \mathbb{C}^{d-1}$, so that $Q_i$ and $P_i$ have normal distribution with variance $1/2$ and mean $\sqrt{2}{\rm Re}(z_i)$ and $\sqrt{2}{\rm Im}(z_i)$, respectively, while $N_i$ have Poisson distributions with intensities $|z_i|^2$. We denote by ${\bf R}:= (Q_1,\dots, Q_{d-1}, P_1, \dots, P_{d-1})^T$ the vector of canonical coordinates which satisfy commutation relations $[R_i,R_j]=i\Omega_{ij}$ where $\Omega$ is the $2(d-1)\times 2(d-1)$ symplectic matrix 
$$
\Omega=\begin{pmatrix} \bf{0} & \bf{1} \\
-\bf{1} & \bf{0} \end{pmatrix}.$$

Let ${\bm u}\in \mathbb{R}^m$ be an unknown parameter and let $\mathcal{G}$ be the \emph{quantum Gaussian shift model}
\[
\mathcal{G}:=\{\ket{C\bm{u}} : {\bm u}\in \mathbb{R}^m\}
\]
where $C:\mathbb{R}^m \rightarrow \mathbb{C}^{d-1}$ is a linear map. The goal is to estimate ${\bm u}$ optimally for a given figure of merit.

Denoting the entries of $C$ as $C_{k,j}=c^q_{kj}+ic^p_{kj}$ for $k=1,\dots, d-1$ and $j=1,\dots, m$, we call $D$ the real $2(d-1)\times m$ matrix with elements $D_{k,j}=\sqrt{2} c^q_{kj}, D_{k+(d-1),j}=\sqrt{2} c^p_{kj} $ with $k=1,\dots d-1$; notice that $\mathbb{E}_{\bm u}[{\bf R}]=D{\bm u}$. We remark that $\bm{u}$ is identifiable if and only if $D$ has rank equal to $m$. The quantum Fisher information matrix is independent of ${\bm u}$ and is given by $F=2 D^T D>0$. 

Let us first consider the case when the QCRB is achievable (in which case it leads to the Holevo bound by tracing with $W$). Condition \eqref{eq:pureach} amounts to $C^*C$ being a \emph{real} matrix which is equivalent to $D^T \Omega D=0$ and the fact that the generators of the Gaussian shift model $\mathcal{G}$
$$
S_j=\sum_{k=1}^{d-1}c_{kj}^q P_k - c_{kj}^p Q_k = (D^T \Omega{\bf R})_j,\quad j=1,\dots m
$$
commute with each other.

An optimal unbiased measurement consists of simultaneously measuring the commuting operators $\bm{Z}=\Sigma^{-1} D^T {\bf R}$, where $\Sigma:=D^TD= F/2$. Indeed
\begin{itemize}
\item $[{\bf Z},{\bf Z}^T]=\Sigma^{-1} D^T\Omega D \Sigma^{-1}=0$ (commutativity),
\item $\mathbb{E}_{\bm{u}}[{\bf Z}]=\Sigma^{-1} D^TD{\bm u}={\bm u}$ (unbiasedeness),
\item ${\rm Cov}_{\bm u}({\bf Z})=\Sigma^{-1}/2$ (achieves the QCRB).
\end{itemize}
Consider now the case when the QCRB is not achievable. For a given positive weight matrix $W$, the corresponding Holevo bound is given by
\begin{eqnarray} \label{eq:linHB}
&&{\rm Tr} ({\rm Cov}_{\bm u}(\hat{\bm u}) W)
\geq
{\cal H}^W(\mathcal{G})\\
&&:=\min_B \frac{1}{2} \left ( \Tr(WBB^T) + \Tr(|\sqrt{W} B \Omega B^T \sqrt{W}|)\right) ,
\nonumber
\end{eqnarray}
where ${\hat {\bm u}}$ is an unbiased estimator and the minimum is taken over real $m\times 2(d-1)$ matrices $B$ such that $BD=\bf{1}$. The Holevo bound can be saturated by coupling the system with another ancillary $(d-1)$-dimensional cv system in the vacuum state and with position and momentum vector that we denote by ${\bf R}^\prime=(Q^\prime_1, \dots, Q^\prime_{d-1}, P^\prime_1,\dots, P_{d-1}^\prime)^T$. In order to estimate ${\bm u}$, we consider a vector of quadratures of the form $\bm{Z}=B{\bf R}+B^\prime {\bf R}^\prime$ for $B,B^\prime$ real $m\times (d-1)$ matrices and we require that $\bm{Z}$ is unbiased and belongs to a commutative family:
\begin{itemize}
\item $B^\prime \Omega B^{\prime T}=-B \Omega B^T$ (commutativity of the $Z_i$'s),
\item $\langle
C{\bm u} \otimes {\bf 0}|{\bf Z}|C{\bm u} \otimes {\bf 0}\rangle ={\bm u} \Leftrightarrow BD= \mathbf{1}$   (unbiasedeness).
\end{itemize}
The corresponding risk is
\[
R_{\bf Z} =  
\frac{1}{2} \left ( \Tr(WBB^T) + \Tr(WB^\prime B^{\prime T})\right)
\]
and by minimizing over $B$ and $B^\prime$ one obtains the expression of the Holevo bound in Equation \eqref{eq:linHB}. Therefore, given a minimiser $(B^\star,B^{\prime \star})$, the corresponding vector of quadratures $\bm{Z}^\star$ is an optimal estimator for any $\bm{u}$. 

To summarise, in the pure Gaussian shift model there always exists a set of commuting quadratures ${\bf Z}^\star$ of a doubled up system that achieves the Holevo bound; in the case when the QCRB is achievable, one does not need an ancilla.

For the discussion in section \ref{subsec:HOdispl} it is useful to consider the following implementation of the optimal measurement. Let $(\tilde{Q}_1,\dots, \tilde{Q}_{2(d-1)}, \tilde{P}_1,\dots ,\tilde{P}_{2(d-1)})$ be a choice of vacuum modes of the doubled-up cv system such that 
$
{\bf Z}^\star=T\tilde{\bf Q}
$
where $\tilde{\bf Q}=(\tilde{Q}_1, \dots, \tilde{Q}_m)^T
$ 
for some $m\times m$ invertible matrix $T$ with real entries. Up to classical post-processing, measuring $Z_1,\dots , Z_m$ is equivalent to measuring $\tilde{Q}_1, \dots, \tilde{Q}_m$. 
If we denote the outcomes of the latter by 
$\tilde{\bf q}:=(\tilde{q}_1,\dots \tilde{q}_m)$ then an optimal unbiased estimator of ${\bf u}$ is given by
$\hat{\bf u} = T\tilde{\bf q}$.

\subsubsection{QLAN for i.i.d. pure qudit models} \label{sec.QLAN.qudits}

The idea of QLAN is that the states in a shrinking neigbourhood of a fixed state can be approximated by a Gaussian shift model. In the next section we will show how this can be used as an estimation tool, but here we describe the general structure of QLAN for \emph{pure} qudit states.

We choose 
the centre of the neighbourhood to be the first vector of an ONB $\{|0\rangle, \dots, |d-1\rangle\}$, and parametrise the local neighborhood of states around $\ket{0}$ as
\begin{equation}\label{eqn:psiu}
    \ket{\psi_{{\bm u}/\sqrt{n}}}=\exp \left ( -i\sum_{k=1}^{d-1} (u_1^k \sigma_y^k - u_2^k \sigma_x^k)/\sqrt{n} \right ) \ket {0}
\end{equation}
for $\bm{u}=({\bm u}_1, \bm{u}_2) \in \mathbb{R}^{2(d-1)}$, $\|{\bm u}\| \leq n^{\epsilon}$, $\sigma_y^k=i \ket{k}\bra{0} -i \ket{0}\bra{k}$ and $\sigma_x^k=\ket{k}\bra{0} + \ket{0}\bra{k}$. As in the qubit case, the appropriately rescaled collective variables converge to position, momentum and number operators in `joint distribution' with respect to $\ket{\Psi^n_{\bm u}}:= \ket{\psi_{{\bm u}/\sqrt{n}}}^{\otimes n}$
\begin{align*}
\left(\frac{1}{\sqrt{2n}}S^k_x(n), \frac{1}{\sqrt{2n}}S^k_y(n),n\mathbb{1}-S^k_z(n) \,:\,|\Psi_{\bm u}^n\rangle \right )\\
\to \left(Q_k,P_k,N_k \,:\, \ket{{\bf z}=\bm{u}_1+i\bm{u}_2}  \right),
\end{align*}
where $S^k_\alpha(n)=\sum_{l=1}^n(\sigma_\alpha^k)^{(l)} $ for $\alpha \in \{x,y,z\}$. More generally, we have a (real) linear map between the orthogonal complement of $|0\rangle$ and Gaussian quadratures: for every vector $\ket{v}=\sum_{k=1}^{d-1}(v_x^k+iv_y^k) \ket{k}$ we construct the corresponding Pauli operator $\sigma(v)= \ket{v}\bra{0}+\ket{0}\bra{v}$ and the following CLT  holds
\begin{equation} \label{eq:QCLTcorr}
\left( 
\frac{1}{\sqrt{2n}}S_v(n) :\ket{\Psi^n_u } \right) \to 
\left( X(v): \ket{{\bf z}=\bm{u}_1+i\bm{u}_2} \right)
\end{equation}
where $S_v(n):= \sum_{l=1}^n \sigma(v)^{(l)}$ and 
$ X(v):= \sum_{k=1}^n v_x^k Q_k + v_y^k P_k$.

In addition to the QCLT, the following strong QLAN statement holds: the statistical model $\{\ket{\Psi_{\bm u}^n}\}$ can be approximated by a pure Gaussian shift model in the sense that
\begin{eqnarray}
\label{eq:LeCam.qudit.1}
&&\lim_{n\to \infty}
\sup_{|{\bm u}|\leq n^{1/2-\eta}}
\left\|V_n |\Psi_{\bm u}^n\rangle -|{\bm u}_1 +i{\bm u}_2 \rangle\right\|=0,\\
\label{eq:LeCam.qudit.2}
&&\lim_{n\to \infty}
\sup_{|{\bm u}|\leq n^{1/2-\eta}}
\left\||\Psi_{\bm u}^n\rangle -V_n^*|{\bm u}_1+i{\bm u}_2\rangle\right\| =0
\end{eqnarray}
for any fixed $0<\eta<1/2$. $V_n$ is the isometric embedding of the symmetric subspace $\mathcal{S}^{(n)}_d := \left(\mathbb{C}^d \right)^{\otimes_s n}$ into a $(d-1)$-mode Fock space ${\cal H}$ (cf. previous section) characterised by  
\begin{eqnarray}
 V_n :
 \mathcal{S}^{(n)}_d &\to& \mathcal{H}\nonumber\\
  |{\bf k};n\rangle &\mapsto & |{\bf k} \rangle
  \label{eq:isometry}
\end{eqnarray}
where $\ket{{\bf k};n}$ denotes the normalised vector obtained by 
symmetrising 
$$
\ket{1}^{\otimes k_1} \otimes \dots \otimes\ket{{d-1}}^{\otimes k_{d-1}} \otimes \ket{0}^{\otimes (n-(k_1+\dots+ k_{d-1}))}.
$$
As in the qubits case, the Gaussian approximation maps small rotations into displacements of the coherent states. Consider collective qubit rotations by small angles ${\bm \delta}:  = n^{-1/2}\bm{\Delta}$
$$U^{n}(\bm{\Delta}):= \left( \exp\left(-i\sum_{k=1}^{d-1} ( n^{-1/2}\Delta_1^k \sigma^k_{y} -n^{-1/2}\Delta_2^k \sigma^k_{x} )\right)\right )^{\otimes n}$$
and the corresponding displacement operators
$$D(\bm{\Delta}) = \exp\left(-i\sum_{k=1}^{d-1} ( \Delta_1^k P_k -\Delta_2^k Q_k )
\right).$$
The diagram below conveys the asymptotic covariance between rotations and displacements, where the arrows should be interpreted in the same way as the strong convergence equations \eqref{eq:LeCam.qudit.1} and \eqref{eq:LeCam.qudit.2}
$$
\begin{CD}
|\Psi_{\bm u}^n\rangle 
@> 
V_n
>>
|{\bm u}_1+i{\bm u}_2\rangle 
\\
@VV
U^n(-{\bm \Delta})
V   
@VV
D(-{\bm \Delta})
V\\
|\Psi_{{\bm u}-{\bm \Delta}}^n \rangle
@>
V_n
>> 
|{\bm u}_1-{\bm \Delta}_1+i({\bm u}_2-{\bm \Delta}_2)  \rangle
\end{CD}
$$

A similar correspondence holds for measurements with respect to rotated bases and displaced number operators
$$
\begin{CD}
|\Psi_{\bm u}^n\rangle 
@> 
V_n
>>
|{\bm u}_1+i{\bm u}_2\rangle 
\\
@VV
N^i_{\bm{\Delta}}(n)
V   
@VV
N^i_{\bm{\Delta}}
V\\
p^n({\bm u}, {\bm \Delta})
@>
V_n
>> 
{\rm Poisson}(\| {\bm u}_1-{\bm \Delta}_1+i({\bm u}_2-{\bm \Delta}_2) \|^2)
\end{CD}
$$

 More precisely, suppose we measure the commuting family of operators $\{N^i_{\bm{\Delta}}(n),  i=1,\dots ,d-1\}$ given by 
 $$
 N^i_{\bm{\Delta}}(n):= 
 U^n(-\bm{\Delta})(n\mathbb{1}-S^i_z(n))U^n(\bm{\Delta}) 
\quad i=1,\dots, d-1,
 $$ 
 which amounts to measuring individual qudits in the basis
 $$
 |v^{\bm \delta}_i\rangle = U(\bm{\delta})|i\rangle \quad i=0,\dots,d-1
 $$
 and collecting the total counts for individual outcomes in $\{0,\dots, d-1\}$. In the Gaussian model this corresponds to measuring the displaced number operators 
 $N^i_{\bm{\Delta}}= D(-\bm{\Delta}) N^i D(\bm{\Delta})$, and by QLAN, the multinomial distribution $p^n(u,\bm{\Delta})$ of $N^i_{\bm{\Delta}}(n)$  converges to the law of the vector of Poisson random variables obtained by measuring $N^i_{\bm{\Delta}}$ with respect to the state $|\bm{u}_1+i\bm{u}_2\rangle$.

 \subsection{Achieving the Holevo bound for pure qudit states via QLAN} \label{sec:LANHolevo}

We will now treat a general pure states statistical model and show how one can use QLAN to achieve the Holevo bound \eqref{eq.Holevo} asymptotically with the sample size. Let $\ket{\psi_{\bm{\theta}}}$ be a statistical model where $\bm{\theta}=(\theta^j)_{j=1}^m$ belongs to some open set $ \Theta \subset \mathbb{R}^m$ with $m \leq 2 (d-1)$ and the parameter is assumed to be identifiable. Given an ensemble of $n$ copies of the unknown state, we would like to devise a measurement strategy and estimation procedure which attains the smallest average error (risk), asymptotically with $n$. For mixed states, a general solution has been discussed in \cite{RafalReview} where it is shown how the Holevo bound can be achieved asymptotically using the QLAN machinery. Here we adapt this method to the case of pure state models. 

In brief, the procedure involves three steps. We first use $\tilde{n}=n^{1-\epsilon}$ samples to produce a preliminary estimator $\tilde{\bm{\theta}}_n$ and write ${\bm \theta} =\tilde{\bm \theta}_n + {\bm u}/\sqrt{n}$ where ${\bm u}$ is the local parameter satisfying $\|{\bm u}\|\leq n^{\epsilon}$ (with high probability). We chooose an ONB 
$\{|0\rangle, \dots |d-1\rangle\}$ such that 
$|\psi_{\tilde{\bm{\theta}}_n}\rangle = |0\rangle$ and use the QLAN isometry $V_n$ (cf. equation \ref{eq:isometry}) to map the remaining qubits $|\Psi^n_{\bm u}\rangle := |\psi_{\tilde{\bm{\theta}}_n+{\bm u}/\sqrt{n}}\rangle^{\otimes n} $ approximately into the Gaussian state $|C{\bm u}\rangle$. We then use the method described in section \ref{sec:Holevo.Gaussian.shift} to estimate the unknown parameter and achieve the Holevo bound.

We start by expressing the local states as small rotations around $|0\rangle$
\begin{eqnarray} \label{eq:genmod}
&&|\psi_{\bm{\tilde{\theta}}_n + {\bm u}/\sqrt{n}}\rangle \\
&&= 
\exp\left(-i 
\sum_{k=1}^{d-1}
\left(
f^q_{k}\left(
\frac{{\bm u}}{\sqrt{n}}
\right)
\sigma^k_{y} - 
f^p_{k}
\left(
\frac{{\bm u}}{\sqrt{n}}
\right) \sigma^k_{x}\right)\right)|0\rangle
\nonumber
\end{eqnarray}
where $f^q_k$ and $f^p_k$ are real functions and $\sigma_y^k$ and $\sigma_x^k$ are the Pauli matrices of equation \eqref{eqn:psiu}. We now `linearise' the generators of the rotations  and define
\begin{equation} \label{eq:linmod}
|\tilde{\psi}_{{\bm u}/\sqrt{n}}\rangle : =
\exp\left(-i\sum_{j=1}^m  u_j S_j/\sqrt{n}
\right) |0\rangle 
\end{equation}
where 
$$
S_j = \sum_{k=1}^{d-1}(c^q_{kj}\sigma^k_{y} - c^p_{kj} \sigma^k_{x}), \quad 
c^{q,p}_{kj}= \left.\partial_j f^{q,p}_k({\bm u})\right|_{{\bm u} ={\bf 0}}.
$$
We denote the ensemble state of the linearised model $|\widetilde{\Psi}^n_{\bm u}\rangle := |\tilde{\psi}_{{\bm u}/\sqrt{n}}\rangle^{\otimes n}$.
The following lemma shows that the original and the `linearised' models are locally undistinguishable in the asymptotic limit. 
\begin{lemma} \label{lem:linear}
With the above notations if $\epsilon< 1/6$ one has
$$
\lim_{n\to \infty}\sup_{\|{\bm u}\|\leq n^\epsilon} 
\| |\Psi^n_{\bm u}\rangle\langle \Psi^n_{\bm u}|-  |\widetilde{\Psi}^n_{\bm u}\rangle\langle \widetilde{\Psi}^n_{\bm u}| \|_1 =0
$$
where $\|\cdot \|_1$ denotes the trace distance.
\end{lemma}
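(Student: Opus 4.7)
The plan is to bound the single-copy fidelity between $\ket{\psi_{\bm u/\sqrt n}}$ and $\ket{\tilde\psi_{\bm u/\sqrt n}}$ through the operator-norm distance of their Hamiltonian generators, and then lift the estimate to $n$-fold tensor products using the exact formulas available for pure states. Set $\bm v:=\bm u/\sqrt n$ and introduce the self-adjoint generators
\begin{equation*}
A(\bm v)=\sum_{k=1}^{d-1}\bigl(f^q_k(\bm v)\sigma_y^k-f^p_k(\bm v)\sigma_x^k\bigr),\quad
B(\bm v)=\sum_{k=1}^{d-1}\sum_{j=1}^m v_j\bigl(c^q_{kj}\sigma_y^k-c^p_{kj}\sigma_x^k\bigr),
\end{equation*}
so that $\ket{\psi_{\bm u/\sqrt n}}=e^{-iA(\bm v)}\ket{0}$ and $\ket{\tilde\psi_{\bm u/\sqrt n}}=e^{-iB(\bm v)}\ket{0}$. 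Because $\ket{\psi_{\tilde{\bm\theta}_n}}=\ket 0$ forces $f^{q,p}_k(\bm 0)=0$ while $c^{q,p}_{kj}=\partial_j f^{q,p}_k\big|_{\bm v=\bm 0}$, Taylor's theorem gives $f^{q,p}_k(\bm v)-\sum_j c^{q,p}_{kj}v_j=O(\|\bm v\|^2)$ uniformly near the origin. Since each $\sigma_x^k,\sigma_y^k$ acts as a Pauli matrix on $\mathrm{span}\{\ket 0,\ket k\}$ and trivially elsewhere, a linear combination $\sum_k(a_k\sigma_y^k+b_k\sigma_x^k)$ has operator norm $\sqrt{\sum_k(a_k^2+b_k^2)}$, so
\begin{equation*}
\|A(\bm v)-B(\bm v)\|_{\rm op}=O(\|\bm v\|^2)=O(n^{2\epsilon-1}).
\end{equation*}

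Next, the Duhamel identity $e^{-iA}-e^{-iB}=-i\int_0^1 e^{-isA}(A-B)e^{-i(1-s)B}\,ds$ combined with unitarity of the one-parameter groups yields $\bigl\|\ket{\psi_{\bm u/\sqrt n}}-\ket{\tilde\psi_{\bm u/\sqrt n}}\bigr\|\le\|A(\bm v)-B(\bm v)\|_{\rm op}=O(n^{2\epsilon-1})$. Using $2\,\mathrm{Re}\,\langle\psi|\tilde\psi\rangle=2-\|\ket\psi-\ket{\tilde\psi}\|^2$ together with $|\langle\psi|\tilde\psi\rangle|^2\ge(\mathrm{Re}\,\langle\psi|\tilde\psi\rangle)^2$ gives
\begin{equation*}
1-\bigl|\langle\psi_{\bm u/\sqrt n}|\tilde\psi_{\bm u/\sqrt n}\rangle\bigr|^2\le\bigl\|\ket{\psi_{\bm u/\sqrt n}}-\ket{\tilde\psi_{\bm u/\sqrt n}}\bigr\|^2=O(n^{4\epsilon-2}).
\end{equation*}

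For pure states $\|\ket\phi\!\bra\phi-\ket{\phi'}\!\bra{\phi'}\|_1=2\sqrt{1-|\langle\phi|\phi'\rangle|^2}$, and fidelity tensorizes: $|\langle\phi^{\otimes n}|\phi'^{\otimes n}\rangle|^{2}=|\langle\phi|\phi'\rangle|^{2n}$. Using Bernoulli's inequality $(1-x)^n\ge 1-nx$ for $x\in[0,1]$ on the per-copy fidelity,
\begin{equation*}
\bigl\|\ket{\Psi^n_{\bm u}}\!\bra{\Psi^n_{\bm u}}-\ket{\widetilde\Psi^n_{\bm u}}\!\bra{\widetilde\Psi^n_{\bm u}}\bigr\|_1\le 2\sqrt{n\bigl(1-|\langle\psi|\tilde\psi\rangle|^2\bigr)}=O\bigl(n^{2\epsilon-1/2}\bigr),
\end{equation*}
which tends to zero uniformly over $\|\bm u\|\le n^\epsilon$ whenever $\epsilon<1/4$; a fortiori when $\epsilon<1/6$.

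The main technical point is to make the Taylor remainder uniform over the growing local region $\|\bm v\|\le n^{\epsilon-1/2}$. This follows from the smoothness of $\bm\theta\mapsto\ket{\psi_{\bm\theta}}$ together with the fact that, by the preliminary localisation step, $\tilde{\bm\theta}_n$ lies in a fixed compact subset of $\Theta$ with probability approaching one, so the second derivatives $\partial_{j_1}\partial_{j_2}f_k^{q,p}$ governing the remainder are bounded uniformly. The explicit bound $\epsilon<1/6$ is stricter than what the present lemma needs; presumably it is imposed by later steps of the three-step adaptive construction described around equations \eqref{eq:genmod}--\eqref{eq:linmod}.
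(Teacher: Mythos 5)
Your proof is correct, and it takes a genuinely different route from the paper's. The paper expands both unitaries in powers of their generators and computes the single-copy overlap directly as $1-i\langle 0|T({\bm u})|0\rangle/n+o(1/n)$, where $T$ collects the second-order Taylor coefficients of $f_k^{q,p}$; the key cancellation is that $T({\bm u})$ is off-diagonal with respect to $|0\rangle$, so its expectation vanishes and the overlap is $1+o(1/n)$, whence $(1+o(1/n))^n\to 1$. The restriction $\epsilon<1/6$ arises precisely from that method: the cubic term of the exponential expansion has size $\|{\bm u}\|^3 n^{-3/2}=n^{3\epsilon-3/2}$ and must be $o(1/n)$. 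You instead Taylor-expand only the scalar coefficient functions, bound the operator-norm distance of the two generators by $O(\|{\bm v}\|^2)$, pass to the states via Duhamel without ever expanding the exponentials, and then use the quadratic relation $1-|\langle\psi|\tilde\psi\rangle|^2\le\|\,|\psi\rangle-|\tilde\psi\rangle\|^2$ before tensorising. Both arguments hinge on the same essential fact --- the linear parts of the two generators coincide because $c_{kj}^{q,p}=\partial_j f_k^{q,p}(0)$ --- but yours trades the explicit diagonal cancellation for the squaring gained in converting vector distance to fidelity deficit, which is enough and yields the conclusion for all $\epsilon<1/4$. Your closing guess is essentially right: the threshold $1/6$ is what the paper's own expansion technique requires (and what is used elsewhere in the adaptive construction), not what the statement intrinsically needs. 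Your uniformity remark about bounding the second derivatives of $f_k^{q,p}$ on a compact set matches the implicit assumption in the paper's proof.
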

The proof of Lemma \ref{lem:linear} can be found in Appendix \ref{sec:lemproof}. Thanks to such uniform approximation  results, one can replace the original model with the linearised one without affecting the asymptotic estimation analysis. We denote the latter by
$$
\mathcal{Q}_n := 
\{ |\widetilde{\Psi}^n_{\bm u} \rangle: \, {\bm u}\ \in \mathbb{R}^m, \|{\bm u}\| \leq n^{\epsilon}\}.
$$

Let us now consider the second ingredient of the estimation problem, the risk (figure of merit). We fix a loss function $L:\Theta\times \Theta\to \mathbb{R}_+$, so that the risk of an estimator $\hat{\bm \theta}_n$ at ${\bm \theta}$ is $R(\hat{\bm{\theta}}_n, {\bm{\theta}}) = \mathbb{E}_{\bm{\theta}} [L( \hat{\bm{\theta}}_n, {\bm{\theta}})] $.
We assume that the loss function is locally quadratic around any point and in particular
$$
L(\tilde{\bm{\theta}}_n +{\bm u}, \tilde{\bm{\theta}}_n +{\bm v} ) \approx \sum_{i,j=1}^m w_{ij}(\tilde{\bm{\theta}}_n)(u_i-v_i)(u_j-v_j)
$$
for a strictly positive weight matrix function ${\bm \theta}^\prime \mapsto W({\bm \theta}^\prime) = (w_{ij}({\bm \theta}^\prime))$ (which we assume to be continuous in ${\bm \theta}^\prime$). In asymptotics, $\tilde{\bm \theta}_n \to {\bm \theta}$ and the loss function can be replaced by its quadratic approximation at the true parameter ${\bm \theta}$ without affecting the leading contribution to the estimation risk. We denote $W:=W({\bm \theta})$.

Returning to the original estimation problem, we now show how QLAN can be used to construct an estimator which achieves the Holevo bound asymptotically. 

We couple each system with a $d$-dimensional ancillary system in state $|0^\prime\rangle$ and fix an ONB for the ancilla ${\cal B}^\prime=\{\ket{0^\prime}, \dots, \ket{d-1^\prime}\}$. The extended i.i.d. statistical model is $\ket{\Psi^n_{\bm u}} \otimes \ket{0^\prime}^{\otimes n}$. By quantum LAN, the joint ensemble can be approximated by a pure Gaussian shift model coupled with an ancillary $(d-1)$-modes cv system prepared in the vacuum: $\ket{C\bm{u}} \otimes \ket{\bf 0}$ where $C$ is the $(d-1)\times m$ complex matrix with entries $C_{kj} = c_{kj}^q+ic_{kj}^p$; more precisely we map the two qudit ensembles into their Fock spaces by means of a tensor of isometries as in equation \eqref{eq:isometry} and we consider the $2(d-1)$ modes which correspond to the linear space $ \mathcal{L}: ={\rm Lin} \{|0\rangle \otimes |i^\prime\rangle, |i\rangle\otimes |0^\prime\rangle : i=1,\dots d-1\} $ (which contains $\{\ket{\psi_{\bm \theta}} \otimes \ket{0^\prime}\}_{{\bm \theta}\in \Theta}$). Alternatively, one can map the original ensemble to the the cv space and \emph{then} add a second cv system in the vacuum state. The reason we chose to add an ancillary ensemble at the beginning is because this same setup will be used in the next section in the context of displaced-null measurements.

We now apply the optimal measurement for the Gaussian shift model $|C{\bm u}\rangle$ with weight matrix $W$, as described in section \ref{sec:Holevo.Gaussian.shift}. This involves measuring commuting quadratures of the doubled up cv system, such that the resulting estimator 
$\hat{\bm u}_n$ achieves the Gaussian Holevo bound \eqref{eq:linHB} in the limit of large $n$. 

Thanks to the parameter localisation and LAN, the asymptotic (rescaled) risk of the corresponding 'global' estimator  $\hat{\bm \theta}_n = \tilde{\bm \theta}_n + \hat{\bm u}_n/\sqrt{n}$ 
satisfies
$$
\lim_{n\to\infty}
nR(\hat{\bm{\theta}}_n, {\bm{\theta}}) = \mathcal{H}^W(\mathcal{G}).
$$
Finally we note that the expressions of the Holevo bound \eqref{eq.Holevo} in i.i.d. model $|\psi_{\bm 
\theta}\rangle$ with loss function $L$, and the corresponding Gaussian shift model $|C{\bm u}\rangle$ with weight matrix $W$ coincide: $\mathcal{H}^W({\bm \theta})=\mathcal{H}^W (\mathcal{G})$. Indeed, since $\rho_{\bm \theta}=|\psi_{\bm \theta}\rangle\langle \psi_{\bm \theta}|$ is a pure state, the minimisation in \eqref{eq.Holevo} can be 
restricted to operators ${\bm X} = (X_1, \dots, X_m)$ such that 
$P X_i P = P^{\perp} X_i P^{\perp}=0$ where $P= \rho_{\bm \theta}, P^{\perp} = \mathbf{1} -P$. In this case the two Holevo bounds coincide after making the identification 
$B_{j,k} = \sqrt{2}{\rm Re} \langle k|X_j|0\rangle, 
B_{j+d-1,k} = \sqrt{2}{\rm Im} \langle k|X_j|0\rangle$.

\comm{
There is linear correspondence between vectors 
\[v=\sum_{k=1}^{d-1} (v_k^{q} +i v_{k}^{p})\ket{k}\otimes \ket{0^\prime}+(v_k^{q\prime} +i v_{k}^{p\prime})\ket{0} \otimes \ket{k^\prime}
\]
in the subspace 
$$
{\rm Lin}
\{ \ket{k}\otimes \ket{0^\prime}, \ket{0} \otimes \ket{k^\prime} : k=1,\dots , d-1 
\} 
$$
and quadratures in the limit model. Indeed if we define the  Pauli operators
\begin{align*}
    \sigma^v_x&=
    \ket{v}
    \bra{0\otimes0^\prime}  + 
    \ket{0\otimes0^\prime} \bra{v}, \\
    \sigma^v_y&=i
    \ket{v}
    \bra{0\otimes0^\prime} - i\ket{0\otimes0^\prime}\bra{v},\\
    \sigma^v_z&=
    \ket{0\otimes0^\prime}\bra{0\otimes0^\prime} 
    + \ket{v}\bra{v}, \\
\end{align*}
then the corresponding rescaled collective variables satisfy
\begin{align*}
\left(\frac{1}{\sqrt{2n}}S^v_x(n), \frac{1}{\sqrt{2n}}S^v_y(n),n\mathbb{1}-S^v_z(n) \,:\,|\Psi_{\bm u}^n\rangle \right )\\
\to \left(Q_v,P_v,N_v \,:\, \ket{{\bf z}={\bm u}_1+i{\bm u}_2}  \right)
\end{align*}
where $Q_v=\sum_{k=1}^{d-1} v_k^{q} Q_k +v_{k}^{p} P_k+v_k^{q\prime}Q^\prime_k + v_{k}^{p\prime}P_k^\prime$ and $P_v$ and $N_v$ are the corresponding momentum and number operator;


We apply the measurement the POVM corresponding to the optimal measurement in the limit model can be trasferred using $V_n$ to the real experiment producing a sequence of estimators which is asymptotically optimal.

Notice that while the quadratures $Z_j$'s commute in the limit model, in general the corresponding approximations using rescaled collective variables do not! Displaced-null measurements offer an alternative way to approximate an optimal measurement of the limit model. When discussing pure Gaussian shift models, we have mentioned that there exists a set of quadratures $\tilde{\bm R}=(\widetilde{Q}_1,\dots, \widetilde{Q}_{2(d-1)},\widetilde{P}_1,\dots,\widetilde{P}_{2(d-1)})$ which satisfy the canonical commutation relations, have covariance ${\bf 1}/2$ in the vacuum and such that the measurement of $\widetilde{Q}_1,\dots, \widetilde{Q}_{2(d-1)}$ is optimal for the estimation task. Let us denote by $\widetilde{{\cal B}}=\{\ket{\tilde{1}}, \dots, \ket{\tilde{2(d-1)}}\}$ the orthonormal basis corresponding to $\tilde{\bm R}$ and by $\{\tilde{S}^i_x, \tilde{S}^i_z\}_{i=1}^{2(d-1)}$ the collective variables corresponding to $\tilde{Q}_1,\dots, \tilde{Q}_{2(d-1)}$: we have already pointed out that $\tilde{S}^i_x$'s do not commute for different values of $i=1,\dots, 2(d-1)$, but the key point is that instead the $\tilde{S}^i_z$'s do. Therefore we can measure the commuting operators $\{ \widetilde{N}^i_{{\bm \Delta}_n}(n), i=1,\dots, 2(d-1)\} $, where
$$
\widetilde{N}^i_{{\bm \Delta}_n}(n):= 
U^n(-{\bm \Delta}_n)(n\mathbf{1}-S^i_z(n))U^n({\bm \Delta}_n) ,
$$ 
with
$${\bm \Delta}_n=(\underbrace{\Delta_n,\dots, \Delta_n}_{2(d-1)},\underbrace{0, \dots, 0}_{2(d-1)})$$
and $\Delta_n=n^{3\epsilon}$. For large $n$ the displaced number operators $N^i_{{\bm \Delta}_n}(n)$ can be approximated similary to the qubit case (see equation \ref{eq:homodyne.approx})
\begin{eqnarray*}
    \widetilde{N}^i_{{\bm \Delta}_n}
    &\approx &n^{6\epsilon}\mathbf{1} -\sqrt{2}n^{3\epsilon} \widetilde{Q}_i \quad i=1,\dots, 2(d-1).
\end{eqnarray*}

Therefore in the asymptotic picture, the proposed measurements are effectively joint measurements of 
$\{\widetilde{Q}_i, i=1,\dots 2(d-1)\}$ which are known to be optimal measurements for the local parameter ${\bm u}$ in the limit pure Gaussian shift model.

}
\subsection{Achieving the Holevo bound with displaced-null measurements} \label{subsec:HOdispl}

In this section we show how displaced-null measurements offer an alternative strategy to the one presented in the previous section, for optimal estimation  in a general finite dimensional pure statistical model $\ket{\psi_{\bm{\theta}}}$ with ${\bm \theta} \in \Theta \subset \mathbb{R}^m$. As before, we assume that the risk function $L:\Theta \times \Theta \rightarrow \mathbb{R}_+$ has a continuous quadratic local approximation given by the matrix valued function $W({\bm \theta})$.

The first steps are the same as in the estimation procedure in section \ref{sec:LANHolevo}: we use $\tilde{n}=n^{1-\epsilon}$ samples to produce a preliminary estimator $\tilde{\bm{\theta}}_n$
and we write ${\bm \theta} =\tilde{\bm \theta}_n + {\bm u}/\sqrt{n}$ where ${\bm u}$ is the local parameter such that $\|{\bm u}\| \leq n^{\epsilon}$ with high probability. We choose an ONB ${\cal B}=\{\ket{0}, \dots, \ket{d-1}\}$ such that 
$\ket{0}: = \ket{\psi_{\bm{\tilde{\theta}}_n}}$ and apply Lemma \ref{lem:linear} to approximate the local model as in equation \eqref{eq:linmod}. We couple each system with an ancillary qudit in state $|0^\prime\rangle$. By QLAN, the joint model is approximated by the Gaussian shift model consisting of coherent states $\ket{C{\bm u}} \otimes \ket{0}$ of a $2(d-1)$-modes cv system.

As detailed in section \ref{sec:Holevo.Gaussian.shift}, the Holevo bound for the Gaussian shift can be attained by measuring a certain set of canonical coordinates $\tilde{Q}:= (\widetilde{Q}_1,\dots, \widetilde{Q}_m)$ of the doubled-up systems. In turn, this provides an asymptotically optimal measurement for the i.i.d. qudit model as explained in section  \ref{sec:LANHolevo}. Instead of measuring these quadratures, here we adopt the displaced-null measurements philosophy used in section \ref{sec:qubits}, which achieves the same asymptotic risk. This means that one measures the commuting set of displaced number operators
$
\tilde{N}^j_{{\bm \Delta}_n}=D(-{{\bm \Delta}_n})\widetilde{N}^j D({\bm \Delta}_n)
$
where $\widetilde{N}^j=\widetilde{a}^*_j\widetilde{a}_j $ is the  number operator corresponding to the mode $(\widetilde{Q}_j,\widetilde{P}_j)$ and
$$D({\bm \Delta}_n)=\exp\left(-i\Delta_n \sum_{k=1}^{m} \tilde{P}_k\right), \quad \Delta_n=\sqrt{n} \delta_n=n^{3\epsilon}.$$ 
We note that
\[
\tilde{N}^j_{{\bm \Delta}_n}=(\widetilde{a}_j-n^{3\epsilon}{\bf 1})^*(\widetilde{a}_j-n^{3\epsilon}{\bf 1}) = n^{6\epsilon} {\bf 1} -\sqrt{2}\widetilde{Q}_jn^{3\epsilon} +\widetilde{N}^j,
\]
so for large $n$, measuring $\tilde{N}^j_{{\bm \Delta}_n}$ is equivalent to measuring $\widetilde{Q}_j$. We recall that by measuring 
${\bf Z}^\star := T\widetilde{{\bf Q}}$ we obtain an optimal unbiased estimator of ${\bf u}$, where  $T$ is the invertible matrix defined at the end of section \ref{sec:Holevo.Gaussian.shift}. Therefore, using the above equation we can construct  
an (asymptotically) optimal estimator given by the outcomes of the following set of commuting operators 
\[
\sum_{k=1}^m T_{jk} \left (\frac{n^{3\epsilon}}{\sqrt{2}} {\bf 1}-\frac{n^{-3\epsilon}}{\sqrt{2}}\tilde{N}^k_{{\bm \Delta}_n} \right ) 
\approx  Z_i
\]
We are now ready to translate the above cv measurement into its corresponding projective qudit measurement using the correspondence between displaced number operators measurements and rotated bases, described in section \ref{sec.QLAN.qudits}.

Using the general CLT map \eqref{eq:QCLTcorr}, we identify 
vectors $\{ \ket{\tilde{1}}, \dots, \ket{\widetilde{m}}\}$ in the orthogonal complement of $|\tilde{0}\rangle=\ket{0}\otimes \ket{0^\prime}$ such that their corresponding limit quadratures are
$X(|\tilde{k}\rangle) = \tilde{Q}_k$, for $k=1,\dots, m$. By virtue of the CLT the vectors $|\tilde{0}\rangle, \ket{\tilde{1}}, \dots, \ket{\widetilde{m}}$ are normalised and orthogonal to each other, so we can complete the set to an ONB 
$\tilde{\mathcal{B}} :=\{|\tilde{0}\rangle, \dots |\widetilde{d^2-1}\rangle\}$ of $\mathbb{C}^d \otimes\mathbb{C}^d$ where the remaining vectors are chosen arbitrarily. Now let $\tilde{\mathcal{B}}_n$ be the rotated basis 
\[
|v^{\delta_n}_j\rangle = 
U(\delta_n)|\tilde{j}\rangle = 
\exp\left(-i\delta_n \sum_{k=1}^{m} \sigma(i\tilde{k})\right) |\tilde{j}\rangle 
\]
for $\delta_n=n^{-1/2+3\epsilon}$ and 
$\sigma(i\tilde{k}) := -i|\tilde{0}\langle \tilde{k}| + i|\tilde{k}\rangle\langle \tilde{0}|.$ Note that $\tilde{\mathcal{B}}_n$ is a small rotation of the basis $\mathcal{B}$ which contains the reference state 
    $|\tilde{0}\rangle = |0\rangle\otimes \ket{0^\prime}$, so the corresponding measurement is a of the displaced-null type.

We measure each of the qudits in the basis $\tilde{\mathcal{B}}_n$ and obtain i.i.d. outcomes $X_1,\dots,X_{n}$ taking values in $\{0,\dots, d^2-1\}$, and let $p_{{\bm u}}^{(n)}$ be their distribution:
%
\[
p^{(n)}_{{\bm u}}(j) =
|\langle 
\psi_{{\bm u}/\sqrt{n}}\otimes 0^\prime| v^{\delta_n}_j\rangle|^2,\quad  j=0,\dots, d^2-1.
\]
 
The following Theorem is one of the main results of the paper and shows that the Holevo bound can be attained by using displaced-null measurements.

\begin{theorem} \label{thm:dnmgeneral}
Assume we are given $n$ samples of the qudit state $|\psi_{\bm{\theta}}\rangle$ where $\bm{\theta}\in\Theta\subset \mathbb{R}^m$ is unknown. We further assume that assume that $\Theta$ is bounded and $\epsilon<1/10$. Using $\tilde{n} =n^{1-\epsilon}$ samples, we compute a preliminary estimator $\tilde{\bm{\theta}}_n$, and we measure the rest of the systems in the ONB $\tilde{{\cal B}}_n$, as defined above.  Let 
$$
\hat{\bm \theta}_n:= \tilde{\bm \theta}_n+\hat{\bm u}_n/\sqrt{n}
$$
be the estimator with
\[
\hat{u}_n^j=
\sum_{k=1}^{m} T_{jk} \left (\frac{n^{3\epsilon}}{\sqrt{2}} -\frac{n^{1-3\epsilon}}{\sqrt{2}} \hat{p}_n(k) \right ), \quad j=1,\dots,m
\]
where $\hat{p}_n(j)$ is the empirical estimator of $p^{(n)}_{{\bm u}}(j)$, i.e.
\[
\hat{p}_n(j)= \frac{| \{ i: X_i=j ,~i=1, \dots ,n\}|}{n},
\]
for $j=1,\dots,m$.

Then $\hat{\bm \theta}_n$ is asymptotically optimal in the sense that for every ${\bm \theta} \in \Theta$
$$
\lim_{n\to \infty} nR_n(\hat{\bm \theta}_n,{\bm \theta}) = {\cal H}^{W({\bm \theta})}({\bm \theta})
$$
Moreover, 
$\sqrt{n}(\hat{\bm \theta}_n-{\bm \theta})$ converges in law to a centered normal random variable with covariance given by $TT^{T}/2$.

\end{theorem}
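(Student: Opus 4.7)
The proof follows the template of Proposition \ref{prop:displaced.null}, executed in the multiparameter qudit setting, with the Gaussian picture of Sections \ref{sec:Holevo.Gaussian.shift}--\ref{subsec:HOdispl} as the conceptual guide. First I would condition on the localisation event $\mathcal{E}_n:=\{\|\tilde{\bm\theta}_n-{\bm\theta}\|\leq n^{-1/2+\epsilon}\}$; by the reasonable-estimator hypothesis (cf. the analogue of Lemma \ref{lem:reas} used to build $\tilde{\bm\theta}_n$) this occurs with probability $1-O(e^{-n^r})$ for some $r>0$, and since $\Theta$ is bounded the complement contributes $o(1/n)$ to the rescaled risk. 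On $\mathcal E_n$ one reparametrises ${\bm\theta}=\tilde{\bm\theta}_n+{\bm u}/\sqrt n$ with $\|{\bm u}\|\leq n^\epsilon$, and invokes Lemma \ref{lem:linear} to replace the true ensemble by the linearised one $|\widetilde\Psi^n_{\bm u}\rangle\otimes|0'\rangle^{\otimes n}$ at the cost of a vanishing trace-distance error (valid since $\epsilon<1/10<1/6$).

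The heart of the argument is a joint Taylor expansion of the single-copy probabilities $p^{(n)}_{\bm u}(k)=|\langle\tilde\psi_{{\bm u}/\sqrt n}\otimes 0'|v_k^{\delta_n}\rangle|^2$ in the two small parameters $\|{\bm u}\|/\sqrt n$ and $\delta_n=n^{-1/2+3\epsilon}$. Expanding $|v_k^{\delta_n}\rangle = |\tilde k\rangle - \delta_n|\tilde 0\rangle + O(\delta_n^2)$ for $k=1,\dots,m$ and $|\tilde\psi_{{\bm u}/\sqrt n}\rangle$ to first order gives, uniformly on $\|{\bm u}\|\leq n^\epsilon$,
\[
n\,p^{(n)}_{\bm u}(k) \;=\; n^{6\epsilon}\;+\;2 n^{3\epsilon}\,q_k({\bm u})\;+\;O(n^{2\epsilon}),
\]
where $q_k$ is a real linear functional in ${\bm u}$ whose identification, via the QLAN correspondence of Section \ref{sec.QLAN.qudits}, with the mean of $\widetilde Q_k$ in the coherent state $|C{\bm u}\rangle\otimes|{\bf 0}\rangle$ is guaranteed by the construction of $\tilde{\mathcal B}$ in Section \ref{subsec:HOdispl}. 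The condition $\epsilon<1/10$ is precisely what keeps the remainder $o(n^{3\epsilon})$ after multiplication by $n^{1-3\epsilon}$. A triangular-array Lindeberg CLT applied to the i.i.d. categorical draws then gives that $\sqrt n(\hat p_n(k)-p^{(n)}_{\bm u}(k))$ is asymptotically Gaussian; the multinomial off-diagonal terms $\operatorname{Cov}(\hat p_n(k),\hat p_n(l))=-p^{(n)}_{\bm u}(k)p^{(n)}_{\bm u}(l)/n=O(n^{-3+12\epsilon})$ are negligible after the rescaling, so the coordinates become asymptotically independent with variances $\sim n^{-2+6\epsilon}$.

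Substituting into the definition of $\hat u_n^j$, the deterministic piece $n^{3\epsilon}/\sqrt 2$ cancels the leading $n^{6\epsilon}$ contribution of $n^{1-3\epsilon}p^{(n)}_{\bm u}(k)/\sqrt 2$, the subleading piece $n^{3\epsilon}q_k({\bm u})$ reproduces $u_j$ upon applying $T$ (by the unbiasedness construction at the end of Section \ref{sec:Holevo.Gaussian.shift}), and the fluctuations deliver a centred normal limit for $\sqrt n(\hat{\bm\theta}_n-{\bm\theta})=\hat{\bm u}_n-{\bm u}$ with covariance
\[
\Sigma_{jj'}\;=\;\sum_k\frac{T_{jk}T_{j'k}}{2}\;=\;\frac{[TT^T]_{jj'}}{2}.
\]
Since by construction $\operatorname{Tr}(W\Sigma)=\mathcal H^W(\mathcal G)=\mathcal H^{W({\bm\theta})}({\bm\theta})$, the Holevo risk limit follows once the distributional convergence is upgraded to $L^2$ convergence; this is handled by a uniform-integrability argument using the deterministic bound $\|\hat{\bm u}_n\|=O(n^{3\epsilon}\|T\|)$ (since $\hat p_n(k)\in[0,1]$) together with the exponential smallness of $\mathbb{P}(\mathcal E_n^c)$ to control the contribution off the localisation event. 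The main technical obstacle is the uniform bookkeeping of the three small parameters $n^{-1/2}$, $n^{-1/2+\epsilon}\|{\bm u}\|$ and $n^{-1/2+3\epsilon}$ throughout the Taylor expansion, which is what forces $\epsilon<1/10$; a secondary subtlety is that a standard CLT does not suffice because the probabilities $p^{(n)}_{\bm u}(k)$ themselves depend on $n$ and vanish, necessitating the triangular-array Lindeberg form.
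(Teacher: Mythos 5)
Your proposal is correct and follows essentially the same route as the paper's proof in Appendix \ref{sec:lawsrb}: localisation on $I_n$, linearisation via Lemma \ref{lem:linear}, joint Taylor expansion of $p^{(n)}_{\bm u}(k)$ in $\delta_n$ and ${\bm u}/\sqrt{n}$ so that the $n^{6\epsilon}$ term cancels and the linear term in ${\bm u}$ is recovered by applying $T$, multinomial moment/characteristic-function computations giving covariance $TT^T/2$, and removal of the conditioning using the exponentially small complement probability (the paper computes the risk directly from the exact conditional second moments rather than upgrading distributional convergence by uniform integrability, but both work). One small correction: since $\hat{p}_n(k)\in[0,1]$, the deterministic bound is $\|\hat{\bm u}_n\|=O(n^{1-3\epsilon}\|T\|)$ rather than $O(n^{3\epsilon}\|T\|)$, which is still polynomial and therefore harmless against the exponentially small probability of $\mathcal{E}_n^c$.
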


The proof of Theorem \ref{thm:dnmgeneral} can be found in Appendix \ref{sec:lawsrb}.

Our measurement has been obtained by 
modifying the optimal linear measurement for the limiting Gaussian shift to displaced counting one, and translating this to a qudit and ancilla measurement with repect to a displaced-null basis.  Interestingly, this resulting measurement is closely connected to the optimal measurement described in \cite{Ma02}. The connection is discussed in Appendix \ref{app:matsu}.

\comm{
We recall that in the case of the Gaussian shift model, the Holevo bound can be expressed in terms of the mean vector and the covariance matrix and it is saturated by coupling the system with an ancillary cv system and measuring a set of commuting quadratures. A crucial role in the achievability of the Holevo bound is played by the fact that the optimal measurement does not depend on the value of the parameters. We recall that for ${\cal G}_n$ the Holevo bound corresponding to the cost matrix $W$ is given by
\[
{\cal C}^H=\min_{B}\left \{ \frac{1}{2}\Tr(W BB^T) + \frac{1}{2}\Tr(|\sqrt{W} B \Omega B^T \sqrt{W}|)\right \}
\]
where $\Omega$ is the matrix corresponding to the symplectic form in $d-1$ dimensions and the optimisation is performed over those $B:\mathbb{R}^{d-1} \rightarrow \mathbb{R}^m$ such that $\sqrt{2}BC=\mathbf{1}$ (with an abuse of notation we identified $C:\mathbb{R}^m \rightarrow \mathbb{C}^{d-1}$ with the corresponding matrix obtained identifying $\mathbb{C}$ with $\mathbb{R}^2$ thorugh the real and the imaginary part). Given $B$ that attains the Holevo bound, the optimal estimator is constructed coupling the system with another uncorrelated cv system with $d-1$ modes in the Gaussian state $\tilde{\rho}$ with zero mean and covariance matrix $V=C \sqrt{W}^{-1} |\sqrt{W} B \Omega B^T \sqrt{W}|\sqrt{W}^{-1} C^T.$ The optimal estimator is given by $\bm{Z}=B\bm{R}+B \bm{\tilde{R}}$ where $\bm{R}=(Q_i,P_i)_{i=1}^{d-1}$ is the vector of position and momentum operators in the system that carries information about the parameters and $\bm{\tilde{R}}=(P_i,Q_i)_{i=1}^{d-1}$ is the same vector in the ancillarty system (notice that in $\bm{\tilde{R}}$ we swopped position and momentum operators). Summing up the vector of commmuting field operators $\bm{Z}$ is the optimal estimator for the statistical model $\ket{C\bm{u}}\otimes \tilde{\rho}$; notice that the covariance matrix of $\bm{Z}$ in any state of ${\cal G}_n$ is equal to
\[\Sigma=\begin{pmatrix} BB^T/2 & 0 \\
0 & BVB^T,\end{pmatrix}
\]
hence measuring $\bm{Z}$ is equivalent to measuring the set of commuting quadratures $\bm{X}:=(BB^T)^{-1/2} \bm{Z}$ and then take the linear combination of the outcomes. Notice that the invertibility of $BB^T$ follows from the identificability condition.

As in the previous sections, we will make use of the asymptotic picture in order to construct optimal estimators for the qdit problem in the multicopy scenario.
}

\subsection{Estimating a completely unknown pure state with respect to the Bures distance}
\label{sec:completely.unknown.qudit}

In this section we consider the problem of estimating a completely unknown pure qudit state, when the loss function (figure of merit) is defined as the squared Bures distance
$$
d^2_b (|\psi\rangle\langle \psi| , |\phi\rangle\langle \phi|) = 2(1- |\langle \psi|\phi\rangle|).
$$
In this particular case, we will show that one can asymptotically achieve the Holevo bound using diplaced-null measurement without the need of using any ancillary system.

We parametrise a neighbourhood of the preliminary estimator $\ket{0}:=|\tilde{\psi}_n\rangle$ as
$$
|\psi_{{\bm u}/\sqrt{n}}\rangle = 
\exp\left(-i\sum_{k=1}^{d-1} ( u_1^k \sigma^k_{y} -u_2^k \sigma^k_{x} )/\sqrt{n}
\right) |0\rangle
$$
where ${\bm u} = (u_1^1, u_2^1, \dots, u_{1}^{d-1}, u_2^{d-1})\in \mathbb{R}^{2(d-1)}$ satisfies $\|{\bm u}\|\leq n^{\epsilon}$ with high probability.

For small deviations from $|0\rangle$ the Bures distance has the quadratic approximation
$$
d^2_b 
\left(
|\psi_{\frac{\bm u}{\sqrt{n}}}
\rangle\langle 
\psi_{\frac{\bm u}{\sqrt{n}}}|, 
|\psi_{\frac{{\bm u}^\prime}{\sqrt{n}}}
\rangle\langle 
\psi_{\frac{{\bm u}^\prime}{\sqrt{n}}}|
\right) 
= \frac{1}{n}\|{\bm u} - {\bm u}^\prime\|^2 + o(n^{-1+2\epsilon})
$$
which determines the optimal measurement and error rate in the asymptotic regime.

The Gaussian approximation consists in the model $\ket{{\bm u}_1+i {\bm u}_2}$ and the optimal measurement with respect to the identity cost matrix would be to measure the $Q_k$'s and $P_k$'s. In order to estimate ${\bm u}$, instead of usign an acilla, we split the ensemble of $n$ qudits in two equal sub-ensembles and perform separate 
`displaced-null' measurements on each of them in the following bases which are obtained by rotating $\{|0\rangle,\dots |d-1\rangle\}$ by (small) angles of size $\delta_n=n^{-1/2+3\epsilon}$
\begin{eqnarray}
|v^{\delta_n}_j\rangle &= &
U_1(\delta_n)|j\rangle = 
\exp\left(-i\delta_n \sum_{k=1}^{d-1} \sigma^k_y\right) |j\rangle 
\label{eq:U_1}\\
|w^{\delta_n}_j\rangle &= &
U_2(\delta_n)|j\rangle = 
\exp\left(i\delta_n \sum_{k=1}^{d-1} \sigma^k_x\right) |j\rangle.
\label{eq:U_2}
\end{eqnarray}
Therefore in the asymptotic picture, the proposed measurements are effectively joint measurements of 
$\{Q_i, i=1,\dots d-1\}$ and respectively $\{P_i, i=1,\dots d-1\}$ which are known to be optimal measurements for the local parameter ${\bm u}$ in the Gaussian shift model when performed on two separate copies of  
$|({\bm u}_1  +i {\bm u}_2)/\sqrt{2}\rangle$ obtained from the original state by using a beamsplitter.

Let $X_1,\dots,X_{n/2}$ and $Y_1,\dots,Y_{n/2}$ be the independent outcomes of the two types of measurements, taking values in $\{0,\dots, d-1\}$, and let $p_{{\bm u}}^{(n)}$ and $q_{{\bm u}}^{(n)}$ be their respective distributions 
\begin{equation} \label{eq:laws}
p^{(n)}_{{\bm u}}(j) =
|\langle 
\psi_{{\bm u}/\sqrt{n}}| v^{\delta_n}_j\rangle|^2,\quad
q^{(n)}_{{\bm u}}(j) =
|\langle 
\psi_{{\bm u}/\sqrt{n}}| w^{\delta_n}_j\rangle|^2.
\end{equation}

\begin{proposition} \label{prop:optlqd}
Assume $\epsilon<1/10$ and let 
$$
|\hat{\psi}\rangle:= |\psi_{\hat{\bm u}/\sqrt{n}} \rangle
$$
be the state estimator with local parameter  
$\hat{\bm u}_n$ defined as 
\begin{eqnarray*} 
\hat{u}^j_1&=&
\frac{n^{3\epsilon}}{2} -\frac{n^{1-3\epsilon}}{2} \hat{p}_n(j),\\ 
\hat{u}^j_2&=&
\frac{n^{3\epsilon}}{2} -\frac{n^{1-3\epsilon}}{2} \hat{q}_n(j), \quad j=1,\dots,{d-1},
\end{eqnarray*}
where $\hat{p}_n$, $\hat{q}_n$ are the empirical estimator of $p^{(n)}_{{\bm u}}$ and $q^{(n)}_{{\bm u}}$, respectively, i.e.
\begin{eqnarray*}
\hat{p}_n(j)&=& \frac{| \{ i: X_i=j ,~i=1, \dots ,n/2\}|}{n/2},\\
\hat{q}_n(j)&=& \frac{| \{ i: Y_i=j ,~i=1, \dots ,n/2\}|}{n/2},
\end{eqnarray*}
for $j=1,\dots,d-1$.

Then under $\mathbb{P}_{\bm u}$, $\sqrt{n}(\hat{\bm{u}}_n-{\bm u})$ is asymptotically distributed as a centered Gaussian random vector with covariance $\bm{1}/2$ and $|\hat{\psi}_n\rangle $ is asymptotically optimal in the sense that it achieves the Holevo bound:
$$
\lim_{n\to \infty} n\mathbb{E} _{\ket{\psi}}[
d_b^2(|\psi\rangle\langle \psi|, 
|\hat{\psi}_n\rangle\langle \hat{\psi}_n| ) ] = d-1.
$$

\end{proposition}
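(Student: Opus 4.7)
The plan is to reduce the problem to two independent multinomial experiments on the sub-ensembles of size $n/2$, and then transfer the resulting Gaussian limit through the local quadratic expansion of the Bures distance. The three ingredients are: (i) Taylor expanding the probabilities $p^{(n)}_{\bm u}$ and $q^{(n)}_{\bm u}$ up to the order at which the cancellations encoded in $\hat{\bm u}_n$ become visible; (ii) applying the CLT to the empirical frequencies of the displaced-null measurements; (iii) converting the resulting Gaussian limit law into the asserted risk $d-1$.

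For (i), using $\sigma^k_y|0\rangle = i|k\rangle$, $\sigma^k_x|0\rangle = |k\rangle$, and expanding both $U_{1,2}(\delta_n)|j\rangle$ and $|\psi_{{\bm u}/\sqrt n}\rangle$ in $\delta_n = n^{-1/2+3\epsilon}$ and ${\bm u}/\sqrt n$ (with $\|{\bm u}\|\le n^\epsilon$), a short computation gives, for $j\neq 0$,
\[
\langle v^{\delta_n}_j|\psi_{{\bm u}/\sqrt n}\rangle = -\delta_n + \frac{u_1^j+iu_2^j}{\sqrt n} + O\!\left(\delta_n^3 + \frac{\delta_n\|{\bm u}\|^2}{n}\right),
\]
together with the analogous identity for $\langle w^{\delta_n}_j|\psi_{{\bm u}/\sqrt n}\rangle = u_1^j/\sqrt n + i(u_2^j/\sqrt n - \delta_n) + R'_{n,j}$. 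Squaring yields
\[
p^{(n)}_{\bm u}(j) = \delta_n^2 - \frac{2\delta_n u_1^j}{\sqrt n} + \frac{(u_1^j)^2+(u_2^j)^2}{n} + \tilde R_{n,j},
\]
and the symmetric expression for $q^{(n)}_{\bm u}(j)$ with $u_2^j$ in the linear term. For (ii), substituting into $\hat u_1^j$ causes $n^{3\epsilon}/2$ to cancel against $(n^{1-3\epsilon}/2)\delta_n^2$ and the linear term to reproduce $u_1^j$; the deterministic bias is $O(n^{-\epsilon})$ when $\epsilon<1/10$. Hence
\[
\hat u_1^j - u_1^j = \tfrac{n^{1-3\epsilon}}{2}\bigl(p^{(n)}_{\bm u}(j)-\hat p_n(j)\bigr) + o(1).
\]
Each count $N_j$ is $\mathrm{Bin}(n/2, p^{(n)}_{\bm u}(j))$ with mean and variance of order $n^{6\epsilon}/2\to\infty$, so the Gaussian approximation gives $\mathrm{Var}(\hat u_1^j) \sim (n^{1-3\epsilon}/2)^2\cdot 2\delta_n^2/n = 1/2$, while the multinomial off-diagonal covariances contribute only $O(n^{-1+6\epsilon})\to 0$ to $\mathrm{Cov}(\hat u_1^j,\hat u_1^k)$. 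Independence of the two sub-ensembles gives the analogous statement for $\hat{\bm u}_2$, and a joint Lindeberg-type CLT yields $\hat{\bm u}_n - {\bm u}\Rightarrow N(0,{\bm 1}/2)$.

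For (iii), the fidelity expansion $|\langle \psi_{{\bm a}/\sqrt n}|\psi_{{\bm b}/\sqrt n}\rangle| = 1 - \tfrac{1}{2n}\|{\bm a}-{\bm b}\|^2 + O(n^{-2}\|{\bm a}-{\bm b}\|^4)$ gives $n\,d_b^2 = \|\hat{\bm u}_n-{\bm u}\|^2 + O(n^{-1}\|\hat{\bm u}_n-{\bm u}\|^4)$, which combined with (ii) and uniform integrability (secured by the a-priori bound $\|\hat{\bm u}_n\|=O(n^{3\epsilon})$ built into the estimator) yields $n\mathbb{E}[d_b^2]\to\mathrm{tr}({\bm 1}/2)=d-1$. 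The main technical obstacle is that every bound above must hold \emph{uniformly} in ${\bm u}$ over $\|{\bm u}\|\le n^\epsilon$, so as to be integrated against the exponentially decaying concentration bound of the preliminary estimator (cf.\ \eqref{eq:concentration}); the hypothesis $\epsilon<1/10$ is precisely what ensures that every correction -- the $\delta_n^3$ and $\delta_n\|{\bm u}\|^2/n$ remainders from squaring, the residual multinomial off-diagonals, and the quartic Bures remainder -- is $o(1)$ after being amplified by the factor $n^{1-3\epsilon}$ in the estimator.
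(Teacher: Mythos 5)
Your proposal is correct and follows essentially the same route as the paper's proof in Appendix \ref{sec:multidimproof}: Taylor-expand $p^{(n)}_{\bm u}(j)$ and $q^{(n)}_{\bm u}(j)$ to exhibit the cancellation of the $n^{3\epsilon}/2$ offset and the linear recovery of $u^j_{1,2}$ with $O(n^{-\epsilon})$ bias, compute the (co)variances of the binomial/multinomial frequencies to get the limiting $N(0,\bm 1/2)$ law, and pass to the risk via the quadratic expansion of the Bures distance, handling the event $\ket{\psi}\notin I_n$ with the preliminary estimator's concentration bound. The only cosmetic difference is that the paper establishes the Gaussian limit by a direct characteristic-function computation rather than citing a Lindeberg-type CLT.
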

The proof of Proposition \ref{prop:optlqd} can be found in see Appendix \ref{sec:multidimproof}.

\subsection{Achieving the QCRB with displaced-null measurements}
\label{sec:achievingQCRB-displacednull}

We now consider quantum statistical models for which the QCRB is (asymptotically) achievable. In contrast to models discussed in sections \ref{subsec:HOdispl} and \ref{sec:completely.unknown.qudit}, in this case all parameter components can be estimated simultaneously at maximum precision. We will provide a class a displaced-null measurements which achieve the QCRB asymptotically.

Let us consider the statistical model $\{\ket{\psi_{\bm{\theta}}}\}$, $\bm{\theta}\in \Theta \subset \mathbb{R}^m$ with $m \leq 2 (d-1)$ and assume that the parameter is identifiable and that the QCRB is achievable for all $\theta\in \Theta$. This is equivalent to condition \eqref{eq:pureach} for all ${\bm \theta}\in \Theta$. The QFI is given by 
$$
F({\bm \theta})_{ij} = 4\langle 
\partial_i \psi_{\bm \theta} |
\partial_j \psi_{\bm \theta}
\rangle - 4 \langle \psi_{\bm \theta} | \partial_j \psi_{\bm \theta}\rangle 
\langle \partial_i \psi_{\bm \theta} |
\psi_{\bm \theta} \rangle,
$$
for $i,j=1,\dots, m$. Let $\ket{0}: = \ket{\psi_{\tilde{\bm \theta}_n}}$ be the preliminary estimator. We write ${\bm \theta} = \tilde{\bm \theta}_n+{\bm u}/\sqrt{n}$ with ${\bm u}$ the local parameter satisfying $\|{\bm u}\|\leq n^{\epsilon}$ with high probability. We assume that the phase of $|\psi_\theta\rangle$ has been chosen such that $\langle \dot{\psi}_i |0\rangle =0$ for all $i$, and denote $\dot{\psi}_i:=  \partial_i \psi_{\tilde{\bm \theta}_n}$.

We now describe a class of measurements that will be shown to achieve the QCRB asymptotically. We choose an orthonormal basis 
$\mathcal{B} := \{|0\rangle, |1\rangle , \dots, |d-1\rangle\}$ whose first vector is $|0\rangle$ and the other vectors satisfy 
\begin{equation}
\label{eq:real.inner.product}
c_{ki}:=\langle k |\dot{\psi}_i\rangle \in \mathbb{R}, 
\qquad i=1,\dots, m, \quad k=1,\dots , d-1.
\end{equation}
This condition is similar to equation (7) in 
\cite{NullQFI2}, but unlike this reference we do not impose additional conditions for the case when 
$\langle k |\dot{\psi}_i\rangle =0$ for all $i=1,\dots, m$. If we assume that the parameter $\bm{\theta}$ is identifiable, then the matrix $C=(c_{ki})$ needs to have rank $m$.

We will further rotate $\mathcal{B}$ with a unitary $U= \exp(-i\delta_n G)$ where $\delta_n=n^{-1/2+3\epsilon}$ and 
$$
G= \sum_{k=1}^{d-1} g_k\sigma^k_y, \quad \sigma_y^k=-i\ket{0}\bra{k} + i \ket{k} \bra{0}
$$
where $g_k\neq 0$ are arbitrary real coefficients. We obtain the ONB $\{|v_0^{\delta_n}\rangle, \dots |v_{d-1}^{\delta_n}\rangle \}$ with
$$
|v_k^{\delta_n}\rangle = U|k\rangle, \qquad k=0,\dots d-1.
$$
We measure all the systems in the basis $\tilde{\mathcal{B}}$ and obtain i.i.d. outcomes 
$X_1, \dots,  X_n\in \{0,\dots , d-1\}$ and denote by $\hat{p}_n$ the corresponding empirical frequency. We denote by $T=(T_{ij})$ the $m\times (d-1)$ matrix defined as
\[
T=(C^T C)^{-1}C^T.
\]
\begin{proposition}
\label{th:QCRB-achievability-null}
Assume that $\Theta$ be bounded and $\epsilon<1/10$. Let $\hat{\bm \theta}_n = \tilde{\bm \theta}_n+\hat{\bm u}_n/\sqrt{n}$ be the estimator determined by 
$$
\hat{\bm u}^j_n = \sum_{k=1}^{d-1} T_{jk}\left (\frac{g_kn^{3\epsilon}}{2} -\frac{n^{1-3\epsilon}}{2g_k} \hat{p}_n(k)\right ).
$$
Then $\hat{\bm \theta}_n$ achieves the QCRB, i.e. 
$$
\lim_{n\to\infty} 
n 
\mathbb{E}_{\bm \theta} [(\hat{\bm \theta}_n -{\bm \theta}) 
(\hat{\bm \theta}_n -{\bm \theta})^T ]
= F({\bm \theta})^{-1}.
$$
\end{proposition}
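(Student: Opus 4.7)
The plan is to follow the blueprint of Proposition \ref{prop:displaced.null} and Theorem \ref{thm:dnmgeneral}, specialised to the QCRB-achievable setting. By the adaptive localisation argument (Appendix \ref{sec:adaptive.argument}) we may condition on the high-probability event $\|\bm{u}\|\leq n^\epsilon$ and work with the local parameter throughout, the complementary event contributing negligibly to the scaled MSE since $\Theta$ is bounded and its probability decays as $\exp(-n^\epsilon r)$. On the good event, I expand the relevant amplitudes using $|\psi_{\bm{u}/\sqrt{n}}\rangle = |0\rangle + n^{-1/2}\sum_i u_i |\dot{\psi}_i\rangle + O(n^{-1+2\epsilon})$ together with the gauge $\langle \dot{\psi}_i|0\rangle = 0$, the decomposition $|\dot{\psi}_i\rangle = \sum_{k\geq 1} c_{ki}|k\rangle$, the closed form $U^\dagger |0\rangle = \cos(\delta_n\|g\|)|0\rangle - \sin(\delta_n\|g\|)|g\rangle/\|g\|$ with $|g\rangle = \sum_k g_k |k\rangle$, and the fact that $U^\dagger|k\rangle = |k\rangle + O(\delta_n^2)$ for $k\geq 1$. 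This yields
\[
\langle k|U^\dagger |\psi_{\bm{u}/\sqrt{n}}\rangle = -\delta_n g_k + n^{-1/2}(C\bm{u})_k + O(n^{-1+2\epsilon}),
\]
and, upon squaring,
\[
p^{(n)}_{\bm u}(k) = \delta_n^2 g_k^2 - 2\delta_n g_k n^{-1/2}(C\bm{u})_k + O(n^{-1+2\epsilon}).
\]

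The deterministic coefficients in the estimator are engineered so that the two leading terms cancel, leaving a linear function of the local parameter:
\[
\frac{g_k n^{3\epsilon}}{2} - \frac{n^{1-3\epsilon}}{2g_k} p^{(n)}_{\bm u}(k) = (C\bm{u})_k + O(n^{-\epsilon}).
\]
For the stochastic fluctuations, since $n p^{(n)}_{\bm u}(k)\sim n^{6\epsilon}\to\infty$ while $p^{(n)}_{\bm u}(k)\to 0$, the multinomial CLT yields that $\sqrt{n/p^{(n)}_{\bm u}(k)}\,(\hat{p}_n(k) - p^{(n)}_{\bm u}(k))$ converges jointly across $k=1,\dots,d-1$ to independent standard normals $Z_k$; the independence stems from the fact that the multinomial cross-covariances are of relative order $n^{-1+6\epsilon}\to 0$. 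Substituting gives
\[
\frac{g_k n^{3\epsilon}}{2} - \frac{n^{1-3\epsilon}}{2g_k}\hat{p}_n(k) = (C\bm{u})_k - \frac{\mathrm{sign}(g_k)}{2} Z_k + o_{\mathbb{P}}(1),
\]
and applying $T=(C^TC)^{-1}C^T$ together with the identity $TC=I_m$ yields $\sqrt{n}(\hat{\bm \theta}_n - \bm\theta) = \hat{\bm u}_n - \bm{u}$ converging in law to a centred Gaussian with covariance $\tfrac{1}{4}TDD^TT^T = \tfrac{1}{4}TT^T = \tfrac{1}{4}(C^TC)^{-1}$, where $D=\mathrm{diag}(\mathrm{sign}(g_k))$ and I used $D^2=I_{d-1}$.

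To identify this limit with $F(\bm\theta)^{-1}$, note that the gauge $\langle \dot{\psi}_i|0\rangle = 0$ combined with $c_{ki}\in\mathbb{R}$ gives $F(\bm\theta)_{ij} = 4\langle \dot{\psi}_i|\dot{\psi}_j\rangle = 4(C^TC)_{ij}$, where continuity of $F$ in $\bm\theta$ together with $\tilde{\bm\theta}_n\to \bm\theta$ transfers the identity from the preliminary point to the true parameter. To upgrade convergence in law to convergence of the scaled MSE, I will establish uniform integrability of $n\|\hat{\bm \theta}_n-\bm\theta\|^2$ using the exponential concentration of $\tilde{\bm\theta}_n$, the deterministic boundedness of $\hat{p}_n(k)\in[0,1]$, and the boundedness of $\Theta$, exactly in the manner of Appendix \ref{app:proof.prop:displaced.null}. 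The main technical obstacle I anticipate is the uniform bookkeeping of the higher-order error terms in the expansion of $p^{(n)}_{\bm u}(k)$, since these residuals are amplified by the factor $n^{1-3\epsilon}$ in the estimator; the combined constraints from controlling these residuals after amplification and from the stochastic noise at scale $n^{-1+3\epsilon}$ dictate the assumption $\epsilon<1/10$, under which all corrections remain $o(1)$ uniformly in $\|\bm{u}\|\leq n^\epsilon$.
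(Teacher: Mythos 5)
Your proposal is correct and follows essentially the same route as the paper's proof in Appendix \ref{app:proofQCRBac}: localisation on the event $\|\bm{u}\|\leq n^\epsilon$, Taylor expansion of $p^{(n)}_{\bm u}(k)$ with cancellation of the $\delta_n^2 g_k^2$ term, the multinomial (co)variance computation giving the $\mathbf{1}/4$ limit after rescaling by $G=\mathrm{diag}(1/g_k)$, the identification $F=4C^TC$ so that $TT^T/4=F^{-1}$, and removal of the conditioning via the exponential concentration of $\tilde{\bm\theta}_n$ and boundedness of $\Theta$. The only (immaterial) difference is that you pass through a distributional CLT plus uniform integrability, whereas the paper computes the conditional second moments of $\hat{p}_n$ directly, which yields the MSE limit without invoking a weak-convergence step.
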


The proof of Proposition \ref{th:QCRB-achievability-null} can be found in Appendix \ref{app:proofQCRBac}.

We now give a QLAN interpretation of the above construction. The fact that $c_{ki}$ are real implies that the linearisation of the model around the preliminary estimation is given by
$$
|\tilde{\psi}_{{\bm u}/\sqrt{n}}\rangle = 
\exp
\left(-i\sum_{j=1}^m u_j S_j/\sqrt{n}
\right)|0\rangle
$$
with 
$$
S_j = \sum_{k=1}^{d-1} c_{kj}\sigma^k_y, 
\qquad 
c_{kj} = \langle k |\dot{\psi}_j\rangle. 
$$
By QLAN, the corresponding Gaussian model consists of coherent states $|C{\bm u}\rangle$ of a 
$(d-1)$-modes cv system where $C: \mathbb{R}^m\to \mathbb{C}^{d-1}$ is given by the \emph{real} coefficients $c_{kj} =\langle k|\dot{\psi}_j\rangle$. This means that each of the $(d-1)$ modes is in a coherent state whose displacement is along the $Q$ axis, so 
$ \langle C{\bm u} | P_k|C{\bm u}\rangle =0$ for all $k$, while 
$$ 
q_k := \langle C{\bm u} | Q_k|C{\bm u}\rangle = 
\sqrt{2}
\sum_{j=1}^m c_{kj} u_j.
$$
As we mentioned in Section \ref{sec:Holevo.Gaussian.shift}, the QCRB is achievable for the limit model too and the simultaneous measurement of all $Q_k$ is optimal. This is asymptotically obtained by the counting in the rotated basis.

\comm{
If one is interested in designing asymptotically optimal estimators (in the sense of asymptotically achieving the quantum Cram\'er-Rao/Holevo bound), what matters is the linearisation of the statistical model around a first estimate of the parameter; therefore there is no loss of generality in assuming that the model is of the form
\begin{align*}|\psi_{\tilde{\bm{\theta}}_n+\bm{u}/\sqrt{n}}\rangle = 
\exp\left(-i\sum_{j=1}^{m}  \frac{u^j}{\sqrt{n}} S_j/\sqrt{n} \right) |0\rangle
\end{align*}
where $\bm{\theta}=(\theta^k)_{k=1}^{m} \in \Theta \subseteq \mathbb{R}^m$ with $m=2m^\prime + m^{\prime \prime} \leq 2(d-1)$, 
$\sigma^k_x = |0\rangle\langle k| + |k\rangle\langle 0|$ and 
$\sigma^k_y = -i|0\rangle\langle k| + i|k\rangle\langle 0|$ for some orthonormal basis ${\cal B}:=\{\ket{0}, \dots, \ket{d-1}\}$. We can make one more simplification and assume that the quantum Cram\'er-Rao Bound is achievable at $\bm{\theta}=0$, which is equivalent to (\cite[Theorem 1]{Ma02})
\begin{equation} \label{eq:wc}
\Im( \langle \partial_{\theta^{i}}\psi(0)|\partial_{\theta^{j}}\psi(0) \rangle )=0
\end{equation}
for all $i,j=1,\dots, m$ and implies that we can take $m^\prime=0$ and $m=m^{\prime\prime}$. If this is not the case, we split the sample into two batches and we use each batch to estimate parameters that satisfy Eq. \ref{eq:wc}. The limit model is the Gaussian shift model on $d-1$ modes given by
\[
\ket{\bm{u}}=\exp \left (-i\sum_{k=1}^{m} u^k P_{k}
\right) |0\rangle
\]
where $\bm{u}=\sqrt{n}\bm{\theta}$. As one may expect, the shifts in the phase space are along commuting directions.

Propositions \ref{prop:optlqd} and  \ref{prop.displaced.null.d.dim} show that the measurement in the basis ${\cal B}$ (which is a completion of the set of derivatives) rotated of a vanishing angle $\delta_n=n^{-1/2+3\epsilon}$ is asymptotically optimal. In the limit model, it corresponds to measuring the commuting number operators
\[
N^k=(a_k- n^{3\epsilon} \mathbb{1})^* (a_k- n^{3\epsilon} \mathbb{1}),
\]
which in the coherent state $\ket{\bm{u}}$ are uncorrelated Poisson random variables with intensities given by $|u^k-\Delta|^2$ for $k=1,\dots, m$ and $0$ otherwise. Since $\|\bm{u}\|$ is of the order $n^{ \epsilon}$, the optimal estimator
\[
X^k:=\frac{n^{3\epsilon}}{2}\mathbf{1}-\frac{N^k}{2n^{3\epsilon}}, \quad k=1,\dots, m
\]
is approximately a multivariate Gaussian random vector with mean vector $\bm{u}$ and covariance matrix $\mathbf{1}/4$.

It is natural to wonder how much freedom one has in picking the initial basis containing $\ket{0}$ in order to still have an asymptotically optimal measurement. The answer is related to Theorem 1 and Eq. (7) in \cite{NullQFI2} which we report below in a form which is conventient to our presentation.
\begin{theorem}
Let us consider a smooth path $\varphi:[0,1] \rightarrow \Theta$ such that $\varphi(1/2)=0$ and the tangent vector at $1/2$ of the corresponding path in the statistical manifold is equal to $\sum_{i=1}^m \gamma_i\partial_{\theta^{i}}\psi(0)$. Let $\widetilde{\cal B}:=\{\ket{\tilde{0}}, \dots, \ket{\widetilde{d-1}}\}$ be an orthonormal basis such that $\ket{\tilde{0}}=\ket{0}$. If for $k=1,\dots, d-1$
\begin{enumerate}
\item $\left \langle \left .\sum_{i=1}^m \gamma_i\partial_{\theta^{i}}\psi(0) \right |\tilde{k} \right \rangle \neq 0$ and \label{eq:omc1}
\item $\Im( \langle \partial_{\theta^{i}}\psi(0)|\tilde{k} \rangle \langle \tilde{k}| \partial_{\theta^{j}}\psi (0)\rangle )=0$,  $i,j=1,\dots,m$, \label{eq:omc2}
\end{enumerate}
then the classical Fisher information at $\varphi(t)$ corresponding to measuring in $\widetilde{\cal B}$ tends to the quantum Fisher information at $\bm{\theta}=0$ for $t \rightarrow 1/2$.
\end{theorem}
Notice that, since $\widetilde{\cal B}$ is an orthonormal basis and $\langle \partial_{\theta^{i}}\psi(0)|0 \rangle=0$ for every $i=1, \dots, m$, Eq. \eqref{eq:omc2} implies Eq. \eqref{eq:wc} and the achievability of the quantum Cram\'er-Rao Bound. We will show that condition in Eq. \eqref{eq:omc2} together with Eq. \eqref{eq:omc1} for $\gamma_i=1$ for every $i=1,\dots, m$, that is
\begin{equation} \label{eq:omc3}
\sum_{i=1}^m \langle \partial_{\theta^{i}}\psi(0)|\tilde{k} \rangle \neq 0, \quad k=1,\dots, d-1,
\end{equation}
ensure that the method we propose is optimal even when the measurement is performed rotating $\widetilde{\cal B}$. Such a choice of $\gamma_i$'s is dictated by our choice of rotation of the null basis: in the sequel it will become clear how different rotations will reflect into different values of $\gamma_i$'s.

Indeed, let us consider the measurement in the rotated basis
\[
\ket{\tilde{v}^{\delta_n}_k}=\exp \left (-i\sum_{k=1}^{m}\delta_n \sigma_y^k
\right) |\tilde{k}\rangle, \quad k=0, \dots, d-1.
\]
In the limit model, this amounts to measuring the following commuting number operators: for $k=1,\dots, d-1$ we define
\[
\widetilde{N}_k=\tilde{a}_k^*\tilde{a}_k, \quad \tilde{a}_k= \sum_{i=1}^{m} \alpha_{ki} (a_i - n^{3\epsilon} \mathbf{1}) +\sum_{i=m+1}^{d-1}\alpha_{ki} a_i
\]
where $\alpha_{ki}=\langle \partial_{\theta^i}\psi(0)|\tilde{k}\rangle$ if $i=1,\dots, m$ and $\langle i,\tilde{k} \rangle$ if $i =m+1,\dots, d-1$. In the coherent state $\ket{\bm{u}}$, they are uncorrelated Poisson random variables with intensities equal to $\left | \sum_{i=1}^{m} \alpha_{ki} (u^i- n^{3\epsilon} )\right |^2$. If Eq. \eqref{eq:omc3} holds, we can safely define the operators 
\[
\widetilde{X}^k=\frac{n^{6\epsilon}\left | \sum_{i=1}^{m} \alpha_{ki} \right |^2-\widetilde{N}_k}{2n^{3\epsilon}\left | \sum_{i=1}^{m} \alpha_{ki} \right | },
\]
which in the coherent state $\ket{\bm{u}}$ are approximately (for large $n$) distributed as a multivariate Gaussian random vector with covariance matrix equal to $\mathbf{1}/4$ and mean vector equal to $\bm{\tilde{u}}$ with 
\[
\bm{\tilde{u}}=A\bm{u}, \quad A_{ki}= \Re \left ( \frac{\sum_{j=1}^m \bar{\alpha}_{kj}}{\left | \sum_{j=1}^m \alpha_{kj} \right |}\cdot \alpha_{ki}\right ).
\]
One can estimate $\bm{u}$ from the mean vector $\bm{\tilde{u}}$ if and only if the rank of $A$ is equal to $m$, i.e. the column of $A$ are linearly independent. In this case there exists a $m \times d-1$ real matrix $B$ such that $BA=\mathbf{1}$ and one has that $\bm{\widetilde{Y}}=B \bm{\widetilde{X}}$ is approximately a Gaussian multivariate random vector with mean $\bm{u}$ and covariance matrix equal to $B B^T/4$. This estimator is optimal if and only if $BB^T=\mathbf{1}$, i.e. the columns of $A$ are orthonormal vectors; the normalization condition reads
\begin{equation} \label{eq:normal}
\sum_{k=1}^{d-1}  \Re \left ( \frac{\sum_{j=1}^m \bar{\alpha}_{kj}}{\left | \sum_{j=1}^m \alpha_{kj} \right |}\cdot \alpha_{ki}\right )^2=1 \quad i=1,\dots, m
\end{equation}
and the orthogonality condition requires that for every $i \neq l$ the following holds true
\begin{equation} \label{eq:ortho}
\sum_{k=1}^{d-1}  \Re \left ( \frac{\sum_{j=1}^m \bar{\alpha}_{kj}}{\left | \sum_{j=1}^m \alpha_{kj} \right |}\cdot \alpha_{ki}\right )\cdot\Re \left ( \frac{\sum_{j=1}^m \bar{\alpha}_{kj}}{\left | \sum_{j=1}^m \alpha_{kj} \right |}\cdot \alpha_{kl}\right )=0.
\end{equation}
Using that for a complex number $z$ one has $\Re(z)^2 \leq |z|^2$ with equality holding if and only if $z \in \mathbb{R}$, one has
\begin{align*}
&\sum_{k=1}^{d-1}  \Re \left ( \frac{\sum_{j=1}^m \bar{\alpha}_{kj}}{\left | \sum_{j=1}^m \alpha_{kj} \right |}\cdot \alpha_{ki}\right )^2 \leq \sum_{k=1}^{d-1}  |\alpha_{ki}|^2 \\
&=\sum_{k=1}^{d-1}  |\langle \partial_{\theta^i}\psi(0)|\tilde{k}\rangle|^2=\|\partial_{\theta^i}\psi(0)\|^2=1.
\end{align*}
Therefore Eq. \ref{eq:normal} is equivalent to ask that for every $i=1,\dots,m$ and every $k=1,\dots, d-1$
$$ \frac{\sum_{j=1}^m \bar{\alpha}_{kj}}{\left | \sum_{j=1}^m \alpha_{kj} \right |}\cdot \alpha_{ki} \in \mathbb{R},
$$
which is exactly the condition in Eq. \eqref{eq:omc2}. If the normalization condition holds, the orthogonality one is automatically satisfied: if $i \neq l$, then
\begin{align*}
\sum_{k=1}^{d-1}   \bar{\alpha}_{ki} \alpha_{kl}&=\sum_{k=1}^{d-1} \bar{\alpha}_{ki} \alpha_{kl}=\sum_{k=1}^{d-1}\langle \partial_{\theta^l}\psi(0)|\tilde{k}\rangle\langle \tilde{k}|\partial_{\theta^i}\psi(0)\rangle\\
&=\langle \partial_{\theta^l}\psi(0)|\partial_{\theta^i}\psi(0)\rangle=0.
\end{align*}  
We remark that if an element $\ket{\tilde{k}}$ in $\widetilde{\cal B}$ is orthogonal to all the $\ket{\partial_{\theta_i} \psi}$, it means that at first order the statistical model around $\bm{\theta}=0$ is orthogonal to $\ket{\tilde{k}}$, which therefore does not play any role in the construction of an asymptotically optimal measurement; however, if this is not the case, but $\ket{\tilde{k}}$ does not comply with Eq. \eqref{eq:omc3}, the displaced-null measurement constructed from $\widetilde{{\cal B}}$ may not be able to solve the unidentifiability issue: indeed $\widetilde{N}_k$ is a Poisson random variable with intensity $|\sum_{i=1}^{m} \alpha_{ki} u^i|^2$, which is not affected by the rotation of the null-basis.
}

\section{Conclusions and outlook}
In this paper we showed that the framework of displaced-null 
measurements provides a general scheme for optimal estimation of unknown parameters ${\bm \theta}\in \mathbb{R}^m$ of pure states models $|\psi_{\bm \theta}\rangle\in \mathbb{C}^d$. In particular, displaced-null measurements achieve the quantum Cram\'{e}r-Rao bound (QCRB) for models in which the bound is achievable, and the Holevo bound for general qudit models.

Our method is related to previous works \cite{NullQFI1,NullQFI2,NullQFI3} that deal with the achivebility of the QCRB for pure state models $|\psi_{\bm \theta}\rangle$. These works exhibit a class of parameter-dependent orthonormal bases $\mathcal{B}(\tilde{\bm \theta})$ whose associated classical Fisher information $I_{\tilde{\bm \theta}}({\bm \theta})$ converges to the quantum Fisher information $F({\bm \theta})$ of $|\psi_{\bm \theta}\rangle$ as 
$\tilde{\bm \theta}$ approaches the true unknown state parameter ${\bm \theta}$. The measurement basis $\mathcal{B}(\tilde{\bm \theta})$ has the 
special feature that it contains the state $|\psi_{\tilde{\bm \theta}}\rangle$ as one of its elements,  so that at $\tilde{\bm \theta}={\bm \theta}$ the measurement has only one outcome, while for $\tilde{\bm \theta}\approx{\bm \theta}$ the occurrence of other outcomes can be interpreted as signaling the deviation from the reference value $\tilde{\bm \theta}$. With this in mind we called such measurements, null measurements. 

However, the references \cite{NullQFI1,NullQFI2,NullQFI3} do not provide an explicit operational implementation of a strategy that achieves the QCRB. The naive solution would be to choose the reference parameter as a preliminary estimator $\tilde{\bm \theta}_n$ obtained by measuring a sub-sample of $\tilde{n}\ll n$ systems, and to apply the approximate null measurement $\mathcal{B}_{\tilde{\bm \theta}_n}$ to the rest of the systems. Surprisingly, it turned out that this adaptive strategy fails to achieve the QCRB, and indeed does not even reach the standard $n^{-1}$ scaling of precision, when the preliminary estimator satisfies certain natural assumptions. This is due to the fact that $\tilde{\bm \theta}_n $ lies in the interior of a confidence interval of ${\bm \theta}$ and the measurement cannot distinguish positive and negative deviations from the 
reference since probabilities depend on the square of the deviations. This is an important finding which shows the pitfalls of drawing statistical conclusions based solely on Fisher information arguments. 

To avoid this issue, we proposed to displace the preliminary estimator by a small amount ${\bm \delta}_n$ which is however sufficiently large to ensure that the new reference parameter $\tilde{\bm \theta}_n+ {\bm \delta}_n$ is outside the confidence interval of ${\bm \theta}$. Building on this idea we showed the achievability of the QCRB in the setting of \cite{NullQFI1,NullQFI2,NullQFI3}. Furthermore, for general pure state models and locally quadratic loss functions, we devised displaced-null measurements which achieve the Holevo bound asymptotically for arbitrary qudit models.  

The theory of quantum local asymptotic normality (QLAN) has played an important role in our investigations. The QLAN machinery translates the multi-copy estimation problem into one about estimating the mean of a multi-mode coherent state. In the latter case, counting measurements are paradigmatic example of null-measurements, while appropriately displacing the number operators provides the basis for displaced-null measurements. Using the QLAN correspondence, this translates into a simple prescription for rotating a basis containing the preliminary estimator $|\psi_{\tilde{\bm \theta}_n}
\rangle$ into that of the displaced-null measurement. Interestingly, the obtained measurement turned out to be closely related to the parameter-dependent measurements proposed by Matsumoto in \cite{Ma02}, and our approach offers an alternative asymptotic perspective on this work.

An exciting area of applications for displaced-null measurements is that of optimal estimation of dynamical parameters of open systems \cite{GW01,Guta2011,Molmer14,Guta_2015,GutaCB15,Guta_2017,Ilias22,Fallani22}. 
Recent works \cite{Godley2023,DayouCounting} have shown out that quantum post-processing by means of coherent absorbers allows for optimal estimation of such parameters. In particular \cite{DayouCounting} pointed out that a basic measurement such as photon counting constitutes a null-measurement, thus opening the route for devising optimal measurements for multidimensional estimation of Markov dynamics. An asymptotic analysis of displaced null measurements in this context will be the subject of a forthcoming publication \cite{GGG}.

Another area of  future interest is 
to extend the method to models consisting of mixed states. While this will probably not work in general, the ideas presented here may be useful for models consisting of states with a high degree of purity which is the relevant setup in many quantum technology applications. Another important extension is towards refining the methodology for optimal estimation in the finite sample rather than asymptotic regime. Finally, we would like to better understand how displaced-null measurements can be used in the context of quantum metrology and interferometry \cite{SCHNABEL17,BachorRalph}.

\vspace{4mm}
\noindent
{\bf Acknowledgements:} This work was supported by the EPSRC grant EP/T022140/1. We acknoweledge fruitful discussions with Dayou Yang, Rafa\l{} Demkowicz-Dobrza\'{n}ski, Janek Ko\l{}ody\'{n}ski and Richard Gill.

\bibliographystyle{apsrev4-1}
\bibliography{biblio.bib}

\appendix
\section{Proof of the QCRB-saturating POVM} \label{sec:QCRBsat}
In this appendix we present the necessary and sufficient conditions for a quantum Cram\'er-Rao bound saturating POVM, based on the proof by Zhou et al. \cite{Zhou2018}, highlighting the features that allow this POVM to saturate the bound. The classical Fisher information corresponding to the measurement $\mathcal{M}=\{M_i\}$ is given by

\begin{align*}
    I_{\mathcal{M}}(\theta) &= \sum_{i:{\rm Tr}(M_i\rho_\theta)\neq0} \frac{({\rm Tr}(M_i \partial_\theta \rho_\theta))^2} {{\rm Tr}(M_i \rtht)} \\
    &= \sum_{i:{\rm Tr}(M_i\rho_\theta)\neq0} \frac{({\rm Re}[Tr(M_i \mathcal{L}_\theta\rtht)])^2} {{\rm Tr}(M_i \rtht)},
\end{align*}
where we have used the Lyapunov equation, see section \ref{sec:QCRB}, and identified that the resulting term corresponds to the above real component. Clearly 

\begin{align*}
    \sum_{i:{\rm Tr}(M_i\rho_\theta)\neq0} &\frac{({\rm Re}[{\rm Tr}(M_i \mathcal{L}_\theta\rtht)])^2} {{\rm Tr}(M_i \rtht)} \\
    &\leq 
    \sum_{i:{\rm Tr}(M_i\rho_\theta)\neq0} \frac{|{\rm Tr}(M_i \mathcal{L}_\theta \rtht)|^2}{{\rm Tr}(M_i \rtht)},
\end{align*}
where we have equality when ${\rm Im}[{\rm Tr}(M_i \mathcal{L}_\theta \rtht)] = 0$. We now use the Cauchy-Schwarz inequality to cancel the denominator, identifying the terms $M_i^{\half}\rtht^\half$ and $M_i^\half \Ltht \rtht^\half$ within the expression above, finding

\begin{equation}
    \sum_{i:{\rm Tr}(M_i\rho_\theta)\neq0} \frac{|{\rm Tr}(M_i \mathcal{L}_\theta \rtht)|^2}{{\rm Tr}(M_i \rtht)} \leq
    \sum_{i:{\rm Tr}(M_i\rho_\theta)\neq0} {\rm Tr}(M_i \Ltht \rtht \Ltht).
\end{equation}
The equality holds if 
\begin{equation}
    M_i^\half \rtht^\half = \lambda_i M_i^\half \mathcal{L}_\theta \rtht^\half
\end{equation}
for all $i$ such that ${\rm Tr}(M_i\rho_\theta)\neq0$ for some $\lambda_i\in \mathbb{C}$.
Finally, since $M_i$ represents a POVM we have 

\begin{equation}
    \sum_{i:{\rm Tr}(M_i\rho_\theta)\neq0} {\rm Tr}(M_i \Ltht \rtht \Ltht) \leq {\rm Tr}(\mathcal{L}_\theta^2 \rtht) \equiv F(\theta).
\end{equation}
where $F(\theta)$ is the quantum Fisher information. Equality is achieved when ${\rm Tr}(M_i \Ltht \rtht \Ltht)=0$ for all $i$ such that ${\rm Tr}(M_i\rtht)=0$. Note: this ensures that all the information on the parameter is contained in measurable outcomes with non-zero probabilities. 

In summary, to achieve the QFI the POVM $\{M_i\}$ needs to satisfy the 3 following conditions:
\begin{enumerate}
    \item If ${\rm Tr}(M_i\rtht)>0$ then ${\rm Im}[{\rm Tr}(M_i \mathcal{L}_\theta \rtht)] = 0$,
    \item If ${\rm Tr}(M_i\rtht)>0$ then  $M_i^\half \rtht^\half = \lambda_i M_i^\half \mathcal{L}_\theta \rtht^\half$ ,$\lambda_i\in\mathbb{C}$,
        \item If ${\rm Tr}(M_i\rtht)=0$ then ${\rm Tr}(M_i \Ltht \rtht \Ltht)=0$.  
\end{enumerate}

\comm{
We now consider a pure state model $\ket{\psi_\theta}$ and apply these conditions in order to construct an optimal measurement basis. Firstly, for pure states we have 
\begin{align}
    \Ltht = &2(|\partht\psi_\theta\rangle\langle\psi_\theta| + |\pstht\rangle\langle\partht\pstht|), \\
    F(\theta) = &4(\|\partht \pstht\|^2-|\langle\partht \pstht | \pstht\rangle|^2).
\end{align}
}

We can combine conditions 1. and 2. into the following condition:
\begin{enumerate}
    \item[4.] If ${\rm Tr}(M_i\rtht)>0$ then $M_i^\half \rtht^\half = \lambda_i M_i^\half \mathcal{L}_\theta \rtht^\half$, $\lambda_i\in\mathbb{R}$ .
\end{enumerate}

\comm{
This also encapsulates the third condition, which can be seen by applying this condition to $Tr(M_i\rtht)$ twice. Simplifying this single remaining condition into a useful form then relies upon vectorization. We find 

\begin{equation*}
    (M_i^\half \otimes \mathbb{1})|\rtht^\half\rangle\rangle = \lambda_i (M_i^\half \otimes \mathbb{1}) |\mathcal{L}_\theta \rtht^\half\rangle\rangle,
\end{equation*}
where $|A\rangle\rangle = \sum_{ij}\langle i|A|j \rangle |i\rangle |j\rangle$. To reformulate this we notice the following

\begin{equation*}
    |v\rangle\rangle = \lambda|w\rangle\rangle \Leftrightarrow |v\rangle\rangle\langle\langle w| - |w\rangle\rangle \langle\langle v| = 0.
\end{equation*}
Our condition is then equivalent to 

\begin{equation}
    (M_i^\half \otimes \mathbb{1}) (|\rtht^\half\rangle\rangle \langle\langle \mathcal{L}_\theta \rtht^\half | - | \mathcal{L}_\theta \rtht^\half \rangle\rangle \langle\langle \rtht^\half |) (M_i^\half \otimes \mathbb{1}).
\end{equation}
We can simplify this condition to 
\begin{equation}
    M_i^\half A_{jk} M_i^\half, \quad \forall i,j,k
\end{equation}
where $A_{ij}=\ket{\psi_{\theta,i}} \bra{\psi_{\theta,j}}\mathcal{L}_\theta - \mathcal{L}_\theta \ket{\psi_{\theta,i}} \bra{\psi_{\theta,j}}$. Here we have assumed that $\rho_\theta=\sum_k p_{\theta,k}\ket{\psi_{\theta,k}} \bra{\psi_{\theta,k}}$, $p_{\theta,k}>0$. 

Finally, for pure states and projective measurements our necessary and sufficient condition simplifies down further to 

\begin{equation}
    M_i^\half A_{00} M_i^\half = 0, \quad \forall i 
\end{equation}
where $A_{00} = \ket{\psi_{\theta}} \bra{\psi_{\theta}} \mathcal{L}_\theta - \mathcal{L}_\theta \ket{\psi_{\theta}} \bra{\psi_{\theta}}$.
}

\section{Parameter localisation via a two step adaptive procedure}
\label{sec:adaptive.argument}
Here we discuss in more detail the general \emph{parameter localisation} principle to which we refer repeatedly in the paper. The principle is formulated for one-dimensional models, but can be extended staightforwardly to multidimensional ones.

Suppose we are given a large number $n$ of independent, identically prepared systems in the state $\rho_\theta$, depending smoothly on a parameter $\theta$ which lies in an open set $\Theta\subset\mathbb{R}$. To avoid pathological cases we assume that $F(\theta)>f>0$ for all $\theta\in\Theta.$
%
Even though the set $\Theta$ is a priori `large', we can `localise' the parameter and subsequently perform  measurements adapted to the parameter value, by using the following two step procedure. 

Consider a measurement $\mathcal{M}$ such that $\theta$ is \emph{identifiable}, i.e. no two different parameters produce outcomes with identical probability distributions.
In the first step we apply $\mathcal{M}$ to each system belonging to a vanishingly small proportion $\tilde{n}=o(n)$ of the samples, with $\tilde{n}$ growing with $n$. For concreteness we assume that   $\tilde{n}= n^{1-\epsilon} \ll n$, with $\epsilon>0$ a small number, but the arguments hold for a generic choice. Using the data obtained from measuring this sub-ensemble we construct a preliminary rough estimator $\tilde{\theta}_n$ of $\theta$. 

Naturally, one would like the estimator $\tilde{\theta}_n$ to be `pretty good' (given the used sample size), but not necessarily optimal. There are several properties that could embody this requirement; for example one could require that the mean square error (MSE) scales at the standard rate $\tilde{n}^{-1}=n^{-1+\epsilon}$ and $\tilde{\theta}_n$ is asymptotically normal, i.e. it concentrates around $\theta$ at rate $\tilde{n}^{-1/2}$ with 
$$
\sqrt{\tilde{n}} (\tilde{\theta}_n-\theta) \longrightarrow
N(0, V_\theta)
$$ 
where the convergence to the normal distribution  holds in law as $n\to\infty$ and the variance satisfies $V_\theta\geq F(\theta)^{-1}$. Standard estimators such as maximum likelihood typically satisfy this property \cite{Lehmann1998}. In particular, this implies that the (confidence) interval 
$$
I_n = (\tilde{\theta}_n-n^{-1/2 +\epsilon}, ~ \tilde{\theta}_n+n^{-1/2 +\epsilon})
$$ 
contains $\theta$ with probability converging to one exponentially fast. This follows from the fact that the ratio between the size of 
$I_n$  and the standard deviation of $\tilde{\theta}_n-\theta$ diverges as  $|I_n|/n^{(-1+\epsilon)/2} = n^{\epsilon/2}$.

On the other hand, if one adopts a Bayesian viewpoint and assumes the existence of a prior distribution on $\Theta$ with density $\pi(\theta)$, then it is natural to require the asymptotic normality of the posterior density 
$$
p(\theta|\tilde{\theta}_n):= \frac{\pi(\theta) p(\tilde{\theta}_n|\theta)}{p(\tilde{\theta}_n)}
$$ 
where $p(\tilde{\theta}_n|\theta) = p_\theta(\tilde{\theta}_n)$ and $p(\tilde{\theta}_n) =  \int p(\tilde{\theta}_n|\theta) \pi(d\theta)$. Intuitively this follows from the asymptotic normality of $\tilde{\theta}_n$. Indeed if the prior $\pi(\theta)$ and the variance $V_\theta$ are sufficiently regular with respect to $\theta$ and $V_\theta$ is bounded away from zero, then 
$p(\theta| \tilde{\theta}_n)\propto \exp(-\tilde{n}(\tilde{\theta}_n-\theta)^2/2 V_\theta)$ concentrates around $\tilde{\theta}_n$  with approximately normal distribution. For more details on asymptotic normality theory we refer to \cite{Vaart1998,young&Smith}. For our purposes, it will suffice to assume that $\tilde{\theta}_n$ is a `reasonable' estimator in the sense that the posterior distribution is `balanced' with respect to  $\tilde{\theta}_n$
in a sense that is precisely defined in section \ref{sec.null}. In particular, this means that we exclude `dishonest' estimators for which the mass of the posterior distribution lies largely on one side of the estimator. For instance, taking a reasonable estimator $\tilde{\theta}_n$ and adding a constant 
$\delta_n$ such that $\delta_n /\tilde{n}^{-1/2} \to \infty$ for large $n$ would be an example of a dishonest estimator. As we will see later this distinction becomes important since the preliminary estimator enters the definition of the second stage estimator, and the performance of null measurements based on reasonable/dishonest estimators is radically different.

\emph{Adaptive step.} We pass now to the second step of the estimation procedure in which one measures the remaining $n^\prime = n-\tilde{n}$ systems, taking into account the information provided by the first step. We distinguish two measurement strategies, the SLD measurement and the approximative null measurement.

For 
one-dimensional parameters, an \emph{optimal} procedure is to measure the SLD $\mathcal{L}_{\tilde{\theta}_n}$ separately on each system and then construct the (final) estimator 
$\hat{\theta}_n = \tilde{\theta}_n + \bar{X}_n/F(\tilde{\theta}_n)
$ 
where $
\bar{X}_n= \frac{1}{n^\prime} \sum_{i=1}^{n^\prime}X_i
$
is the average of the measurement outcomes. 
Assuming that the preliminary estimator is consistent (i.e. $\tilde{\theta}_n\to \theta$ for large $n$), we obtain that  $\hat{\theta}_n$  achieves the multicopy (asymptotic) version of the QCRB in the sense that 
$$
\lim_{n\to\infty}
n \mathbb{E}_\theta [(\hat{\theta}_n -\theta)^2 ] = F(\theta)^{-1}.
$$
Moreover, $\hat{\theta}_n$ is  asymptotically normal 
$$
\sqrt{n}(\hat{\theta}_n -\theta) \longrightarrow N(0,F(\theta)^{-1})
$$
thus providing us with simple asymptotic confidence intervals.

Let us consider now the case of  null measurements. In section \ref{sec:QCRB} we showed that if we measure in a basis 
$\mathcal{B}(\theta) = 
\{|v_1\rangle, \dots,|v_d\rangle\}$ such that $|v_1\rangle = \ket{\psi_\theta}$ then the classical Fisher information is zero. However, at $\tilde{\theta}\approx \theta$ ($\tilde{\theta} \neq \theta$) the approximate null measurement with respect to a basis $\mathcal{B}(\tilde{\theta})$ has classical Fisher information $I_{\mathcal{B}(\tilde{\theta})}\approx F(\theta)$. As anticipated in section \ref{sec:QCRB}, the adaptive strategy use for the SLD measurement does not work in the case of null measurements when the initial estimator is reasonable. Proving this will be the subject of section \ref{sec.null}.

Finally, let us briefly consider the case of multidimensional parameter models. In this setting, separate measurements may not be optimal in the second step due to non-commutativity of the SLD operators for different parameter components. However, using the information contained in $\tilde{\theta}_n$, we can devise collective measurements procedures which are asymptotically optimal in the sense of achieving the Holevo bound \cite{Holevo2011}. This can be understood by employing the local asymptotic normality (LAN) theory \cite{LAN1, LAN2, LAN3}, which we briefly recall in section \ref{sec:LAN-qudits}.

Note that since step one uses a vanishing proportion of the samples, the asymptotic result remains the same if we assume that $n$ samples are available in the second step. Therefore, in order to simplify notation, in the sequel we will replace $n^\prime=n-\tilde{n}$ by $n$.

\section{Proof of Lemma \ref{lem:reas} on existence of reasonable estimators}
\label{sec:proof.lemma.reasonable.estimator}

The outcomes $X_1, \dots , X_{\tilde{n}}$ have probabilities
$$
p_\theta(0) = \cos^2(\theta- \pi/4), \quad
p_\theta(1) = \sin^2(\theta- \pi/4).
$$
The maximum likelihood estimator is $\tilde{\theta}_n = \pi/4+
f^{-1}(\bar{X}_n)$ 
where 
$\bar{X}_n = 
\frac{1}{\tilde{n}}\sum_{i=1}^{\tilde{n}}X_i
$
and $f(x) =\sin^2(x)$ (which is invertible on $(-\pi/2, 0)$).

For every $k=0, \dots, \tilde{n}$ and $\theta \in [-\pi/8, \pi/8]$, the density of the posterior at time $n$ is given by
\begin{equation} \label{eq:post}
\begin{split}
&\pi(\theta|\tilde{\theta}_n=\pi/4+f^{-1}(k/ \tilde{n}))=\\
&\frac{\sin^{2k}(\theta-\pi/4) \cos^{2( \tilde{n}-k)}(\theta-\pi/4)}{\int_{-\frac{\pi}{8}}^{\frac{\pi}{8}}\sin^{2k}(\zeta-\pi/4) \cos^{2( \tilde{n}-k)}(\zeta-\pi/4)d\zeta}
\end{split}
\end{equation}
and the unconditional distribution of $\tilde{\theta}_n$ is given by
\[\begin{split}
&\mathbb{P}(\tilde{\theta}_n =\pi/4+f^{-1}(k/ \tilde{n}))=\\
&\frac{4}{\pi} \binom{ \tilde{n}}{k}\int_{-\frac{\pi}{8}}^{\frac{\pi}{8}}\sin^{2k}(\theta-\pi/4) \cos^{2( \tilde{n}-k)}(\theta-\pi/4)d\theta.
\end{split}
\]
Consistency of $\tilde{\theta}_n$ and the dominated convergence theorem imply that for every Borel set
\[
\lim_{n \rightarrow +\infty}\mathbb{P}(\tilde{\theta}_n \in A) =\frac{4}{\pi} \int_{-\frac{\pi}{8}}^{\frac{\pi}{8}}\chi_{A}(\theta) d\theta.
\]
This allows us to consider for instance $A_n:=(-\pi/8,\pi/8)$. Moreover, we can rewrite equation \eqref{eq:post} as
\[
\frac{e^{- \tilde{n} H(\sin^{2}(\tilde{\theta}_n-\pi/4),\,\sin^{2}(\theta-\pi/4))}}{\int_{-\frac{\pi}{8}}^{\frac{\pi}{8}}e^{- \tilde{n} H(\sin^{2}(\tilde{\theta}_n-\pi/4),\,\sin^{2}(\zeta-\pi/4))}d\zeta}
\]
where $H(p,q)=-p\log(q)-(1-p)\log(1-q)$. Notice that if $\tilde{\theta}_n \in (-\pi/8,\pi/8)$, then $H(\sin^{2}(\tilde{\theta}_n-\pi/4),\,\sin^{2}(\theta-\pi/4))$ admits a unique minimum in $[-\pi/8, \pi/8]$ at $\theta=\tilde{\theta}_n$ where it vanishes and where the value of the second derivative is equal to $4$. Therefore, for $n$ big enough (uniformly in $\tilde{\theta}_n \in (-\pi/8,\pi/8)$) one has that
\[
\begin{split}
&\int_{\tilde{\theta}_n + \tau_n}^{\tilde{\theta}_n + \sqrt{\tilde{n}}} e^{- \tilde{n} H(\sin^{2}(\tilde{\theta}_n-\pi/4),\,\sin^{2}(\theta-\pi/4))}d\theta \\
&\geq \int_{\tilde{\theta}_n + \tau_n}^{\tilde{\theta}_n + \tilde{n}^{-1/2}} e^{- \frac{5\tilde{n} (\theta-\tilde{\theta}_n)^2}{2}}d\theta\\
&=\frac{1}{\sqrt{\tilde{n}}}\int_{n^{-\epsilon/4}}^{1}e^{- \frac{5\theta^2}{2}}d\theta
\end{split}
\]
and analogously for the integral in the interval $[\tilde{\theta}_n-\sqrt{\tilde{n}},\tilde{\theta}_n-\tau_n]$. Moreover
\begin{equation*}
\begin{split}
&\int_{-\frac{\pi}{8}}^{\frac{\pi}{8}}e^{- \tilde{n} H(\sin^{2}(\tilde{\theta}_n-\pi/4),\,\sin^{2}(\zeta-\pi/4))}d\zeta \\
&=\int_{|\zeta-\tilde{\theta}_n| \geq n^{-1/2+\epsilon}}e^{- \tilde{n} H(\sin^{2}(\tilde{\theta}_n-\pi/4),\,\sin^{2}(\zeta-\pi/4))}d\zeta\\
&+\int_{|\zeta-\tilde{\theta}_n| \leq n^{-1/2+\epsilon}}e^{- \tilde{n} H(\sin^{2}(\tilde{\theta}_n-\pi/4),\,\sin^{2}(\zeta-\pi/4))}d\zeta\\
&\leq e^{-3n^{\epsilon}/2}+ \int_{\tilde{\theta}_n - n^{-1/2+\epsilon}}^{\tilde{\theta}_n + n^{-1/2+\epsilon}} e^{- \frac{3\tilde{n} (\zeta-\tilde{\theta}_n)^2}{2}}d\zeta\\
&= e^{-3n^{\epsilon}/2}+ \frac{1}{\sqrt{\tilde{n}}}\int_{-n^{-\epsilon/2}}^{n^{-\epsilon/2}}e^{- \frac{3\zeta^2}{2}}d\zeta
\end{split}
\end{equation*}
and we are done, since we just proved that for $n$ big enough (uniformly in $\tilde{\theta}_n \in (-\pi/8,\pi/8)$), one has
\[\begin{split}
    &\mathbb{P}(\theta \geq \tilde{\theta}_n \pm \tau_n) \\
    &\geq \frac{\int_{n^{-\epsilon/4}}^{1}e^{- \frac{5\theta^2}{2}}d\theta}{\sqrt{\tilde{n}}e^{-3n^{\epsilon}/2}+ \int_{-n^{-\epsilon/2}}^{n^{-\epsilon/2}}e^{- \frac{3\zeta^2}{2}}d\zeta}\geq c >0
\end{split}\]
for some $c$ independent on $n$ and $\tilde{\theta}_n.$

\qed 
\section{Proof of Theorem \ref{prop.null} on suboptimality of the null measurement with reasonable preliminary estimator}
\label{app:proof.theorem.null.measurement}

Taking into account the two step procedure we write
\begin{eqnarray*}
\mathbb{E}_\theta [(\hat{\theta}_n -\theta)^2] &=& \int_{\mathbb{R}} p(d\hat{\theta}_n|\theta)
(\hat{\theta}_n -\theta)^2\\
&=&
\int_{\mathbb{R}^2}
p(d\tilde{\theta}_n|\theta) p(d\hat{\theta}_n|\theta, \tilde{\theta}_n) (\hat{\theta}_n -\theta)^2
\end{eqnarray*}
where $p(d\tilde{\theta}_n|\theta)$ is the distribution of the preliminary estimator at $\theta$ and $p(d\hat{\theta}_n|\theta, \tilde{\theta}_n)$ is the distribution of the final estimator given $\theta$ and $\tilde{\theta}_n$. Since the final estimator is obtained by measuring at angle $\tilde{\theta}_n$, its distribution depends only on $r=|\tilde{\theta}_n -\theta|$, so $p(d\hat{\theta}_n|\tilde{\theta}_n,\theta)= p_r(d\hat{\theta}_n)$.

The Bayesian risk of the final estimator $\hat{\theta}_n$ is
\begin{eqnarray*}
R_\pi(\hat{\theta}_n) &=&
\mathbb{E} 
[(\hat{\theta}_n-\theta)^2] \\
&=&
\int_{\Theta \times \mathbb{R}^2} \pi(d\theta) p(d\tilde{\theta}_n|\theta)
p(d\hat{\theta}_n|\theta,\tilde{\theta}_n) (\hat{\theta}_n-\theta)^2\\
&=&
\int_{\mathbb{R}} p(d\tilde{\theta}_n)
\int_{\Theta \times \mathbb{R}} \pi(d\theta|\tilde{\theta}_n)p(d\hat{\theta}_n|\theta,\tilde{\theta}_n)(\hat{\theta}_n-\theta)^2
\end{eqnarray*}
We have
\begin{eqnarray*}
&&\int_\Theta\int_{\mathbb{R}} \pi(d\theta|\tilde{\theta}_n)p(d\hat{\theta}_n|\theta,\tilde{\theta}_n)(\hat{\theta}_n-\theta)^2
\\
&&=
\int_{r\geq 0}dr\int_{\mathbb{R}}  p_r(d\hat{\theta}_n)  \times\\
&&
[\pi(\tilde{\theta}_n +r|\tilde{\theta}_n)(\hat{\theta}_n -\tilde{\theta}_n -r)^2+\\
&&\pi(\tilde{\theta}_n -r|\tilde{\theta}_n)(\hat{\theta}_n -\tilde{\theta}_n +r)^2]   
\end{eqnarray*}
By assumption, $\pi(\tilde{\theta}_n \pm r|\tilde{\theta}_n) \geq 
g_{n,\tilde{\theta}_n}(r) $ and since
$$
(\hat{\theta}_n - \tilde{\theta}_n -r)^2+ 
(\hat{\theta}_n - \tilde{\theta}_n +r)^2
\geq
2 r^2
$$
we get that for every $\tilde{\theta}_n \in A_n$
\begin{eqnarray*}
&&\int_{\Theta \times \mathbb{R}} \pi(d\theta|\tilde{\theta}_n)p(d\hat{\theta}_n|\theta,\tilde{\theta}_n)(\hat{\theta}_n-\theta)^2\\
&&\geq 
\int_{|r|\geq n^{-(1/2-\epsilon/4)}} g_{n,\tilde{\theta}_n}(r) 2r^2 dr \int_\mathbb{R} p_r (d\hat{\theta}_n) \\
&&\geq 2Cn^{-1+\epsilon/2},\\
\end{eqnarray*}
where the last inequality follows from point 2. in the definition of the reasonable estimator. Finally, since $\mathbb{P}(\tilde{\theta}_n\in A_n)\geq c$, we get 
$$
R_\pi (\hat{\theta}_n) \geq 2c Cn^{-1+\epsilon/2} 
$$
which implies the result.

\qed

\section{Proof of Theorem \ref{prop.null} for weaker notions of unidentifiability} \label{sec:local}
As we already mentioned, in the proof of Proposition \ref{prop.null} we made use of the fact that for the statistical model defined in equation \eqref{eq:qubit.model}, the law of the measurements in the basis containing $\ket{\psi_{\tilde{\theta}_n}}$ could not distinguish between $\tilde{\theta}_n \pm r$.

In the qubit case, we can still prove Theorem \ref{prop.null} for a wider class of one-parameter models under two additional assumptions. The first one is asking that unidentifiable parameters concentrate around the preliminary estimate at the same speed on both sides; more precisely, let us consider a general (smooth) one parameter model $\ket{\psi_{{\tilde{\theta}_n}+r}}$ for $r \in (-a,a)$; the corresponding probabilities describing the measurement in the $\tilde{\theta}_n$-null-basis are given by
\[p_r(1)=|\langle\psi_{\tilde{\theta}_n}|\psi_{{\tilde{\theta}_n}+r}\rangle|^2=1-p_r(0).\]

In general, there is no reason why $p_r=p_{-r}$, however at $r=0$ the function $p_r(1)$ has a global minimum and we can pick a neighborhood $(-a^\prime,b^\prime)$ such that
\begin{enumerate}
\item $p_r(1)$ is invertible on $(-a^\prime,0]$ and $[0,b^\prime)$,
\item $p_r(1)$ maps both $(-a^\prime,0]$ and $[0,b^\prime)$ onto the same interval.
\end{enumerate}
A priori, the neighborhood depends on $\tilde{\theta}_n$, but if the preliminary estimator takes value in a compact set, we can find a nonempty neighborhood $(-a^\prime,b^\prime)$ that works for every value of $\tilde{\theta}_n$. If we denote by $r^\prime(r)$ the unique value in $(-a^\prime,0]$ such that $p_{r^\prime(r)}(1)=p_r(1)$ for $r \in [0,b^\prime)$, we require
\begin{equation} \label{eq:bilip}
\begin{split}
&r \mapsto r^\prime(r) \text{ and its inverse to be Lipschitz with a }\\
&\text{Lipschitz constant } L \text{ that is uniform in }\tilde{\theta}_n.
\end{split}
\end{equation}

The second assumption consists in replacing the condition in equation \ref{eq:constantmass} with
\begin{equation} \label{eq:constantmassI}
\min \left \{\int_{\tau_n}^{\tau_n^\prime}\,g_{n,\tilde{\theta}_n}(r)dr, \int_{\tau_n/L}^{L\tau_n^\prime}\,g_{n,\tilde{\theta}_n}(r)dr\right \} \geq C
\end{equation}
where $ \tau_n := n^{-(1-\epsilon+\alpha)/2}$, $\tau^\prime_n := n^{-(1-\epsilon-\beta)/2}$ for some $\alpha,\beta$ such that $\tau^\prime_n=o(1)$, and $C>0$ is a constant independent on $n$. The additional requirement is that the posterior measure concentrates around the preliminary estimator. Under assumptions \eqref{eq:bilip} and \eqref{eq:constantmassI}, the proof of Theorem \ref{prop.null} can be adapted quite straightforwardly. This reparametrisation trick, however, cannot be repeated in the qudit case.

\bigskip More generally, if instead of conditions \ref{eq:constantmass} and \ref{eq:constantmassI}, we assume that
\begin{equation} \label{eq:constantmassII}
\int_{\tau_n}^{\tau_n^\prime}\,g_{n,\tilde{\theta}_n}(r)dr \geq C
\end{equation}
where $ \tau_n$, $\tau^\prime_n$ and $C>0$ satisfy the same conditions above and, moreover, we require the preliminary estimator to be enough accurate, i.e. $\epsilon<1/3-(2\alpha+5\beta)/3$, we can prove Theorem \ref{prop.null} without any assumption on the statistical model. We will make use of the fact that the conditional law of the measurements in the $\tilde{\theta}_n$-null basis conditional to the parameter $\theta=\tilde{\theta}_n +r$ does not distinguish between $\theta=\tilde{\theta}_n\pm r$ locally (which is the condition equivalent to have zero Fisher information), i.e.
\[\dot{p}_0(0)=\dot{p}_0(1)=0,\]
where the derivative is taken with respect to $r$.
\begin{proposition}
Consider any one-parameter qudit model $\{\ket{\psi_\theta}\}$ and assume that $\tilde{\theta}_n$ is a reasonably good estimator satisfying condition \eqref{eq:constantmassII}, obtained by measuring a sub-ensemble of size $n^{1-\epsilon}$ with $\epsilon<1/3-(2\alpha+5\beta)/3$.
Let $\hat{\theta}_n$ be an estimator of $\theta$ based on measuring the remaining $n-n^{1-\epsilon}$ sub-ensemble in a basis containing $\ket{\psi_{\tilde{\theta}_n}}$. Then   
$$
\lim_{n\to\infty} 
n R_{\pi} (\hat{\theta}_n) = \infty.
$$   
\end{proposition}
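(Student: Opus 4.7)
My plan is to adapt the proof of Theorem~\ref{prop.null}, replacing the exact invariance $p_r(i)=p_{-r}(i)$ used for the qubit model~\eqref{eq:qubit.model} by a Le Cam two-point argument that only exploits the weaker fact $\dot p_0(i)=0$, which holds for any one-parameter pure state model measured in a null basis. The starting point is the Bayesian decomposition of the risk used in the proof of Theorem~\ref{prop.null},
\[
R_\pi(\hat\theta_n)=\int p(d\tilde\theta_n)\int_{0}^{\infty}\!dr\,\bigl[\pi(\tilde\theta_n+r|\tilde\theta_n)\,\mathbb{E}_{\tilde\theta_n+r}+\pi(\tilde\theta_n-r|\tilde\theta_n)\,\mathbb{E}_{\tilde\theta_n-r}\bigr](\hat\theta_n-\theta)^2,
\]
restricted to $\tilde\theta_n\in A_n$ via the lower bound $\pi(\tilde\theta_n\pm r|\tilde\theta_n)\geq g_{n,\tilde\theta_n}(r)$ and to the annulus $r\in[\tau_n,\tau'_n]$ on which hypothesis~\eqref{eq:constantmassII} applies.

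The replacement for the elementary estimate $(u-r)^2+(u+r)^2\geq 2r^2$ of Theorem~\ref{prop.null} is a two-point Le Cam inequality: for $\theta_{\pm}=\tilde\theta_n\pm r$ and the event $A=\{\hat\theta_n\leq\tilde\theta_n\}$, one has $|\hat\theta_n-\theta_+|\geq r$ on $A$ and $|\hat\theta_n-\theta_-|\geq r$ on $A^c$, whence
\[
\mathbb{E}_{\theta_+}[(\hat\theta_n-\theta_+)^2]+\mathbb{E}_{\theta_-}[(\hat\theta_n-\theta_-)^2]\geq r^2\bigl(1-\|P_r^{\otimes n}-P_{-r}^{\otimes n}\|_{\mathrm{TV}}\bigr),
\]
where $P_r$ denotes the single-copy outcome law at $\theta=\tilde\theta_n+r$. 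The whole argument then hinges on making the TV term negligible uniformly in $r\leq\tau'_n$.

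To this end, I would expand $p_r(i)=|\langle v_i|\psi_{\tilde\theta_n+r}\rangle|^2$ using $\ket{v_0}=\ket{\psi_{\tilde\theta_n}}$ and $\langle\psi_{\tilde\theta_n}|\dot\psi_{\tilde\theta_n}\rangle=0$. A direct second-order computation gives $p_r(0)=1-\|\dot\psi_{\tilde\theta_n}\|^2 r^2+O(r^3)$ and $p_r(i)=|\langle v_i|\dot\psi_{\tilde\theta_n}\rangle|^2 r^2+O(r^3)$ for $i>0$, so the vanishing of $\dot p_0(i)$ forces the cubic cancellation $p_r(i)-p_{-r}(i)=O(r^3)$, hence $\chi^2(p_r\|p_{-r})=O(r^4)$, with constants uniform in $\tilde\theta_n$ on the bounded set $A_n$ by smoothness of the model. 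Tensorization of the KL divergence followed by Pinsker's inequality then yields
\[
\|P_r^{\otimes n}-P_{-r}^{\otimes n}\|_{\mathrm{TV}}\leq\sqrt{\tfrac{n}{2}\,\mathrm{KL}(p_r\|p_{-r})}\leq C\sqrt{n}\,r^2\leq C\,n^{-1/2+\epsilon+\beta}
\]
for $r\leq\tau'_n$, which is $o(1)$ uniformly because the assumption $\epsilon<1/3-(2\alpha+5\beta)/3$ implies $\epsilon+\beta<1/2$.

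Combining the three ingredients, on $\{\tilde\theta_n\in A_n\}$ the inner integral in the Bayes decomposition is bounded below by $\tfrac{1}{2}\int_{\tau_n}^{\tau'_n}g_{n,\tilde\theta_n}(r)\,r^2\,dr\geq \tfrac{C}{2}\tau_n^2$ for large $n$, giving $nR_\pi(\hat\theta_n)\gtrsim n\tau_n^2=n^{\epsilon-\alpha}$, which diverges in the regime prescribed by the hypothesis. The main technical obstacle is establishing the cubic cancellation $p_r(i)-p_{-r}(i)=O(r^3)$ together with the attendant $\chi^2$ estimate uniformly in $\tilde\theta_n\in A_n$; the specific form of the condition $\epsilon<1/3-(2\alpha+5\beta)/3$ is calibrated to simultaneously make the Le Cam TV bound vanish and keep $n\tau_n^2\to\infty$.
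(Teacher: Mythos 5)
Your overall architecture matches the paper's: the same Bayesian risk decomposition as in Theorem~\ref{prop.null}, restriction to the annulus $\tau_n\le r\le\tau_n'$ where \eqref{eq:constantmassII} applies, and the observation that $\dot p_0(i)=0$ forces the cubic cancellation $p_r(i)-p_{-r}(i)=O(r^3)$. Where you genuinely diverge is the indistinguishability step. The paper does not use a testing reduction at all: it expands the two-point loss into a symmetric part $(p_r+p_{-r})\bigl((\hat\theta_n-\tilde\theta_n)^2+r^2\bigr)$, which gives the $\tau_n^2$ lower bound directly, plus an antisymmetric cross term $2(p_r-p_{-r})(\hat\theta_n-\tilde\theta_n)r$, which is controlled by the telescoping $L^1$ bound $\|P_r^{\otimes n}-P_{-r}^{\otimes n}\|_{\mathrm{TV}}\le n\sum_x|p_r(x)-p_{-r}(x)|\lesssim n\tau_n'^3$ together with the extra factor $|(\hat\theta_n-\tilde\theta_n)r|\lesssim\tau_n'^2$ available on the event $|\hat\theta_n-\theta|\le\tau_n'$; the product $n\tau_n'^5=o(\tau_n^2)$ is exactly where $\epsilon<1/3-(2\alpha+5\beta)/3$ comes from. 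Your Le Cam two-point inequality is a legitimate substitute for the symmetric/antisymmetric split, and if your divergence estimate held it would in fact prove the result under the weaker requirement $\epsilon+\beta<1/2$.

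The genuine gap is the claim $\chi^2(p_r\|p_{-r})=O(r^4)$ uniformly, and hence the Pinsker/KL step. The cubic cancellation is an $L^\infty$ statement about the numerator; the $\chi^2$ and KL divergences divide by $p_{-r}(i)$, and for a null basis these denominators are themselves $O(r^2)$ at best and can be much smaller. Your own expansion $p_r(i)=|\langle v_i|\dot\psi_{\tilde\theta_n}\rangle|^2r^2+O(r^3)$ silently assumes $\langle v_i|\dot\psi_{\tilde\theta_n}\rangle\neq0$ for every $i>0$; when this inner product vanishes (which the proposition does not exclude, and which can occur non-uniformly as $\tilde\theta_n$ varies), or when the amplitude $\langle v_i|\psi_{\tilde\theta_n-r}\rangle$ has an isolated zero in the annulus, the ratio $(p_r(i)-p_{-r}(i))^2/p_{-r}(i)$ is not controlled by $r^4$ times a uniform constant, and in the extreme case $p_{-r}(i)=0<p_r(i)$ the KL divergence is infinite, so the tensorized Pinsker bound collapses. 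This is not a routine technicality: vanishing leading-order probabilities are precisely the regime of null measurements. The paper's $L^1$ telescoping bound avoids any division by small probabilities, which is why it works for an arbitrary smooth model at the price of the stronger arithmetic condition on $\epsilon$. To rescue your route you would need to replace KL by a divergence that tolerates degenerate denominators (e.g.\ a squared Hellinger bound $H^2(p_r,p_{-r})=O(r^4)$, which does survive the degenerate cases but requires a separate case analysis and a uniformity argument in $\tilde\theta_n$), none of which is supplied in the proposal.
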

\begin{proof}
First notice that
\[
\begin{split}
    \mathbb{E}[(\hat{\theta}_n-\theta)^2] &\geq \mathbb{P}(|\hat{\theta}_n-\theta|>\tau_n^\prime)\tau_n^{\prime2}\\
    &+\mathbb{E}\left [\chi_{|\hat{\theta}_n-\theta| \leq \tau_n^\prime}(\hat{\theta}_n-\theta)^2 \right ]. 
\end{split}
\]
Moreover we can write
\[\begin{split}
&\int_{|\hat{\theta}_n-\theta| \leq \tau_n^\prime} \pi(d\theta|\tilde{\theta}_n)p(d\hat{\theta}_n|\theta,\tilde{\theta}_n)(\hat{\theta}_n-\theta)^2\geq\\
&\int _{\substack{\tau_n \leq |\theta-\tilde{\theta}_n|\leq \tau_n^\prime\\|\hat{\theta}_n-\theta| \leq \tau_n^\prime}}\pi(d\theta|\tilde{\theta}_n)p(d\hat{\theta}_n|\theta,\tilde{\theta}_n)(\hat{\theta}_n-\theta)^2\geq \\
&\int_{\substack{\tau_n \leq r\leq \tau_n^\prime\\|\hat{\theta}_n-\theta| \leq \tau_n^\prime}} g_{n,\tilde{\theta}_n}(r)(p_r(d\hat{\theta}_n)(\hat{\theta}_n-\tilde{\theta}_n-r)^2+\\
&p_{-r}(d\hat{\theta}_n)(\hat{\theta}_n-\tilde{\theta}_n+r)^2)=\\
&\int_{\substack{\tau_n \leq r\leq \tau_n^\prime\\|\hat{\theta}_n-\theta| \leq \tau_n^\prime}} g_{n,\tilde{\theta}_n}(r)(p_r(d\hat{\theta}_n)+p_{-r}(d\hat{\theta}_n))((\hat{\theta}_n-\tilde{\theta}_n)^2+r^2))+\\
&-2\int_{\substack{\tau_n \leq r\leq \tau_n^\prime\\|\hat{\theta}_n-\theta| \leq \tau_n^\prime}} g_{n,\tilde{\theta}_n}(r)(p_r(d\hat{\theta}_n)-p_{-r}(d\hat{\theta}_n))(\hat{\theta}_n-\tilde{\theta}_n)r\\
\end{split}
\]
The first addend can be lower bounded by
\[\int_{\substack{\tau_n \leq r\leq \tau_n^\prime\\|\hat{\theta}_n-\theta| \leq \tau_n^\prime}} g_{n,\tilde{\theta}_n}(r)(p_r(d\hat{\theta}_n)+p_{-r}(d\hat{\theta}_n)) \tau_n^2.\]
The second addend will be negligible in the final analysis. Indeed, we have that
 \[\begin{split}
     &\sum_{\bm{x} \in \{0,1\}^n}\left  |\Pi_{k=1}^n p_r(x_k) -\Pi_{k=1}^n p_{-r}(x_k) \right |\leq\\
     & \sum_{l=1}^{n}\sum_{\bm{x} \in \{0,1\}^n}p_{-r}(x_1)\cdots p_{-r}(x_{l-1})|p_r(x_l) -p_{-r}(x_l)| \cdot \\
     &\cdot q_{r}(x_{l+1})\cdots q_{r}(x_n)=\\
     &n \sum_{x = 0,1}|p_r(x_l) -p_{-r}(x_l)| \lesssim n \tau_n^{\prime 3},
 \end{split}
 \]
where in the last inequality we used that
\[
p_0(x)-p_{0}(x)=\dot{p}_0(x)+\dot{p}_{0}(x)=\Ddot{p}_0(x)-\Ddot{p}_{0}(x)=0
\]
for $x=0,1$ and the symbol $\lesssim$ means that the left hand side is less or equal than a constant times the right hand side. Therefore, the second term can be upper bounded by a constant times $n\tau_n^{\prime 5} =n^{-3/2+5(\epsilon+\beta)/2}$, which is $o(\tau_n^2)$ if $\epsilon<1/3-(2\alpha+5\beta)/3$.
To sum up, one has that
\[
\begin{split}
    &\mathbb{E}[(\hat{\theta}_n-\theta)^2] \geq  \mathbb{P}(|\hat{\theta}_n-\theta|>\tau_n^\prime)\tau_n^{\prime2}+\\
    &\int_{\substack{\tau_n \leq r\leq \tau_n^\prime\\|\hat{\theta}_n-\theta| \leq \tau_n^\prime}} \pi(d\tilde{\theta}_n)g_{n,\tilde{\theta}_n}(r)(p_r(d\hat{\theta}_n)+p_{-r}(d\hat{\theta}_n)) \tau_n^2 +o(\tau_n^2) \geq\\
    &\int_{\tau_n \leq r\leq \tau_n^\prime} \pi(d\tilde{\theta}_n)g_{n,\tilde{\theta}_n}(r)(p_r(d\hat{\theta}_n)+p_{-r}(d\hat{\theta}_n)) \tau_n^2+o(\tau_n^2) \geq \\
    &cC\tau_n^2 +o(\tau_n^2). 
\end{split}
\]
\end{proof}
We remark that, if all the derivatives of $p_r$ up to the $2s-1$-th order for some $s\geq 1$ vanish at $0$, then we get that
\[\sum_{x=0,1}|p_r(x)-p_{-r}(x)| \lesssim \tau_n^{\prime 2s+1}
\]
and we obtain the statement under the assumption that $\epsilon<(2s-1)/(2s+1)-(2\alpha+(2s+3)\beta)/(2s+1)$. Notice that in general we can pick $\alpha$ and $\beta$ arbitrarily small, hence the restriction on $\epsilon$ effectively becomes $\epsilon<(2s-1)/(2s+1)$, which does not preclude any value in the limit $s \rightarrow +\infty$.

\section{Proof of Proposition \ref{prop:displaced.null} on optimality of displaced null measurements}
\label{app:proof.prop:displaced.null}


Since this measurement setting depends on $n$, we need to look in more detail at the asymptotic behaviour of the estimation problem. 

We start by assuming that $\theta\in I_n$ and at the end of the proof we treat the case $\theta\notin I_n$ by employing the concentration bound in equation \eqref{eq:concentration}. 

Since $\theta\in I_n$, we can write $\theta =\tilde{\theta}_n + u/n^{1/2}$ with local parameter $u$ satisfying $|u|\leq n^{\epsilon}$. Then
$$
\theta-\theta^\prime_n= un^{-1/2} -n^{-1/2+3\epsilon} = O(n^{-1/2+3\epsilon})
$$ 
and 
\begin{eqnarray*}
p^{(n)}_\theta &=&
\sin^2(\theta-\theta^\prime_n) =(\theta-\theta^\prime_n)^2 + O(n^{-2+12\epsilon}) \\
&=&
n^{-1}(
u-n^{3\epsilon})^2+ O(n^{-2+12\epsilon}).
\end{eqnarray*}
From this we get that
\begin{equation}
\label{eq.u.exp}
u =
\frac{n^{3\epsilon}}{2} -\frac{n^{1-3\epsilon}}{2} p^{(n)}_\theta 
+O(n^{-\epsilon}) 
\end{equation}
where the $u^2$ term is negligible compared to $un^{3\epsilon}$ and the remainder is $O(n^{-\epsilon})$ for $\epsilon<1/10$.

The probability $p^{(n)}_\theta$ can be estimated by the empirical frequency \eqref{eq:p.hat}
whose distribution is the binomial ${\rm Bin}(p^{(n)}_\theta, n)$.
Taking into account that $\theta= \tilde{\theta}_n+ u/n^{1/2}$ and using \eqref{eq.u.exp} we define the estimator 
\begin{equation}\label{eq:final.estimator}
\hat{\theta}_n = \tilde{\theta}_n+ 
\frac{n^{-1/2+3\epsilon}}{2} -
\frac{n^{1/2-3\epsilon}}{2}\hat{p}_n 
\end{equation}
with $\hat{p}_n$ as in \eqref{eq:p.hat}.
Now from \eqref{eq.u.exp} we get
$$
\sqrt{n}(\hat{\theta}_n -\theta )=
\frac{n^{1-3\epsilon}}{2} (p^{(n)}_\theta-\hat{p}_n) + O(n^{-\epsilon}).
$$
Conditional to a certain value of $\tilde{\theta}_n$, $\hat{p}_n$ has a binomial distribution with parameters $p^{(n)}_\theta$ and the term $O(n^{-\epsilon})$ is deterministic, hence
$$
n\mathbb{E}[(\hat{\theta}_n -\theta)^2|\tilde{\theta}_n]=\frac{n^{2-6\epsilon}}{4}p_\theta^{(n)}(1-p_\theta^{(n)}) = \frac{1}{4}+o(1). 
$$
In order to study the convergence in law of $\sqrt{n}(\hat{\theta}_n-\theta)$, one can consider the conditional characteristic function of $n^{1-3\epsilon} (p^{(n)}_\theta-\hat{p}_n)/2$ instead (conditional to $\tilde{\theta}_n$, they only differ by a deterministic vanishing quantity):
\begin{eqnarray*}
&&\mathbb{E}_\theta \left[
\exp (ia n^{1-3\epsilon} (\hat{p}_n -p^{(n)}_\theta)/2)|\tilde{\theta}_n\right]\\
&&= 
\mathbb{E}_\theta
 \left[
\exp \left (\left .ia n^{-3\epsilon} \sum_{i=1}^n (X_i-p_\theta^{(n)})/2 \right )\right | \tilde{\theta}_n\right]
\\&&=
\mathbb{E}_\theta
 \left[
\exp (ia n^{-3\epsilon}  (X_1-p_\theta^{(n)})/2|\tilde{\theta}_n\right]^n \\
&&=\left (
 1- \frac{a^2}{8}n^{-6\epsilon}p_\theta^{(n)}(1-p_\theta^{(n)}) + o(n^{-1})
 \right)^n 
\\
 &&
 =\left (
 1- \frac{a^2}{8n} + o(n^{-1})
 \right)^n=
 e^{-a^2/8}+o(1).
\end{eqnarray*}

Notice that for every $a \in \mathbb{R}$
\[\begin{split}
\mathbb{E}_\theta[e^{ia \sqrt{n} (\hat{\theta}_n-\theta)}]&=e^{\frac{-a^2}{8}}\mathbb{P}(\theta \in I_n) \\
&+\int_{\theta \in I_n}p(d\tilde{\theta}_n|\theta)(\mathbb{E}_\theta[e^{ia \sqrt{n} (\hat{\theta}_n-\theta)}|\tilde{\theta}_n]-e^{\frac{a^2}{8}})\\
&+\int_{\theta \notin I_n}p(d\tilde{\theta}_n|\theta)\mathbb{E}_\theta[e^{ia \sqrt{n} (\hat{\theta}_n-\theta)}|\tilde{\theta}_n].\\
\end{split}
\]
Since $\mathbb{P}_\theta(\theta \notin I_n)$ goes to zero, the first term goes to $e^{\frac{-a^2}{8}}$ and the third one vanishes. The second term vanishes because $|(\mathbb{E}_\theta[e^{ia \sqrt{n} (\hat{\theta}_n-\theta)}|\tilde{\theta}_n]-e^{\frac{a^2}{8}})|\chi_{\{\theta \in I_n\}}$ can be upper bounded uniformly in $\tilde{\theta}_n$ by a sequence converging to $0$. Therefore we obtain the convergence of $\sqrt{n} (\hat{\theta}_n-\theta)$ to the normal, in distribution. For the convergece of the rescaled MSE we note that since $\theta,\tilde{\theta}_n$ are bounded, equation \eqref{eq:final.estimator} shows that the square error $n(\hat{\theta}_n -\theta)^2$ does not grow more than $n^2$;  using the fact that $\mathbb{P}_\theta(\theta\notin I_n)$ decays exponentially fast, one can remove the conditioning also in the convergence of the MSE.

\qed

\section{Proof of Lemma \ref{lem:linear}} \label{sec:lemproof}

In the present section we want to show that the statistical models $\ket{\Psi^n_{\bm{u}}}$ and $\ket{\tilde{\Psi}^n_{\bm{u}}}$ (which are the ensemble states corresponding to the models in Equations \eqref{eq:genmod} and \eqref{eq:linmod}) become equivalent in Le Cam distance when the the neighborhood of parameters considered shrinks around $\bm{\tilde{\theta}}_n$. Let us first compute the overlaps between states corresponding to the same local parameter $\bm{u}$ in the single copy scenario: expanding the unitary rotations one obtains
\[\begin{split}
&\langle \tilde{\psi}_{\bm{u}/\sqrt{n}}|\psi_{\bm{\tilde{\theta}}_n+\bm{u}/\sqrt{n}} \rangle= \left \langle 0\left  | \left (\mathbf{1}+i\frac{S(\bm{u})}{\sqrt{n}}-\frac{S(\bm{u})^2}{2n} +o\left ( \frac{1}{n}\right )\right )\right . \right .\cdot \\
&\cdot\left .\left (\mathbf{1}+i\frac{S(\bm{u})}{\sqrt{n}}-i\frac{T(\bm{u})}{\sqrt{n}}-\frac{S(\bm{u})^2}{2n} +o\left ( \frac{1}{n}\right )\right )0\right \rangle \\
&=1-i\langle 0 | T(\bm{u}) 0 \rangle/\sqrt{n} +o(1/n)=1+o(1/n),
\end{split}
\]
where
\begin{align*}
&S(\bm{u})=\sum_{j=1}^m u_j S_j, \\
&T(\bm{u})=\sum_{i,j=1}u_iu_j \sum_{k=1}^{d-1} \partial_{ij} f_k^q(0) \sigma_y^k-\partial_{ij} f_k^p(0) \sigma_x^k.
\end{align*}
Notice that the last equality in the computation of the overlap is true because $T(\bm{u})$ has zero expectation in $\ket{0}$. We remark that the error remains of the order of $o(1/n)$ and is uniform in $\bm{u}$ if $\|\bm{u}\| \leq n^\epsilon$ with $\epsilon <1/6$. Now we can conclude easily using the expressions of the trace norm between two pure states in terms of the overlap of two representative vectors, and noticing that
\[\langle \tilde{\Psi}^n_{\bm{u}}|\Psi^n_{\bm{u}} \rangle = \langle \tilde{\psi}_{\bm{u}/\sqrt{n}}|\psi_{\bm{\tilde{\theta}}_n+\bm{u}/\sqrt{n}} \rangle^n=(1+o(1/n))^n \rightarrow 1\]
uniformly in $\|\bm{u}\| \leq n^\epsilon$.

\qed
\section{Proof of Theorem \ref{thm:dnmgeneral}} \label{sec:lawsrb}
We first assume that ${\bm \theta} \in I_n$ where
$$I_n=\{\bm{\theta} \in \mathbb{R}^m:\|\bm{\theta}-\tilde{\bm{\theta}}_n\| \leq n^{-1/2+\epsilon}\}.$$

Recall that $|0\rangle = |\psi_{\tilde{\bm\theta}_n}\rangle$ is the preliminary estimator, and we denote $\ket{\tilde{0}}:=|0\rangle \otimes |0^\prime\rangle$ the first basis vector of an ONB $\tilde{B}:= \{ |\tilde{0}\rangle, \dots, |
\widetilde{d^2-1}\rangle\}$ in $\mathbb{C}^d \otimes \mathbb{C}^d$ which is chosen such that $|\tilde{1}\rangle, \dots , |\tilde{m}\rangle $ are vectors corresponding to the 
canonical variables $\tilde{Q}_1,\dots , \tilde{Q}_m$ which span the elements of the optimal unbiased set of observables ${\bf Z}^*$. Without loss of generality we can assume that 
$\{ |\tilde{1}\rangle,\dots ,|\widetilde{2d-1}\rangle\} $ form an ONB of the subspace 
$ \mathcal{L}: ={\rm Lin} \{|0\rangle \otimes |i^\prime\rangle, |i\rangle\otimes |0^\prime\rangle : i=1,\dots d-1\} $. The local state (of system and ancilla) can be written as 
\[
\ket{\tilde{\psi}_{\tilde{\bm{\theta}}_n+{\bm u}/\sqrt{n}}} =  e^{-i\sum_{k=1}^{2(d-1)} \left ( \tilde{f}_1^k\left (\frac{{\bm u}}{\sqrt{n}}\right ) \tilde{\sigma}^k_{y} -\tilde{f}_2^k\left (\frac{{\bm u}}{\sqrt{n}}\right ) \tilde{\sigma}^k_{x} \right )
}  |\tilde{0}  \rangle.
\]
where $\tilde{f}_{1,2}^k$ are smooth real valued functions and $\tilde{\sigma}^k_{x,y}$ are the Pauli operators for the vectors in the basis $\tilde{\mathcal{B}}$.

From the definition of the basis $\tilde{{\cal B}}$, the subspace $\mathcal{L}$ and of the matrix $T$ defined at the end of section \ref{sec:Holevo.Gaussian.shift} we have 
$$
(T^{-1})_{kj}=\sqrt{2}\partial_j\tilde{f}_1^k(0) \quad \text{for } j=1,\dots,m.
$$
 In particular, we note that
\begin{equation}
\frac{1}{2}\Tr(W(\tilde{\bm \theta}_n)TT^{ T })={\cal H}^{W(\tilde{\bm \theta}_n)}(\tilde{\bm \theta}_n).
\end{equation}
Expanding the unitary rotation, one has
\begin{align} \label{eq:Taylor}
     \ket{\tilde{\psi}_{\tilde{\bm{\theta}}+{\bm u}/\sqrt{n}}} &=\ket{\tilde{0}} + \frac{1}{\sqrt{2}}\sum_{k=1}^{m} \left ( \sum_{j=1}^m  T^{-1}_{kj} \frac{u_j}{\sqrt{n}} \right )\ket{\tilde{k}} +\nonumber\\
     & i\sum_{k=1}^{m} \left ( \sum_{j=1}^m  \partial_{j}f^k_2(0) \frac{u_j}{\sqrt{n}} \right )\ket{\tilde{k}}+\\
& + O(n^{-1+ 2 \epsilon}).\nonumber
\end{align}
The Taylor expansion for the vectors in the rotated basis is 
\begin{align} \label{eq:TEvk}
     \ket{v_j^{\delta_n}} &= \exp\left(-i\delta_n \left (\sum_{k=1}^{m} \tilde{\sigma}^k_y\right )\right) | \tilde{j} \rangle\\
     &=\begin{cases}\ket{\tilde{j}}-\delta_n \ket{\tilde{0}}+ O(n^{-1+6\epsilon}) & \text{ if } j=1,\dots,m \\
     \ket{\tilde{j}} & \text{ otherwise} \nonumber\end{cases}.
\end{align}
Therefore one obtain the following expression for the outcome probability measure
\[\begin{split}
p^{(n)}_{\bm u}(k)&=
\frac{1}{n}\left ( \sum_{j=1}^m  \frac{T^{-1}_{kj}}{\sqrt{2}} u_j -n^{3\epsilon}\right)^2 +\\
&\frac{1}{n}\left ( \sum_{j=1}^{m}  \partial_jf^k_2(0) u_j \right)^2+ O(n^{-3/2+9\epsilon})
\end{split}
\]
if $k=1,\dots,m$ and $p^{(n)}_{\bm u}(k)=O(n^{-1+2\epsilon})$ otherwise. Using the fact that $\|{\bm u}\|\leq n^{\epsilon}$ one can neglect the quadratic terms in ${\bm u}$ and write
\[
u_j=\sum_{k=1}^m T_{jk} \left (\frac{n^{3\epsilon}}{\sqrt{2}} - \frac{n^{1-3\epsilon}}{\sqrt{2}}p^{(n)}_{\bm u}(k) \right )+ O(n^{-\epsilon}).
\]
Moreover, from explicit computations one can see that for every $j \neq k=1,\dots, m$
\begin{align*}
&\frac{n^{2-6\epsilon}}{2}\mathbb{E}_{\bm \theta}[(\hat{p}_n(j) -p^{(n)}_{\bm{u}}(j))^2|\tilde{\bm{\theta}}_n]=\\
&\frac{n^{1-6\epsilon}}{2}p^{(n)}_{\bm{u}}(j)(1-p^{(n)}_{\bm{u}}(j)) =\frac{1}{2}+o(1),
\end{align*}
and
\begin{align*}
&\frac{n^{2-6\epsilon}}{2}\mathbb{E}_{\bm \theta}[(\hat{p}_n(j) -p^{(n)}_{\bm{u}}(j))(\hat{p}_n(k) -p^{(n)}_{\bm{u}}(k))|\tilde{\bm{\theta}}_n]=\\
&-\frac{n^{1-6\epsilon}}{2}p^{(n)}_{\bm{u}}(j)p^{(n)}_{\bm{u}}(k) = 0 +o(1). 
\end{align*}
Therefore
\[
\begin{split}
&n\mathbb{E}_{\bm \theta} [
L({\bm \theta}, 
\hat{\bm \theta}_n )^2|\tilde{\bm \theta}_n ]\\
&=\mathbb{E}_{\bm{\theta}}[\Tr((\hat{\bm{u}}_n-\bm{u})^TW(\tilde{\bm \theta}_n)(\hat{\bm{u}}_n-\bm{u}))|\tilde{\bm{\theta}}_n]+o(1)\\
&=\frac{n^{2-6\epsilon}}{2}\sum_{j,k=1}^m (T^{ T }W(\tilde{\bm \theta}_n) T)_{kj}\cdot \\
&\cdot \mathbb{E}_{\bm{\theta}}[(\hat{p}_n(j)-p_{\bm{u}}^{(n)}(j))(\hat{p}_n(k)-p_{\bm{u}}^{(n)}(k))|\tilde{\bm{\theta}}_n])\\
&+ o(1)\\
&=\frac{1}{2}\Tr(W(\tilde{\bm \theta}_n)TT^{ T })={\cal H}^{W(\tilde{\bm \theta}_n)}(\tilde{\bm \theta}_n)+o(1).
\end{split}\]

In order to derive the asymptotic normality result for $\sqrt{n}(\hat{{\bm \theta}}-{\bm \theta})$, we first consider the characteristic function of $n^{1-3\epsilon}(\hat{p}_{n}-p^{(n)}_{\bm \theta})/\sqrt{2}$: for every $\bm{a} \in \mathbb{R}^{d-1}$, one has
\begin{eqnarray*}
&&\mathbb{E}_{{\bm \theta}} \left[
\exp (i n^{1-3\epsilon} \bm{a} \cdot (\hat{p}_n -p^{(n)}_{{\bm \theta}})/\sqrt{2})|\tilde{\bm \theta}_n\right]\\
&&= 
\mathbb{E}_{{\bm \theta}}
 \left[
\exp \left (\left .i n^{-3\epsilon} \sum_{i=1}^{n} \frac{\bm{a}}{\sqrt{2}} \cdot(X_i-p^{(n)}_{{\bm \theta}}) \right )\right | \tilde{\bm \theta}_n\right]
\\&&=
\mathbb{E}_{{\bm \theta}}
 \left[
\exp \left( i n^{-3\epsilon}  \frac{\bm{a}}{\sqrt{2}} \cdot(X_1-p^{(n)}_{{\bm \theta}})\right )\Bigg|\tilde{\bm \theta}_n\right]^{n} \\
&&=\left (
 1- \frac{\|\bm{a} \|^2}{4}n^{-6\epsilon}p^{(n)}_{{\bm \theta}}(1-p^{(n)}_{{\bm \theta}}) + o(n^{-1})
 \right)^{n} 
\\
 &&
 =\left (
 1- \frac{\|\bm{a} \|^2}{4n} + o(n^{-1})
 \right)^{n}=
 e^{-\|\bm{a} \|^2/4}+o(1).
\end{eqnarray*}
Indeed, using that $\sqrt{n}(\hat{{\bm \theta}}-{\bm \theta})=n^{1-3\epsilon}T(\hat{p}_{n}-p^{(n)}_{\bm \theta})/\sqrt{2}$, one has that for every ${\bm a } \in \mathbb{R}^{m}$
\[
\mathbb{E}_{\bm \theta}[\exp(i\sqrt{n} {\bm a}\cdot (\hat{{\bm \theta}}-{\bm \theta}))|\tilde{\bm \theta}_n]=e^{-\frac{{\bm a}^T \cdot TT^T \cdot {\bm a}}{4}}+o(1).
\]

We can now remove the conditioning with respect to the preliminary estimate and take the limit for $n \rightarrow +\infty$ (we will only show the computations for the risk, but they are the same in the case of the characteristic function):
\[\begin{split}
&n\mathbb{E}_{\bm \theta} [
L({\bm \theta}, 
\hat{\bm \theta}_n )^2]={\cal H}^{W({\bm \theta})}({\bm \theta})\mathbb{P}_{\bm \theta}({\bm \theta} \in I_n)\\
&+ \int_{\theta \in I_n}p(d\tilde{\theta}_n|\theta)({\cal H}^{W(\tilde{\bm \theta}_n)}(\tilde{\bm \theta}_n)-{\cal H}^{W({\bm \theta})}({\bm \theta}))\\
&+\int_{\theta \in I_n}p(d\tilde{\theta}_n|\theta)(n\mathbb{E}_{\bm \theta} [
d({\bm \theta}, 
\hat{\bm \theta}_n )^2|\tilde{\bm \theta}_n ]-{\cal H}^{W(\tilde{\bm \theta}_n)}(\tilde{\bm \theta}_n))\\
&+\int_{\theta \notin I_n}p(d\tilde{\theta}_n|\theta)n\mathbb{E}_{\bm \theta} [
d({\bm \theta}, 
\hat{\bm \theta}_n )^2|\tilde{\bm \theta}_n ].\\
\end{split}
\]
The first term in the sum tends to ${\cal H}^{W({\bm \theta})}$, while all the other ones tend to $0$ because of the continuity of ${\cal H}^{W({\bm \theta})}$, the fact that $n\mathbb{E}_{\bm \theta} [
L({\bm \theta}, 
\hat{\bm \theta}_n )^2|\tilde{\bm \theta}_n ]-{\cal H}^{W(\tilde{\bm \theta}_n)}(\tilde{\bm \theta}_n)$ is uniformly bounded by a vanishing sequence on $I_n$ and that the last term can be upper bounded by a constant times $n\mathbb{P}_{\bm \theta}({\bm \theta} \notin I_n)$. The same reasoning shows unconditional asymptotic normality.

\qed

\section{Comparison between $\hat{\hat{\bm \theta}}_n$ and $\hat{\bm \theta}_n$} \label{app:matsu}

In this section we elucidate the connection between the measurement strategy that we propose and the optimal measurement for pure statistical models pointed out in \cite{Ma02}. Theorems 1 and 2 in \cite{Ma02} show that for every parameter value ${\bm \theta}$, there exists a measurement basis that allows to attain the Holevo bound at ${\bm \theta}$ \textit{in one shot}: given the optimal estimator ${\bf Z}$ of the limit Gaussian model at ${\bm \theta}$, one needs to consider the corresponding vectors $\ket{z_1}, \dots, \ket{z_m}$ via QCLT (see Eq. \eqref{eq:QCLTcorr}) and pick any orthonormal basis $\{\ket{b_k}\}_{k=0}^{m}$ of ${\rm span}_{\mathbb{R}}\{\ket{\psi_{\bm \theta}},\ket{z_1}, \dots, \ket{z_m}\}$ such that $\langle b_k|\psi_{\bm \theta}\rangle \neq 0$ for every $k$. The measurement in any orthonormal basis containing $\ket{b_k}_{k=0}^{m}$ is optimal and the estimator achieving the Holevo bound is given by
$$
\hat{\hat{\theta}}^i=\frac{\langle b_k|z_i\rangle}{\sqrt{2}\langle b_k|\psi_{\bm \theta}\rangle}+\theta^i$$
if $k$ is observed for $k=0,\dots, m$ and $0$ otherwise.

As in the case of the SLD, such an optimal measurement depends on the true parameter; in order to come up with a concrete estimation strategy, one needs a two step procedure. After producing a preliminary estimate $\tilde{\bm \theta}_n$ of the parameter, one would then choose an orthonormal basis $\{b_k\}_{k=0}^{m}$ of ${\rm span}_{\mathbb{R}}\{\ket{\psi_{\tilde{\bm \theta}_n}},\ket{z_1}, \dots, \ket{z_m}\}$ such that $\langle b_k|\psi_{\tilde{\bm \theta}_n}\rangle \neq 0$ for every $k$ and measure in any orthonormal basis containing $\{b_k\}_{k=0}^{m}$. The final estimator would be given by
\begin{equation} \label{eq:matsuest}
    \hat{\hat{\theta}}^i_n=\sum_{k=0}^{m}\frac{\langle b_k|z_i\rangle}{\sqrt{2}\langle b_k|\psi_{\tilde{\bm \theta}_n}\rangle}\hat{p}_n(k)+\tilde{\theta}_n^i,
\end{equation}
where $\hat{p}_n(k)$ is the empirical probability of observing $k$.

However, Theorem \ref{prop.null} shows that if $\{b_k\}_{k=0}^{m}$ is too close to be a null-basis, such a strategy does not even achieve a standard scaling due to identifiability problems. The basis $\{\ket{v_k^{\delta_n}}\}_{k=0}^{m}$ that we propose satisfies the assumptions above for being optimal at $\tilde{{\bm \theta}}_n$ and ensures an asymptotically optimal estimation precision; moreover, in this case $\hat{\hat{\bm \theta}}_n$ and $\hat{\bm \theta}_n$ are equivalent in the following sense.

\begin{proposition}
Let $\hat{\hat{{\bm \theta}}}_n$ and $\hat{\bm \theta}$ the estimators defined in Eq. \eqref{eq:matsuest} and Theorem \ref{thm:dnmgeneral}, respectively. Then the following holds true:
$$\lim_{n \rightarrow +\infty} n\mathbb{E}_{\bm \theta}[(\hat{\hat{\bm \theta}}_n-\hat{\bm \theta}_n)^{2}]=0.
$$
\end{proposition}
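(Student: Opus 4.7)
The plan is to write both estimators as explicit linear functionals of the empirical frequency vector $\hat{p}_n(0),\dots,\hat{p}_n(d^2-1)$, Taylor-expand the coefficients of $\hat{\hat{\bm\theta}}_n$ in powers of $\delta_n=n^{-1/2+3\epsilon}$, and exhibit the difference as a sum of terms each of which is small in $L^2$ because $\hat{p}_n(k)$ is small in mean and variance. As a preliminary step, since $\Theta$ is bounded and both estimators are bounded by a polynomial in $n$, and since the preliminary estimator satisfies the exponential concentration $\mathbb{P}_{\bm\theta}({\bm\theta}\notin I_n)\leq C e^{-n^\epsilon r}$ (cf.\ eq.\ \eqref{eq:concentration} and its multi-parameter analogue from Appendix \ref{sec:adaptive.argument}), the contribution of the event ${\bm\theta}\notin I_n$ to $n\mathbb{E}_{\bm\theta}[(\hat{\hat{\bm\theta}}_n-\hat{\bm\theta}_n)^2]$ vanishes, so we may work conditionally on $\tilde{\bm\theta}_n\in I_n$, i.e.\ on local parameters $\|{\bm u}\|\leq n^\epsilon$.

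Next, using $U(\delta_n)=\exp(-i\delta_n H)$ with $H=\sum_{\ell=1}^m\sigma(i\tilde\ell)$ and the identities $H|\tilde0\rangle=i\sum_\ell|\tilde\ell\rangle$, $H|\tilde k\rangle=-i|\tilde0\rangle$ for $k=1,\dots,m$, together with $H^2|\tilde0\rangle=m|\tilde0\rangle$, one obtains the Taylor expansions
$\langle v_k^{\delta_n}|\tilde0\rangle=-\delta_n+O(\delta_n^3)$ and $\langle v_k^{\delta_n}|\tilde\ell\rangle=\delta_{k\ell}-\delta_n^2/2+O(\delta_n^3)$ for $k,\ell\in\{1,\dots,m\}$, as well as $\langle v_0^{\delta_n}|\tilde0\rangle=1-m\delta_n^2/2+O(\delta_n^3)$ and $\langle v_0^{\delta_n}|\tilde\ell\rangle=\delta_n+O(\delta_n^3)$. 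Since $|z_j\rangle=\sum_\ell T_{j\ell}|\tilde\ell\rangle$, this gives for $k=1,\dots,m$
\[
\frac{\langle v_k^{\delta_n}|z_j\rangle}{\sqrt 2\,\langle v_k^{\delta_n}|\tilde 0\rangle}=-\frac{T_{jk}}{\sqrt 2\,\delta_n}+\frac{\delta_n}{2\sqrt 2}\sum_{\ell=1}^m T_{j\ell}+O(\delta_n),
\]
while for $k=0$ one gets $\frac{\delta_n\sum_\ell T_{j\ell}}{\sqrt 2}+O(\delta_n^3)$. Substituting these expressions into \eqref{eq:matsuest}, using $\sum_{k=0}^{d^2-1}\hat p_n(k)=1$ to eliminate $\hat p_n(0)$, and subtracting the explicit form of $\hat\theta^j_n$ in Theorem \ref{thm:dnmgeneral}, one finds after cancellation of the leading $-T_{jk}/(\sqrt 2\,\delta_n)$ and constant $\delta_n\sum_\ell T_{j\ell}/\sqrt 2$ terms that
\[
\hat{\hat\theta}^j_n-\hat\theta^j_n
=-\frac{\delta_n\sum_\ell T_{j\ell}}{2\sqrt 2}\sum_{k=1}^m\hat p_n(k)-\frac{\delta_n\sum_\ell T_{j\ell}}{\sqrt 2}\sum_{k>m}\hat p_n(k)+R_n^j,
\]
where the residual $R_n^j$ collects the $O(\delta_n)\hat p_n(k)$ and $O(\delta_n^3)$ pieces coming from the higher-order terms in the expansions.

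Finally, one bounds each contribution in $L^2$ using the estimates $p^{(n)}_{\bm u}(k)=O(n^{-1+6\epsilon})$ for $k=1,\dots,m$ and $p^{(n)}_{\bm u}(k)=O(n^{-1+2\epsilon})$ for $k>m$ (both derived in the proof of Theorem \ref{thm:dnmgeneral}), together with the elementary variance bound $\mathrm{Var}(\hat p_n(k))\leq p^{(n)}_{\bm u}(k)/n$. With $n\delta_n^2=n^{6\epsilon}$, the three contributions give $n\mathbb{E}_{\bm\theta}[\,\cdot\,]=O(n^{-2+18\epsilon})$, $O(n^{-2+10\epsilon})$, and $O(n^{-2+18\epsilon})$ respectively, all of which tend to zero since $\epsilon<1/10<1/9$. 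The main technical obstacle is keeping track of the cascade of $O(\delta_n^k)$ corrections in the Matsumoto coefficients, because these are multiplied by $1/\delta_n$ and can in principle produce non-negligible terms; the observation that makes everything work is that once the leading $1/\delta_n$ singular part is matched to the displaced-null coefficient, all remaining corrections are either deterministic of size $O(\delta_n^3)$ or random of size $O(\delta_n\hat p_n(k))$, both of which are controlled by the smallness of $\hat p_n(k)$ and the restriction $\epsilon<1/10$.
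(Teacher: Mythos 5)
Your proposal is correct and follows essentially the same route as the paper's own proof in Appendix I: condition on the localisation event, Taylor-expand the Matsumoto coefficients $\langle b_k|z_j\rangle/(\sqrt2\,\langle b_k|\psi_{\tilde{\bm\theta}_n}\rangle)$ in $\delta_n$, observe that the singular $-T_{jk}/(\sqrt2\,\delta_n)$ parts and the deterministic $\delta_n$-constants cancel against the displaced-null estimator, and bound the residual in $L^2$ using $p^{(n)}_{\bm u}(k)=\delta_{0k}+O(n^{-1+6\epsilon})$ and the variance of the empirical frequencies. Your explicit splitting of $\hat p_n(0)-1=-\sum_{k\geq1}\hat p_n(k)$ into the $k\leq m$ and $k>m$ ranges is just a mild repackaging of the paper's terms $(I)$, $(II)$ and $R$, and the order-of-magnitude bookkeeping matches.
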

\begin{proof}
First we condition on $\tilde{\bm \theta}_n \in I_n$, where
$$I_n=\{\bm{\theta} \in \mathbb{R}^m:\|\bm{\theta}-\tilde{\bm{\theta}}_n\| \leq n^{-1/2+\epsilon)}\}.$$
Using Eq. \eqref{eq:TEvk} and $\ket{z_i}=\sum_{k=1}^{m}T_{ik}\ket{\tilde{k}}$, one has that
\[
\begin{split}
\langle b_k|z_i\rangle&=T_{ik}-\frac{\delta_n^2}{2}\sum_{j=1}^{m}T_{ij}+O(n^{-3/2+9\epsilon}),\\
\langle b_k|\psi_{\tilde{\bm \theta}}\rangle&=-\delta_n +O(n^{-3/2+9\epsilon}) 
\end{split}\]
for $k=1,\dots, m$ and
\[
\begin{split}
\langle b_0|z_i\rangle&=\delta_n \sum_{j=1}^mT_{ij}+O(n^{-3/2+9\epsilon}),\\
\langle b_0|\psi_{\tilde{\bm \theta}}\rangle&=1+O(n^{-1+6\epsilon}). 
\end{split}\]
Therefore we can write
\[\begin{split}
\hat{\hat{\theta}}^i_n-\tilde{\theta}^i_n&=\sum_{k=1}^{m} \frac{\langle b_k|z_i\rangle}{\sqrt{2}\langle b_k|\psi_{\tilde{\bm \theta}_n}\rangle}\hat{p}_n(k)\\
&=\frac{n^{-1/2 +3\epsilon}}{\sqrt{2}} \sum_{k=1}^{m}T_{ik} \hat{p}_n(0)-\frac{n^{1/2 -3\epsilon}}{\sqrt{2}}\sum_{k=1}^mT_{ik}\hat{p}_n(k)\\
&+\sum_{j=1}^m \frac{n^{-1/2+3\epsilon}}{4}T_{ij}\sum_{k=1}^m\hat{p}_n(k)+R\\
&=\hat{\bm\theta}^i_n-\tilde{\bm \theta}^i_n+R\\
&+\frac{n^{-1/2 +3\epsilon}}{\sqrt{2}} \sum_{k=1}^{m}T_{ik} (\hat{p}_n(0)-1) \quad(I)\\
&+\sum_{j=1}^m \frac{n^{-1/2+3\epsilon}}{4}T_{ij}\sum_{k=1}^m\hat{p}_n(k) \quad(II),
\end{split}
\]
where $R$ is a random variable whose standard deviation is $O(n^{-3/2+9\epsilon})$. Moreover, both $(I)$ and $(II)$ are negligible too: indeed, for every $k=0,\dots, m$ 
$$\mathbb{E}_{\bm \theta}[(\hat{p}_n(k)-p^{(n)}_{\bm\theta}(k))^2|\tilde{\bm \theta}_n] =o(1/n)
$$
and
$$
p^{(n)}_{\bm\theta}(k)=\delta_{0k}+O(n^{-1+6 \epsilon}).
$$
The statement follows removing the conditioning can be shown with the same technique as in the proof of Theorem \ref{thm:dnmgeneral}.
\end{proof}

\section{Proof of Proposition \ref{prop:optlqd}} \label{sec:multidimproof}

We denote by $I_n$ the set of states
$$\{\ket{\psi} :d_b(\ket{\psi}\bra{\psi}, \ket{\tilde{\psi}_n}, \bra{\tilde{\psi}_n}) \leq   n^{(1-\epsilon)/2}\}$$
and we assume that $\ket{\psi}$ belongs to $I_n$ (the converse can be dealt with as in the proof of Theorem \ref{thm:dnmgeneral}). Therefore, we can write
\[
\ket{\psi}= 
\exp\left(-i\sum_{k=1}^{d-1} ( u_1^k \sigma^k_{y} -u_2^k \sigma^k_{x} )/\sqrt{n}
\right) |\tilde{\psi}_n\rangle
\]
for some $\bm{u}=(u_1^1, u_2^1, \dots, u_{1}^{d-1}, u_2^{d-1})\in \mathbb{R}^{2(d-1)}$ that satisfies $\|{\bm u}\| =O(n^{\epsilon})$. Notice that for $j=1,\dots, d$ one has
\[\begin{split}
    p_{\bm{u}}^{(n)}(j)&=|\langle 
\psi_{{\bm u}/\sqrt{n}}| v^{\delta_n}_j\rangle|^2\\
&=\left | \left \langle j \left |\exp \left (i \delta_n\sum_{k=1}  \sigma_y^k \right )\cdot \right . \right . \right .\\
&\cdot \left . \left . \left .\left .\exp\left(-i\sum_{k=1}^{d-1} ( u_1^k \sigma^k_{y} -u_2^k \sigma^k_{x} )/\sqrt{n}
\right)  \right |\tilde{\psi}_n\right .\right \rangle\right |^2\\
&= (u_1^j/\sqrt{n}-\delta_n  )^2 +( u_2^j/\sqrt{n})^2 + O(n^{-2+12\epsilon}),\\
\end{split}\]
where the last equality is obtained expanding the matrix exponential. Analogously one obtains
$$ q_{\bm{u}}^{(n)}(j)=(u_1^j/\sqrt{n}  )^2 +( u_2^j/\sqrt{n}-\delta_n)^2 + O(n^{-2+12\epsilon})$$
for $j=1,\dots, d-1$. This implies that
\[u^j_1=\frac{n^{3\epsilon}}{2} - \frac{n^{1-3\epsilon}}{2}p^{(n)}_{\bm{u}}(j) + O(n^{-\epsilon})
\]
and
\[
u^j_2=\frac{n^{3\epsilon}}{2} - \frac{n^{1-3\epsilon}}{2}q^{(n)}_{\bm{u}}(j) + O(n^{-\epsilon}).
\]
Moreover, for every $j \neq k=1,\dots, d-1$ one has
\begin{align*}
&\frac{n^{2-6\epsilon}}{4}\mathbb{E}_{\ket{\psi}}[(\hat{p}_n(j) -p^{(n)}_{\bm{u}}(j))^2|\ket{\tilde{\psi}_n}]=\\
&\frac{n^{1-6\epsilon}}{2}p^{(n)}_{\bm{u}}(j)(1-p^{(n)}_{\bm{u}}(j)) =\frac{1}{2}+o(1), \\
\end{align*}
and
\begin{align*}
&\frac{n^{2-6\epsilon}}{4}\mathbb{E}_{\ket{\psi}}[(\hat{p}_n(j) -p^{(n)}_{\bm{u}}(j))(\hat{p}_n(k) -p^{(n)}_{\bm{u}}(k))|\ket{\tilde{\psi}_n}]=\\
&-\frac{n^{1-6\epsilon}}{2}p^{(n)}_{\bm{u}}(j)p^{(n)}_{\bm{u}}(k) = 0 +o(1). \\
\end{align*}
Another consequence is that for $\bm{a} \in \mathbb{R}^{d-1}$ one has
\begin{eqnarray*}
&&\mathbb{E}_{\ket{\psi}} \left[
\exp (i n^{1-3\epsilon} \bm{a} \cdot (\hat{p}_n -p^{(n)}_{\bm{u}})/2)|\ket{\tilde{\psi}_n}\right]\\
&&= 
\mathbb{E}_{\ket{\psi}}
 \left[
\exp \left (\left .i n^{-3\epsilon} \sum_{i=1}^{n/2} \bm{a} \cdot(X_i-p^{(n)}_{\bm{u}}) \right )\right | \ket{\tilde{\psi}_n}\right]
\\&&=
\mathbb{E}_{\ket{\psi}}
 \left[
\exp (i n^{-3\epsilon}  \bm{a} \cdot(X_1-p^{(n)}_{\bm{u}})|\ket{\tilde{\psi}_n}\right]^{n/2} \\
&&=\left (
 1- \frac{\|\bm{a} \|^2}{2}n^{-6\epsilon}p^{(n)}_{\bm{u}}(1-p^{(n)}_{\bm{u}}) + o(n^{-1})
 \right)^{n/2} 
\\
 &&
 =\left (
 1- \frac{\|\bm{a} \|^2}{2n} + o(n^{-1})
 \right)^{n/2}=
 e^{-\|\bm{a} \|^2/4}+o(1).
\end{eqnarray*}
The same can be proved for the other batch. Notice that
\[\begin{split}
&n d_b(\ket{\psi}\bra{\psi}, \ket{\hat{\psi}_n}, \bra{\hat{\psi}_n})^2=\|\hat{\bm{u}}-\bm{u}\|^2 + o(1)= \\
& \frac{n^{2-6\epsilon}}{4}\sum_{j=1}^{d-1}(\hat{p}_n(j)-p^{(n)}_{\bm{u}}(j))^2+(\hat{q}_n(j)-q^{(n)}_{\bm{u}}(j))^2 + o(1).\\
\end{split}
\]
Therefore, if $\ket{\psi} \in I_n$
\[
\mathbb{E}_{\ket{\psi}}[ d_b(\ket{\psi}\bra{\psi}, \ket{\hat{\psi}_n}, \bra{\hat{\psi}_n})^2|\ket{\tilde{\psi}_n}]= d-1+o(1)
\]
and for every $\bm{a} \in \mathbb{R}^{2(d-1)}$
\[
\mathbb{E}_{\ket{\psi}}[e^{i\bm{a} \cdot (\hat{\bm{u}}-\bm{u})}|\ket{\tilde{\psi}_n}] = e^{-\|\bm{a} \|^2/4}+o(1).
\]
The rest of the proof is similar to the one of Proposition \ref{prop:displaced.null}.
\section{Proof of Proposition \ref{th:QCRB-achievability-null}} \label{app:proofQCRBac}

In order to avoid confusion and conversely to the main text, in this proof we stress the dependence of $C$ and $B$ on the preliminary estimator $\tilde{\bm{\theta}}_n$.

As usual, we assume that $\bm{\theta} \in I_n$, where
\[
I_n=\{\bm{\theta} \in \mathbb{R}^m:\|\bm{\theta}-\tilde{\bm{\theta}}_n\| \leq n^{-1/2+\epsilon)}\};\]
then one can write $\bm{\theta}=\tilde{\bm{\theta}}_n+\bm{u}/\sqrt{n}$ for some $\bm{u}$ such that $\|u\|\leq n^{\epsilon}$ and the probability law of the $X_j$'s is given by
\[
\begin{split}
&p_{\bm{u}}^{(n)}(k)= \left (\sum_{j=1}^m c_{kj}(\tilde{\bm{\theta}}_n) \theta^j-g_k \delta_n  \right )^2  + O(n^{-3+9\epsilon})\\
&=g^2_k n^{-1+6\epsilon} - 2n^{-1+3\epsilon}\left (\sum_{j=1}^m c_{kj}(\tilde{\bm{\theta}}_n) u^j \right ) g_k +O(n^{-1+2\epsilon})\\\
\end{split}
\]
for $k=1,\dots, d-1$. Equivalently, using that $B(\tilde{\bm{\theta}}_n)C(\tilde{\bm{\theta}}_n)=\bm{1}$, we can write
\[
u^j=\sum_{k=1}^{d-1} b_{jk}(\tilde{\bm{\theta}}_n) \left ( \frac{g_kn^{3\epsilon}}{2} - \frac{n^{1-3\epsilon}}{2g_k} p^{(n)}(k)\right ) + O(n^{-\epsilon}).
\]
Therefore
\[
\begin{split}
&n\mathbb{E}_{\bm{\theta}}[(\hat{\bm{\theta}}_n-\bm{\theta})^T(\hat{\bm{\theta}}_n-\bm{\theta})|\tilde{\bm{\theta}}_n]=\mathbb{E}_{\bm{\theta}}[(\hat{\bm{u}}_n-\bm{u})^T(\hat{\bm{u}}_n-\bm{u})|\tilde{\bm{\theta}}_n]\\
&=\frac{n^{2-6\epsilon}}{4}BG\mathbb{E}_{\bm{\theta}}[(\hat{p}_n-p_{\bm{u}}^{(n)})(\hat{p}_n-p_{\bm{u}}^{(n)})^T|\tilde{\bm{\theta}}_n]GB^T+ o(1),
\end{split}\]
where $G$ is the diagonal matrix with entries given by $(1/g_k)_{k=1}^{d-1}$. Explicit computations show that
\[
\frac{n^{2-6\epsilon}}{4}G\mathbb{E}_{\bm{\theta}}[(\hat{p}_n-p_{\bm{u}}^{(n)})(\hat{p}_n-p_{\bm{u}}^{(n)})^T|\tilde{\bm{\theta}}_n] G= {\bf 1}/4+o(1).
\]
Therefore
$$
n\mathbb{E}_{\bm{\theta}}[(\hat{\bm{\theta}}_n-\bm{\theta})(\hat{\bm{\theta}}_n-\bm{\theta})^T|\tilde{\bm{\theta}}_n] = B(\tilde{\bm{\theta}}_n)B(\tilde{\bm{\theta}}_n)^T/4+o(1).$$
Notice that $B(\tilde{\bm{\theta}}_n)B(\tilde{\bm{\theta}}_n)^T/4=F(\tilde{\bm{\theta}}_n)^{-1}$: indeed, using the explicit expression of $B(\tilde{\bm{\theta}}_n)$
\[
\begin{split}
&B(\tilde{\bm{\theta}}_n)B(\tilde{\bm{\theta}}_n)^T =(C(\tilde{\bm{\theta}}_n)^TC(\tilde{\bm{\theta}}_n))^{-1}=4F(\tilde{\bm{\theta}}_n).
\end{split}\]
The rest of the proof is the same as in the one of Theorem \ref{thm:dnmgeneral} and uses the continuity of $F(\tilde{\bm{\theta}}_n)$.

\end{document}